\theoremstyle{plain}
\newtheorem{theorem}{Theorem}[section]
\newtheorem{lemma}[theorem]{Lemma}
\newtheorem{claim}[theorem]{Claim}
\newtheorem{fact}[theorem]{Fact}
\newtheorem{corollary}[theorem]{Corollary}
\theoremstyle{definition}
\newtheorem{remark}[theorem]{Remark}
\newcommand{\lf}{\lfloor}
\newcommand{\rf}{\rfloor}
\newcommand{\floor}[1]{\lf {#1} \rf}
\newcommand{\ceil}[1]{\lceil {#1}\rceil}
\newcommand{\defeq}{\vcentcolon=}
\newcommand{\ind}[1]{^{(#1)}}
\newcommand\eps{\varepsilon}
\renewcommand{\Pr}{\mathop{\bf Pr\/}}
\newcommand{\E}{\mathop{\bf E\/}}
\newcommand{\Var}{\mathop{\bf Var\/}}
\DeclareMathOperator*\poly{poly}
\DeclareMathOperator*\rank{rank}
\DeclareMathOperator*\spn{span}
\newcommand\Vol{\textnormal{Vol}}
\newcommand\calE{\mathcal{E}}
\newcommand\RR{\mathbb{R}}
\newcommand\NN{\mathbb{N}}
\newcommand{\lbr}{\left\{}
\newcommand{\rbr}{\right\}}
\newcommand{\bra}[1]{\lbr#1\rbr}
\newcommand\sar{^*}
\newcommand\half {\frac{1}{2}}
\begin{document}
\title{Improved list-decodability of random linear binary codes\thanks{A conference version of this paper appeared in RANDOM '18.}}
\author{Ray Li\thanks{Department of Computer Science, Stanford University. This work was supported by the National Science Foundation Graduate Research Fellowship Program under Grant No. DGE - 1656518.}\ \ and Mary Wootters\thanks{Departments of Computer Science and Electrical Engineering, Stanford University. This work was partially supported by NSF grants CCF-1657049 and CCF-1844628 and by the NSF/BSF grant CCF-1814629.}}
\date{\today}
\maketitle

\begin{abstract}
There has been a great deal of work establishing that random linear codes are as list-decodable as uniformly random codes, in the sense that a random linear binary code of rate $1 - H(p) - \epsilon$ is $(p,O(1/\epsilon))$-list-decodable with high probability.
In this work, we show that such codes are $(p, H(p)/\epsilon + 2)$-list-decodable with high probability, for any $p \in (0, 1/2)$ and $\epsilon > 0$.
In addition to improving the constant in known list-size bounds, our argument---which is quite simple---works simultaneously for all values of $p$, while previous works obtaining $L = O(1/\epsilon)$ patched together different arguments to cover different parameter regimes.  

Our approach is to strengthen an existential argument of (Guruswami, H{\aa}stad, Sudan and Zuckerman, IEEE Trans. IT, 2002) to hold with high probability.  
To complement our upper bound for random linear codes, we also improve an argument of (Guruswami, Narayanan, IEEE Trans. IT, 2014) to obtain an essentially tight lower bound of $1/\epsilon$ on the list size of uniformly random codes; this implies that random linear codes are in fact \em more \em list-decodable than uniformly random codes, in the sense that the list sizes are strictly smaller.

To demonstrate the applicability of these techniques, we use them to (a) obtain more information about the distribution of list sizes of random linear codes and (b) to prove a similar result for random linear rank-metric codes.  
\end{abstract}
\thispagestyle{empty}
\newpage
\clearpage
\setcounter{page}{1}

\section{Introduction}\label{sec:intro}
An \em error correcting code \em is a subset $\mathcal{C} \subseteq \mathbb{F}_2^n$, which is ideally ``spread out."  In this paper, we focus on one notion of ``spread out" known as \em list decodability. \em  We say that a code $\mathcal{C}$ is $(p,L)$-list-decodable if any Hamming ball of radius $pn$ in $\mathbb{F}_2^n$ contains at most $L$ points of $\mathcal{C}$: that is, if for all $x \in \mathbb{F}_2^n$, $|\mathcal{B}(x,pn) \cap \mathcal{C}| \leq L$, where $\mathcal{B}(x,pn)$ is the Hamming ball of radius $pn$ centered at $x$.  Since list decoding was introduced in the 1950's~\cite{Elias57, Wozencraft58}, it has found applications beyond communication, for example in pseudorandomness~\cite{Vadhan12} and complexity theory~\cite{Sudan00}.  

A classical result in list decoding is known as the \em list-decoding capacity theorem: \em 
\begin{theorem}[List decoding capacity theorem]\label{thm:listdeccap}
  Let $p\in(0,1/2)$ and $\eps>0$. 
  \begin{enumerate}
    \item There exist binary codes of rate $1-H(p)-\eps$ that are $(p, \ceil{1/\eps})$-list decodable.
    \item Any binary code of rate $1-H(p)+\eps$ that is $(p,L)$-list decodable up to distance $p$ must have $L\ge 2^{\Omega(\eps n)}$.
  \end{enumerate}
\end{theorem}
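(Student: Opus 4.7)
The plan is to prove the two parts of Theorem \ref{thm:listdeccap} separately, with the first being a probabilistic existence argument and the second a volume/pigeonhole argument.

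For part (1), I would use a random coding argument. Sample $M = 2^{Rn}$ codewords $c_1, \ldots, c_M$ independently and uniformly from $\FF_2^n$, with $R = 1 - H(p) - \epsilon$, and set $L = \lceil 1/\epsilon \rceil$. For any fixed center $x \in \FF_2^n$ and any fixed $(L+1)$-subset $T \subseteq [M]$, the probability that $\{c_i : i \in T\} \subseteq \calB(x, pn)$ is at most $(|\calB(x,pn)|/2^n)^{L+1} \leq 2^{-(1-H(p))n(L+1)}$, using the standard volume estimate $|\calB(x,pn)| \leq 2^{H(p)n}$. Taking a union bound over $x \in \FF_2^n$ and over $(L+1)$-subsets of $[M]$, the probability that $\calC$ fails to be $(p,L)$-list-decodable is at most
\[
  2^n \cdot \binom{M}{L+1} \cdot 2^{-(1-H(p))n(L+1)} \leq 2^{\,n + Rn(L+1) - (1-H(p))n(L+1)} = 2^{\,n - \epsilon n (L+1)}.
\]
Since $L + 1 > 1/\epsilon$, this tends to $0$, so with positive (in fact, high) probability the code is $(p,L)$-list-decodable. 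One should also remove duplicate codewords, but since $M \ll 2^n$ this introduces only negligible loss and does not affect the rate.

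For part (2), I would use a simple averaging argument against the total codeword count. If $|\calC| \geq 2^{(1-H(p)+\epsilon)n}$, then summing over all centers,
\[
  \sum_{x \in \FF_2^n} |\calB(x,pn) \cap \calC| \;=\; |\calC| \cdot |\calB(0,pn)| \;\geq\; 2^{(1-H(p)+\epsilon)n} \cdot 2^{H(p)n - o(n)},
\]
using the matching lower bound $|\calB(0,pn)| \geq 2^{H(p)n - o(n)}$. By pigeonhole there exists some $x$ with
\[
  |\calB(x,pn) \cap \calC| \;\geq\; 2^{-n} \cdot 2^{(1-H(p)+\epsilon)n} \cdot 2^{H(p)n - o(n)} \;=\; 2^{\epsilon n - o(n)},
\]
which forces $L \geq 2^{\Omega(\epsilon n)}$.

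There is no serious obstacle in either part; both are classical. The only points requiring care are the $o(n)$ terms in the volume estimates and, in part (1), ensuring the union bound budget is set up so that the constant factor on $L$ matches the claimed $\lceil 1/\epsilon \rceil$ precisely (as opposed to a slightly larger constant). This pins down the exact list-size threshold that the paper then aims to improve and match for \emph{random linear} codes.
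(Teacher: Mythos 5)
Your proof is correct, and both halves follow the standard textbook arguments. Note that the paper itself does not prove Theorem~\ref{thm:listdeccap}; it cites it as a classical result and only remarks (in \S\ref{sec:intro}) that the existential half is obtained by showing a uniformly random subset of $\FF_2^n$ is $(p,1/\eps)$-list-decodable with high probability---which is precisely your random-coding-plus-union-bound argument for part (1), and your volume/averaging argument for part (2) is the standard converse. So your proposal matches the (implied) approach; the only nitpicks are ones you already flagged, namely handling duplicate codewords in part (1) and the polylogarithmic (not merely $o(n)$) slack in the lower bound for $\Vol(n,pn)$ in part (2), neither of which affects the conclusion.
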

Above, $H(p) = -(1 - p)\log_2(1-p) - p \log_2(p)$ is the binary entropy function.
We say that a family of binary codes with rate approaching $1 - H(p)$ which are $(p,L)$-list-decodable for $L = O(1)$ \em achieves list-decoding capacity.\em%
\footnote{Sometimes the phrase ``achieves list-decoding capacity" is also used when $L = \poly(n)$; since this paper focuses on the exact constant in the $O(1)$ term however, we use it to mean that $L = O(1)$.}

Theorem~\ref{thm:listdeccap} is remarkable because it 
means than even when $pn$ is much larger than half the minimum distance of the code---the radius at which at most one  codeword $c \in \mathcal{C}$ lies in any Hamming ball---it still can be the case that only a constant number of $c \in \mathcal{C}$ lie in any Hamming ball of radius $pn$. 
Because of this, there has been a great deal of work attempting to understand what codes achieve the bound in Theorem~\ref{thm:listdeccap}.  

The existential part of Theorem~\ref{thm:listdeccap} is proved by showing that a uniformly random subset of $\mathbb{F}_2^n$ is $(p,1/\eps)$-list decodable with high probability.  
For a long time, uniformly random codes were the only example of binary codes known to come close to this bound, and today we still do not have many other options.  There are explicit constructions of capacity-achieving list-decodable codes over large alphabets (either growing with $n$ or else large-but-constant)~\cite{DvirL12,GuruswamiX12, GuruswamiX13}, but over binary alphabets we still do not have any explicit constructions; we refer the reader to the survey~\cite{Guruswami09} for an overview of progress in this area. 

Because it is a major open problem to construct explicit binary codes of rate $1 - H(p) - \eps$ with constant (or even $\poly(n)$) list-sizes, one natural line of work has been to study structured random approaches, in particular \em random linear codes. \em  
A random linear code $\mathcal{C} \subset \mathbb{F}_2^n$ is simply a random subspace of $\mathbb{F}_2^n$, and the list-decodability of these codes has been well-studied~\cite{ZyablovP81, GuruswamiHSZ02, GuruswamiHK11, CheraghchiGV13, Wootters13, RudraW14, RudraW18}. 
There are several reasons to study the list-decodability of random linear codes.  Not only is it a natural question in its own right as well as a natural stepping stone in the quest to obtain explicit binary list-decodable codes, but also the list-decodabilility of random linear codes is useful in other coding-theoretic applications.  One example of this is in concatenated codes and related constructions~\cite{GI04, GuruswamiR08b, HW15, HRW17}, where a random linear code is used as a short inner code.
Here, the linearity is useful because (a) a linear code can be efficiently described; (b) it is sometimes desirable to obtain a linear code at the end of the day, hence all components of the construction must be linear; and (c) as in \cite{HW15} sometimes the linearity is required for the construction to work.

To this end, the line of work mentioned above has aimed to establish that random linear codes are ``as list-decodable'' as uniformly random codes.  That is, uniformly random codes are viewed (as is often the case in coding theory) as the optimal construction, and we try to approximate this optimality with random linear codes, despite the additional structure.  

\paragraph{Our contributions.}
In this paper, we give an improved analysis of the list-decodability of random linear binary codes.  More precisely, our contributions are as follows.
\begin{itemize}
	\item \textbf{A unified analysis.}  As we discuss more below, previous work on the list-decodability of random linear binary codes either work only in certain (non-overlapping) parameter regimes \cite{GuruswamiHK11, Wootters13}, or else get substantially sub-optimal bounds on the list-size~\cite{RudraW18}.  Our argument obtains improved list size bounds over all these results and works in all parameter regimes.  

Our approach is surprisingly simple: we adapt an existential argument of Guruswami, H{\aa}stad, Sudan and Zuckerman~\cite{GuruswamiHSZ02} to hold with high probability. Extending the argument in this way was asked as an open question in \cite{GuruswamiHSZ02} and had been open until now. 

	\item \textbf{Improved list-size for random linear codes.} 
Not only does our result imply that random linear codes of rate $1 - H(p) - \eps$ are $(p,L)$-list-decodable with list-size $L = O(1/\eps)$, in fact we show that $L \leq H(p)/\eps + 2$.  In particular, the leading constant is small and---to the best of our knowledge---is the best known, even existentially, for any list-decodable code.

  \item \textbf{Tight list-size lower bound for uniformly random codes.} To complement our upper bound, we strengthen an argument of Guruswami and Narayanan~\cite{GuruswamiN14} to show that a uniformly random binary code of rate $1 - H(p) - \eps$ requires $L \geq (1 - \gamma)/\eps$ for any constant $\gamma > 0$ and sufficiently small $\eps$. In other words, the list size of $1/\varepsilon$ in Theorem~\ref{thm:listdeccap} is tight even in the leading constant. Thus, random linear codes are, with high probability, list-decodable with smaller list sizes than completely random codes.%
\footnote{In retrospect, this may not be surprising: for example, it is well-known that random linear codes have better distance than completely random codes.  However, the fact that we are able to prove this is surprising to the authors, since it requires taking advantage of the dependence between the codewords, rather than trying to get around it.}

	\item \textbf{Finer-grained information about the combinatorial structure of random linear codes.}  We extend our argument to obtain more information about the distribution of list sizes of random linear codes.
More precisely, we obtain high-probability bounds on the number of points $x$ so that the \em list size at $x$, \em $L_{\mathcal{C}}(x) \defeq |\mathcal{B}(x,pn) \cap \mathcal{C}|$, is at least $\ell$.  

	\item \textbf{Results for rank-metric codes.}  Finally, we adapt our argument for random linear codes to apply to random linear rank-metric codes.
As with standard (Hamming-metric) codes, recent work aimed to show that
random linear rank-metric codes are nearly as list-decodable as uniformly random codes~\cite{Ding15, GuruswamiR17}.  
Our approach establishes that in fact, random linear binary rank-metric codes are more list-decodable than their uniformly random counterparts in certain parameter regimes, in the sense that the list sizes near capacity are strictly smaller.
Along the way, we show that low-rate random linear binary rank-metric codes are list-decodable to capacity, answering a question of \cite{GuruswamiR17}.

\end{itemize}

On the downside, we note that our arguments only work for binary codes and do not extend to larger alphabets; additionally, our positive results do not establish \em average-radius \em list-decodability, a stronger notion which was established in some of the previous works~\cite{CheraghchiGV13,Wootters13,RudraW18}. 
Subsequent work \cite{GLMRSW20} extends our main result on list-decoding random linear binary codes to average-radius list-decodability.

\subsection{Outline of paper}
After a brief overview of the notation in \S\ref{sec:notation}, we proceed
in \S\ref{sec:prevwork} with a survey of related work for both random linear codes and rank-metric codes, and we formally state our results in this context.
In \S\ref{sec:ld}, we prove Theorem~\ref{thm:ld}, which establishes our upper bound for random linear binary codes. 
In \S\ref{sec:listsize}, we expand upon the ideas in the proof of Theorem~\ref{thm:ld} to prove Theorem~\ref{thm:listsize}, characterizing the list size distribution of random linear codes.
In \S\ref{sec:rank-metric}, we prove Theorem~\ref{thm:rank-metric}, which adapts our upper bound to random linear binary rank-metric codes.
To round out the story, we must prove lower bounds on the list sizes of uniformly random codes for both standard and rank-metric codes.  We state these lower bounds in 
Theorems~\ref{thm:lb} and \ref{thm:rank-metric-lb}.  These proofs closely follow the approach in \cite{GuruswamiN14} and are proved in Appendices~\ref{app:lb} and \ref{app:rank-metric-lb}, respectively.

\subsection{Notation}\label{sec:notation}
We use standard Landau notation $O(\cdot), \Omega(\cdot), \Theta(\cdot)$.
We use the notation $\Omega_{x}(\cdot)$ to mean that a multiplicative factor depending on the variable(s) $x$ is suppressed.
Throughout most of the paper, we are interested in binary codes $\mathcal{C} \subseteq \mathbb{F}_2^n$ of \em block length \em $n$.  
The \em dimension \em of a code $\mathcal{C}$ is defined as $k = \log_2|\mathcal{C}|$, and the \em rate \em is the ratio $k/n$.
We say that a binary code is \em linear \em if it forms a linear subspace of $\mathbb{F}_2^n$. 
We define a \em random linear binary code of rate $R$ \em to be the span of $k = Rn$ independently random vectors $b_1,\ldots, b_k \in \mathbb{F}_2^n$.\footnote{We note that this definition is slightly different than the standard definition, which is a uniformly random subspace of dimension $k$.
However, our definition is easier to work with.
Furthermore, since uniformly random $b_1,\ldots,b_k$ have rank strictly less than $k$ with probability at most $2^{-(n-k)}$, and since each dimension $k$ code is represented in the same number of ways, our results hold also for the more standard definition.}

For a boolean statement $\varphi$, let $\mathbb{I}[\varphi]$ be 1 if $\varphi$ is true and 0 if $\varphi$ is not true.
For two points $x,y \in \mathbb{F}_2^n$, we use $\Delta(x,y) = \sum_{i=1}^n \mathbb{I}[x_i = y_i]$ to denote the Hamming distance between $x$ and $y$,
where, for an event $\calE$, $\mathbb{I}[\calE]$ is $1$ if $\calE$ occurs and $0$ otherwise. 
For $x\in \mathbb{F}_2^n, r\in[0,n]$, we define the Hamming ball $\mathcal{B}(x,r)$ of radius $r$ centered at $x$ to be $\mathcal{B}(x,r) = \{ y \in \mathbb{F}_2^n \,:\, \Delta(x,y) \leq r \}$, and the volume of $\mathcal{B}(x,r)$ to be 
$ \Vol(n, r) \defeq  |\mathcal{B}(0^n, r)| = \sum_{i=0}^r \binom{n}{i}. $
We use the well known bound that, for any $p\in[0,1/2]$, $\Vol(n,pn) \le 2^{H(p)n}$, where $H(p) = -(1 - p)\log_2(1-p) - p \log_2(p)$ is the binary entropy function.
One of our main technical results is about the distribution of \em list sizes \em of points $x \in \mathbb{F}_2^n$: given a code $\mathcal{C}$ and $p \in (0,1/2)$, 
we define the list size of a point $x\in\mathbb{F}_2^n$ to be $L_\mathcal{C}(x)\defeq|\mathcal{B}(x,pn)\cap\mathcal{C}|$. 

For $\alpha > 0, \beta\in\RR$, let $\exp_\alpha(\beta) \defeq \alpha^\beta$ and assume $\alpha =e$ when it is omitted.
For two sets $A,B\subseteq \mathbb{F}_2^n$, define the sumset $A+B = \{a+b:a\in A, b\in B\}$. When $b\in\mathbb{F}_2^n$, let $A+b$ denote $A+\{b\}$.

\section{Previous Work and Our Results}\label{sec:prevwork}
In \S\ref{ssec:rlc} below, we survey related work on the list-decodability of random linear binary codes, and state our results.  In \S\ref{ssec:rankmetric}, we do the same for our results on the list-decodability of random linear binary rank-metric codes.

\subsection{Random Linear Codes}\label{ssec:rlc}
The list-decodability of random linear binary codes has been well studied.
Here we survey the results that are most relevant for this work.  
As this work focuses on binary codes, we focus this survey on results for binary codes, even though many of the works mentioned also apply to general $q$-ary codes.
We additionally remark that, in contrast to the large alphabet setting \cite{GuruswamiR08}, capacity achieving binary codes have no known explicit constructions.

A  modification of the proof of the list decoding capacity theorem shows that 
a random linear code of rate $1-H(p)-\eps$ is $(p, \exp(O(\frac{1}{\eps})))$-list decodable~\cite{ZyablovP81}.
However, whether or not random linear codes of this rate with list-sizes that do not depend exponentially on $\eps$ remained open for decades: this question was explicitly asked in~\cite{Elias91}.  

A first step was given in the work of Guruswami, H{\aa}stad, Sudan and Zuckerman~\cite{GuruswamiHSZ02}, who proved via a beautiful potential-function-based-argument that there \em exist \em binary linear codes or rate $1 - H(p) - \eps$ which are $(p, 1/\eps)$-list-decodable.  However, this result did not hold with high probability.  Our approach relies heavily on the approach of \cite{GuruswamiHSZ02}, and we return to their argument in \S\ref{sec:ld}.

Over the next 15 years, a line of work~\cite{GuruswamiHK11, CheraghchiGV13, Wootters13, RudraW14, RudraW15, RudraW18} has focused on the list-decodability (and related properties) of random linear codes, which should hold with high probability.  The works most relevant to ours are \cite{GuruswamiHK11,Wootters13}, which together more or less settle the question. We state these results here for binary alphabets, although both works address larger alphabets as well. 

The first result, of \cite{GuruswamiHK11}, establishes a result for a constant $p$, bounded away from $1/2$.
\begin{theorem}[Theorem 2 of \cite{GuruswamiHK11}]
  \label{thm:ghk11}
  Let $p\in(0,1/2)$. Then there exist constants $C_p,\delta>0$ such that for all $\eps>0$ and sufficiently large $n$, for all $R\leq1-H(p)-\eps$, if $\mathcal{C}\subseteq\mathbb{F}_2^n$ is a random linear code of rate $R$, then $\mathcal{C}$ is $(p,C_p/\eps)$-list decodable with probability at least $1 - 2^{-\delta n}$.
\end{theorem}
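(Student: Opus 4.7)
The plan is to combine a union bound over ball centers $x \in \FF_2^n$ with a moment-style bound on the probability that any single ball $B(x, pn)$ contains more than $L$ codewords. Representing the code as $\cC = \{Gm : m \in \FF_2^k\}$ for a random generator matrix $G$ with $k = (1 - H(p) - \eps) n$, I would fix $x$ and union-bound over all $(L+1)$-tuples of distinct nonzero messages $m_1, \ldots, m_{L+1}$, decomposed by the $\FF_2$-dimension $d$ of their span. This decomposition is the natural way to exploit the linear structure of $\cC$.

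The key probabilistic input is that for messages spanning dimension $d$, the corresponding codewords $Gm_1, \ldots, Gm_{L+1}$ are supported in a random $d$-dimensional subspace of $\FF_2^n$: any linearly independent subset of $d$ of them is i.i.d.\ uniform, and the remaining $L+1-d$ are then determined. Hence the probability that all $L+1$ codewords lie in $B(x, pn)$ is at most $(V/2^n)^d$, where $V = \Vol(n, pn) \le 2^{H(p)n}$. Since $L+1$ distinct nonzero messages span dimension at least $\lceil \log_2(L+2)\rceil$, and the number of $(L+1)$-tuples of distinct messages spanning dimension exactly $d$ is at most $2^{kd + d(L+1-d)}$ (choose a basis of $d$ linearly independent messages, then fill in the rest from the span), putting everything together and unioning over $x$ gives a failure probability bounded by
\[
\sum_{d=\lceil \log_2(L+2)\rceil}^{L+1} 2^{\,n - d\eps n + d(L+1-d)}.
\]
For $d$ near $L+1$, each term is at most $2^{n - (L+1)\eps n}$, which is below the target $2^{-\delta n}$ once $L = \Omega(1/\eps)$.

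The main obstacle is the small-$d$ regime. When $d$ is close to $\log_2 L$, the combinatorial factor $2^{d(L+1-d)}$ explodes and the naive balancing forces $L$ to grow exponentially in $1/\eps$, essentially recovering only the Zyablov--Pinsker bound $L = 2^{O(1/\eps)}$. To obtain the target $L = O_p(1/\eps)$, I would augment the small-$d$ analysis with the following structural observation: $L+1$ codewords lying in $B(x, pn)$ whose messages span only dimension $d$ constitute a $d$-dimensional sub-code with $L+1$ elements of pairwise Hamming distance at most $2pn$. Such tightly-packed low-dimensional sub-codes are rare for random linear codes of rate below capacity, and a Plotkin-type bound---whose dependence on $p$ is precisely what yields the constant $C_p$, and which degrades as $p \to 1/2$---sharpens the probability estimate for small $d$ enough to make the total sum converge. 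Controlling the existence of low-dimensional sub-codes with small diameter is the heart of the argument and is where the $p$-dependence of $C_p$ enters; everything else is a careful but standard union bound.
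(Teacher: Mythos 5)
First, a framing point: the paper does not prove this theorem. Theorem~\ref{thm:ghk11} is cited as Theorem~2 of \cite{GuruswamiHK11} and serves as background motivation; the paper's own contribution, Theorem~\ref{thm:ld}, strengthens it via an entirely different route (the potential-function argument of \cite{GuruswamiHSZ02}, adapted in \S\ref{sec:ld} to hold with high probability). There is therefore no proof of Theorem~\ref{thm:ghk11} in this paper to compare against, and your proposal resembles the strategy of the original GHK paper rather than anything done here.

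On its own terms, your setup is sound as far as it goes: the union bound over centers and over $(L+1)$-tuples of distinct messages, stratified by the span dimension $d$, and the per-configuration probability bound $(V/2^n)^d$ are all correct, and you correctly diagnose the obstacle---the $d$-th term decays in $n$ only when $d > 1/\eps$, so with $d$ ranging down to $\lceil \log_2(L+2)\rceil$ the naive argument forces $L = 2^{\Omega(1/\eps)}$. The gap is in the proposed repair. A Plotkin-type bound controls codes of large \emph{minimum} distance; what you have is $L+1$ points of pairwise distance at most $2pn$, i.e.\ a set of small diameter, and there is no combinatorial obstruction of the kind you invoke: a $d$-dimensional subspace of $\FF_2^n$ supported entirely on the first $\lfloor pn\rfloor$ coordinates has all $2^d$ of its elements inside a single Hamming ball of radius $pn$. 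So ``tightly-packed low-dimensional subcodes'' are not rare as a deterministic matter, and the rarity you need is necessarily a probabilistic statement about the random choice of subspace; quantifying it beyond the naive $(V/2^n)^d$ is exactly the hard step. In the original proof of \cite{GuruswamiHK11} this is done by a delicate structural (Ramsey-type) lemma about families of bounded-weight vectors in $\FF_2^n$, which extracts from any $\ell$ vectors lying in a common Hamming ball a linearly independent subfamily of size $\Omega_p(\log \ell)$ with additional ``spread'' properties that sharpen the probability estimate; the constant $C_p$ and its blow-up as $p \to 1/2$ arise precisely from that lemma. Your Plotkin-type substitute does not supply it, so the small-$d$ case remains unhandled.
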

However, $C_p$ is not small and tends to $\infty$ as $p$ approaches $1/2$.  
The following result of~\cite{Wootters13} fills in the gap when $p$ is quite close to $1/2$.
\begin{theorem}[Theorem 2 of \cite{Wootters13}]
  \label{thm:wootters13}
There exist constants $C_1, C_2$ so that for all sufficiently small $\eps > 0$ and sufficiently large $n$, for $p = 1/2 - C_1\sqrt{\eps}$ and for all $R \leq 1 - H(p) - \eps$, if $\mathcal{C} \subseteq \mathbb{F}_2^n$ is a random linear code of rate $R$, then 
$\mathcal{C}$ is $(p,C_2/\eps)$-list decodable with probability at least $1 - o(1)$.
\end{theorem}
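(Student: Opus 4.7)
The plan is a moment-method argument tailored to the near-capacity regime $p = 1/2 - C_1\sqrt{\eps}$. Taylor-expanding $H$ at $1/2$ gives $1 - H(p) = \Theta(\eps)$, so the rate $R = 1 - H(p) - \eps$ and code dimension $k = Rn$ are both $\Theta(\eps n)$, which is quite small. For cleaner bookkeeping, I would aim for the stronger property of \emph{average-radius} list-decodability: show that with probability $1 - o(1)$, no center $x \in \FF_2^n$ and no distinct $c_1, \ldots, c_L \in \calC$ satisfy $\sum_{i=1}^L \Delta(x, c_i) \leq pnL$. This implies the desired $(p, L-1)$-list-decodability and, being linear in the $c_i$'s, meshes naturally with linear codes.

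The first-moment calculation sums, over all centers $x$ and $L$-tuples of distinct messages $(m_1, \ldots, m_L) \in (\FF_2^k)^L$, the probability that $\sum_i \Delta(x, Gm_i) \leq pnL$ for uniformly random $G$. This probability depends only on the rank $d$ of $\spn\{m_1, \ldots, m_L\}$: the tuple $(Gm_1, \ldots, Gm_L)$ is uniform on an $nd$-dimensional affine subspace of $(\FF_2^n)^L$, and the probability is the relative size of this subspace's intersection with the sum-of-weights polytope $P_x \defeq \{(y_1, \ldots, y_L) : \sum_i \Delta(x, y_i) \leq pnL\}$. In the generic full-rank case $d = L$, the expected contribution is roughly $2^n \cdot 2^{kL}/L! \cdot |P_x|/2^{nL}$, and using $|P_x| \leq 2^{H(p)nL + o(n)}$ together with the rate condition this collapses to $2^{n - \eps nL}/L!$, which is $o(1)$ as soon as $L \geq C_2/\eps$ for an absolute constant $C_2$.

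The main obstacle is twofold. First, bounding $|P_x|$ with a leading exponent matching $H(p)\cdot nL$ requires that the $L$-wise averaging not substantially enlarge the Hamming ball's exponential volume; this is precisely where the choice $p = 1/2 - C_1\sqrt{\eps}$ enters, since near $1/2$ the binomial weight distribution is sharply concentrated and the average-radius relaxation does not spread mass too far into high-weight regions, keeping $C_2$ independent of $C_1$. Second, the low-rank contributions $d < L$ must be handled: the number of $L$-tuples of distinct messages spanning a $d$-dimensional subspace of $\FF_2^k$ is at most $2^{kd} \cdot L^L$, and the probability bound improves with a suitable $d$-th power decay once one accounts for the forced linear relations among the $Gm_i$. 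Careful bookkeeping shows these terms form a geometric-type series dominated by the full-rank contribution when $L = C_2/\eps$. A final union bound across ranks together with Markov's inequality then yields the $1 - o(1)$ success probability.
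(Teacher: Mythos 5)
The statement you were asked to prove is not actually proved in this paper: Theorem~\ref{thm:wootters13} is a citation to Theorem~2 of Wootters (2013), included as background so the authors can compare their own result (Theorem~\ref{thm:ld}) to prior work. There is therefore no ``paper's own proof'' to compare against. The paper's actual contribution, Theorem~\ref{thm:ld}, supersedes this cited result and is proved in \S\ref{sec:ld} by an entirely different method---a with-high-probability strengthening of the potential-function argument of Guruswami--H{\aa}stad--Sudan--Zuckerman---and notably \emph{does not} go through average-radius list-decodability; indeed the authors explicitly flag (end of \S\ref{sec:intro}) that their technique does not yield average-radius guarantees, unlike some of the earlier works.

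Taken on its own terms, your first-moment / union-bound sketch has two issues worth naming. First, you locate the role of the hypothesis $p = 1/2 - C_1\sqrt{\eps}$ in the bound $|P_x| \le 2^{H(p)nL + o(n)}$, but this is not where it is needed: translating so that $x = 0$, the set $P_0 = \{(y_1,\ldots,y_L) : \sum_i |y_i| \le pnL\}$ is exactly a Hamming ball of radius $pnL$ in $\FF_2^{nL}$, so $|P_0| = \Vol(nL, pnL) \le 2^{H(p)nL}$ holds for every $p < 1/2$ with no slack and no near-$1/2$ assumption. Second---and this is the real gap---the low-rank contributions are the genuinely hard part of any moment-method proof of this statement, and you dispose of them with ``careful bookkeeping shows these terms form a geometric-type series.'' When $\dim \spn\{m_1,\ldots,m_L\} = d < L$, the conditional probability is the relative density of $P_x$ inside an $nd$-dimensional coset of $(\FF_2^n)^L$ determined by the forced $\FF_2$-linear relations among the $Gm_i$, and this density has no clean bound of the form $\Vol(nd,pnd)/2^{nd}$; already for $L = 3$, $m_3 = m_1 + m_2$ one sees that the relevant quantity involves the weight of $z_1 + z_2$ alongside $|z_1|, |z_2|$, and the exponent must be compared against the full-rank exponent by a nontrivial calculation. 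Controlling these terms for all rank profiles simultaneously is precisely where the restriction $p = 1/2 - C_1\sqrt{\eps}$ (and the resulting low dimension $k = \Theta(\eps n)$) earns its keep in the cited work, and it is not clear from your sketch that the constant $C_2$ comes out independent of $C_1$ once this is done honestly. As written, the proposal identifies the obstacle correctly but does not actually surmount it.
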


The list-decoding capacity theorem implies that we cannot hope to take the rate $R$ substantially larger than $1 - H(p) - \eps$ and obtain a constant list size.  Moreover, the list size $\Theta(1/\eps)$ is optimal for both random linear codes and uniformly random codes~\cite{Rudra11,GuruswamiN14}.  More precisely,
Guruswami and Narayanan show the following theorem (which we have specialized to binary codes).

\begin{theorem}[Theorem 20 of \cite{GuruswamiN14}]
Let $\eps > 0$.
  A uniformly random binary code of rate $1-H(p)-\eps$ is $(p,(1-H(p))/\eps)$-list decodable with probability at most $\exp(-\Omega_{p,\eps}(n))$.\footnote{In fact, in \cite{GuruswamiN14}, Theorem 20 is stated with a list size of $(1 - H(p))/2\varepsilon$ as a lower bound.  However, the constant can be improved to $1 - H(p)$, because the factor of $2$ is introduced to handle an additive constant term.  Thus, for sufficiently large $n$ their argument proves the stronger statement stated above.}
\end{theorem}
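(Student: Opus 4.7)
The plan is to prove the lower bound by a second-moment calculation, closely following the approach of Guruswami and Narayanan~\cite{GuruswamiN14}. I would work in the Bernoulli code model, where each $x \in \FF_2^n$ is independently included in $\cC$ with probability $p_0 = N/2^n$ (with $N = 2^{Rn}$ and $R = 1-H(p)-\eps$), and transfer to the uniformly random code of size $N$ at the end by conditioning on $|\cC| = N$, which costs only a polynomial factor and hence leaves the $\exp(-\Omega(n))$ bound intact.

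Set $L = (1-H(p))/\eps$, $N_y = |\cC \cap \calB(y, pn)|$, and $Z = |\{y \in \FF_2^n : N_y \geq L+1\}|$. The code is $(p, L)$-list-decodable iff $Z = 0$, so by Chebyshev's inequality it suffices to show $\Var(Z)/\E[Z]^2 \leq \exp(-\Omega_{p,\eps}(n))$. The first moment is straightforward: in the Bernoulli model $N_y \sim \Bin(|A|, p_0)$ with $|A| = \Vol(n, pn)$ and $|A| p_0 = 2^{-\eps n}$, so the Poisson tail yields $\Pr[N_y \geq L+1] = \Theta((|A| p_0)^{L+1}/(L+1)!)$, and hence $\E[Z] = 2^{n(H(p)-\eps)}/(L+1)! \cdot 2^{o(n)}$, which is exponentially large in $n$.

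The variance bound is the heart of the argument. Write $\Var(Z) = \sum_{y_1, y_2} \Cov(\one[N_{y_1} \geq L+1], \one[N_{y_2} \geq L+1])$ and classify pairs by $d = d(y_1, y_2)$. Disjoint balls ($d > 2pn$) contribute zero. For moderate-to-large $d \in [d_0, 2pn]$ with an appropriate threshold $d_0$, a Poisson-style expansion of the joint distribution conditioned on the intersection count $N_{y_1 \cap y_2} \sim \Bin(I_d, p_0)$, where $I_d = |\calB(y_1, pn) \cap \calB(y_2, pn)|$, yields $\Cov \leq O((L+1)^2) \cdot I_d/(|A|^2 p_0) \cdot \Pr[N_y \geq L+1]^2$. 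Summing over all such pairs and applying the identity $\sum_d \binom{n}{d} I_d = |A|^2$ (obtained by double-counting $\sum_z \one[z \in \calB(y_1) \cap \calB(y_2)]$) controls this contribution by $O((L+1)^2) \cdot 2^{-Rn} \cdot \E[Z]^2 = 2^{-\Omega(n)} \cdot \E[Z]^2$. For the remaining small-$d$ pairs ($d \leq d_0$), where the Poisson expansion degrades because $I_d$ is close to $|A|$, use the crude bound $\Cov \leq \Pr[N_y \geq L+1]$; choosing $d_0/n < H^{-1}(H(p) - \eps)$ forces the number of such close pairs $2^n \Vol(n, d_0)$ to satisfy $2^n \Vol(n, d_0) \Pr[N_y \geq L+1] = 2^{-\Omega(n)} \E[Z]^2$.

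The main obstacle is this small-$d$ regime, where strongly overlapping balls give rise to large positive correlation. The key balancing act is that close pairs have large per-pair covariance (on the order of $\Pr[N_y \geq L+1]$) but there are few of them, since $\Vol(n, d_0)$ is exponentially smaller than $2^n \Pr[N_y \geq L+1]^{-1}$ as long as $d_0/n < p$. Combining the two regimes gives $\Var(Z)/\E[Z]^2 = 2^{-\Omega_{p,\eps}(n)}$, after which Chebyshev finishes the proof.
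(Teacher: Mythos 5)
Your approach is a correct-in-spirit second-moment argument, but it takes a genuinely different route from the one in the paper (which follows \cite{GuruswamiN14}). The paper works in the ``iid codeword'' model and counts ordered pairs: it sets $W = \sum_{a,X} \II(a,X)$, summing over list-decoding centers $a$ and \emph{ordered} $L$-tuples of messages $X$, so that $W=0$ iff the code is list-decodable. The variance of $W$ is then decomposed by the overlap $\ell = |X \cap Y|$ between the two ordered lists, and for each $\ell$ it uses a min of two bounds: a trivial $\mu^{2L-\ell+1}$ for small $\ell$, and a bound driven by intersections of Hamming balls (weighted by $\binom{n}{d}$, summed over the center distance $d$) for large $\ell$. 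You instead work in the Bernoulli model, count bad centers directly via $Z = |\{y : N_y \ge L+1\}|$, and decompose $\Var(Z)$ by the Hamming distance $d(y_1,y_2)$ between pairs of centers, which puts all of the delicacy into the covariance of the two tail indicators $\one[N_{y_i} \ge L+1]$. This is a legitimate alternative, and the slick double-counting identity $\sum_d \binom{n}{d} I_d = |A|^2$ does control the bulk of the sum as you claim.

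The main thing you glossed over is the dual constraint on $d_0$. You stated the constraint needed to kill the small-$d$ contribution ($d_0/n < H^{-1}(H(p)-\eps)$, so that $\Vol(n,d_0) \ll \E[Z]$), but the ``Poisson-style'' covariance bound $\Cov \lesssim (L+1)^2 \frac{I_d}{|A|^2 p_0}\Pr[N_y \ge L+1]^2$ is \emph{not} valid for all $d \ge d_0$: it is the $j=1$ term of the expansion over the shared count $N^\cap$, and it dominates the $j \ge 2$ terms only when $p_0 I_d \lesssim (p_0|A|)^2$, i.e.\ $I_d/|A| \lesssim 2^{-\eps n}$, i.e.\ $d \gtrsim \eps n / \alpha_p$ where $\alpha_p = \tfrac12 \log_2 \tfrac{1}{4p(1-p)}$ is the exponential decay rate of $I_d/|A|$. (For smaller $d$ the covariance is dominated by the $j \approx L{+}1$ terms, which can be as large as $\Pr[N^\cap \ge L+1]$.) So you need $\eps/\alpha_p < H^{-1}(H(p)-\eps)$, which does hold for $\eps$ small but needs to be checked; this is precisely the sort of parameter tension that the paper's decomposition by list overlap $\ell$ (with its hybrid min bound for each $\ell$) sidesteps. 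You also need $\eps < H(p)$ for $\E[Z]$ to be exponentially large at all, which again is fine for small $\eps$ but worth stating. With these conditions stated and verified, and with the Bernoulli-to-uniform transfer made precise, your argument should go through for sufficiently small $\eps$.
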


We note that for general codes (not uniformly random or random linear) it is still unknown what the ``correct" list size $L$ is in terms of $\eps$, although there are results in particular parameter regimes~\cite{Blinovsky86,GuruswamiV05} and for stronger notions of list-decodability~\cite{GuruswamiN14}.

\subsubsection{Our results for random linear codes}
We show that high probability a random linear binary code of rate $1 - H(p) - \eps$ is $(p,L)$-list-decodable with $L \sim H(p)/\eps$, while a uniformly random binary code of the same rate requires $L \geq (1 - \gamma)/\eps$.
More precisely, the upper bound is as follows (proved in \S\ref{sec:ld}).
\newcommand{\thmld}{
  Let $\eps>0$ and $p\in(0,1/2)$.
  A random linear code of rate $1-H(p)-\varepsilon$ is $(p,H(p)/\eps+2)$-list decodable with probability $1-\exp(-\Omega_{\eps}(n))$.
}
\begin{theorem}
  \label{thm:ld} 
  \thmld
\end{theorem}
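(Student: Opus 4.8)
The plan is to adapt the potential-function argument of Guruswami--H{\aa}stad--Sudan--Zuckerman to a high-probability statement. Suppose for contradiction that $\cC$ is not $(p, L)$-list-decodable for $L = H(p)/\eps + 2$. Then there is a center $x \in \FF_2^n$ and codewords $c_0, c_1, \ldots, c_L \in \cC \cap \calB(x, pn)$. Since $\cC$ is linear, after translating we may assume $c_0 = 0^n$ (replacing $x$ by $x - c_0$), so we have $L$ nonzero codewords $c_1, \ldots, c_L$ all lying in $\calB(x, pn)$ for a common $x$ with $\abs{x} \le pn$ as well (as $0^n \in \calB(x,pn)$). The key quantity to track is the ``volume'' $\Vol\!\left(n, pn\right)$ weighted appropriately, or more precisely the size of an evolving set: following \cite{GuruswamiHSZ02}, one considers adding the codewords $c_1, c_2, \ldots$ one at a time (ordered so each new one is outside the span of the previous ones, which we may assume by discarding dependent ones at the cost of a small additive loss), and measures the number of centers $x$ that could still be ``close'' to all codewords chosen so far. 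The potential function is something like $\abs{\{y : y \text{ is within } pn \text{ of } \spn(c_1,\ldots,c_i)\}}$ or a sumset $\calB(0, pn) + \spn(c_1,\ldots,c_i)$, and each new independent codeword either forces the potential to grow by a multiplicative factor or else creates a configuration whose probability we can bound.

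First I would set up the union bound carefully: there are at most $2^n$ choices of center $x$, and we want to bound the probability that there exist $L$ linearly independent vectors in the random span that all fall in $\calB(x, pn)$. The random linear code is $\spn(b_1, \ldots, b_k)$ with $k = Rn$ i.i.d.\ uniform generators. The crucial observation is that a uniformly random codeword of $\cC$ (a random linear combination $\sum a_j b_j$ with $a \ne 0$) is uniform on $\FF_2^n$, so the probability it lands in $\calB(x, pn)$ is $\Vol(n, pn)/2^n \le 2^{(H(p) - 1)n}$. More to the point, $L$ codewords corresponding to linearly independent coefficient vectors $a^{(1)}, \ldots, a^{(L)} \in \FF_2^k$ are jointly uniform on an $L$-dimensional subspace of $(\FF_2^n)^{\otimes}$, hence mutually independent and each uniform on $\FF_2^n$. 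So for a fixed $x$ and fixed independent coefficient vectors, $\Pr[\text{all } L \text{ codewords in } \calB(x,pn)] \le \left(2^{(H(p)-1)n}\right)^{L} = 2^{-(1 - H(p))nL}$. The number of choices of $x$ is $2^n$ and the number of choices of $L$ coefficient vectors is at most $2^{kL} = 2^{(1 - H(p) - \eps)nL}$.

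Multiplying these out, the union bound gives failure probability at most
\[
  2^n \cdot 2^{(1 - H(p) - \eps)nL} \cdot 2^{-(1 - H(p))nL} = 2^{n - \eps n L} = 2^{n(1 - \eps L)}.
\]
For this to be $2^{-\Omega_\eps(n)}$ we need $\eps L > 1$, i.e.\ $L > 1/\eps$ — but this only gives list size slightly above $1/\eps$, which is actually \emph{better} than what the theorem claims, so something in this naive count must be too lossy to be correct, and the actual GHSZ-style argument is more subtle: the point is that the $L$ codewords landing in a ball centered at $x$ are \emph{not} the right event, because a codeword being in $\calB(x, pn)$ for \emph{some} $x$ that also works for the others is a weaker, correlated condition. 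The correct formulation, which I would follow from \cite{GuruswamiHSZ02}, replaces ``there exists a common center'' by the statement that the $i$-th codeword lies within $pn$ of the \emph{affine subspace} spanned by $x$ and the earlier codewords, or dually tracks how the set of viable centers shrinks. The entropy accounting there yields the factor $H(p)$ rather than $1$, because at each step the reduction in the number of viable centers is governed by $\Vol(n,pn) \le 2^{H(p)n}$ rather than by $2^n$, giving the threshold $L \gtrsim H(p)/\eps$.

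So the real plan is: (1) reduce to $L$ linearly independent codewords near a common center $x$, with $c_0 = 0$ after translation; (2) process them one at a time, and at step $i$ bound the probability, over the randomness of the new generator(s), that the new codeword lies in the current ``feasible set'' $S_i := \calB(x, pn) \cap \left(\calB(0, pn) + \spn(c_1, \ldots, c_{i-1})\right)$ or a similar set whose size is at most $\Vol(n, pn) \le 2^{H(p)n}$ times a correction; (3) observe that a fresh independent codeword is uniform on $\FF_2^n$ so this probability is $\le 2^{H(p)n}/2^n \cdot (\text{poly factors})$; (4) union bound over the $2^n$ centers and the $\le 2^{(1-H(p)-\eps)nL}$ choices of independent coefficient tuples, obtaining failure probability $\le 2^{n} \cdot 2^{(1 - H(p) - \eps)nL} \cdot 2^{-(1 - H(p))nL - \text{correction}}$, wait — this again needs the $H(p)$ to enter the per-step bound, which it does precisely because the feasible set for the $i$-th codeword, conditioned on the first $i-1$, has size $\le 2^{H(p)n}$ (it is contained in a translate of $\calB(0,pn)$), \emph{not} $2^n$. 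Redoing the count with per-step probability $2^{(H(p)-1)n}$ replaced by $2^{H(p)n - n}$... the subtle gain is that we do \emph{not} pay $2^n$ for the center, because the center is determined (up to the ball) by the codewords once we fix that $0 \in \calB(x,pn)$, i.e.\ $x \in \calB(0,pn)$, contributing only $2^{H(p)n}$. Then the count becomes $2^{H(p)n} \cdot 2^{(1-H(p)-\eps)nL} \cdot 2^{-(1-H(p))nL} = 2^{H(p)n - \eps n L}$, which is $2^{-\Omega(n)}$ precisely when $L > H(p)/\eps$, matching the theorem. The additive $+2$ absorbs the rounding and the discarding of up to a couple of linearly dependent codewords.

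\medskip

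\textbf{Main obstacle.} The delicate point — and the crux of the argument — is step (2)/(4): correctly identifying that the center $x$ should be ``charged'' only $2^{H(p)n}$ rather than $2^n$ (because we normalize $c_0 = 0$, forcing $x \in \calB(0, pn)$), and that the $L$ independent codewords, being jointly uniform and independent, each pay a factor $2^{-n}$ while the event they must satisfy has measure $2^{H(p)n - n}$ per codeword after conditioning. Making the conditioning rigorous — that conditioned on $c_1, \ldots, c_{i-1}$ the next independent codeword is still uniform on $\FF_2^n$ — requires care about how ``linearly independent coefficient vectors'' interacts with the i.i.d.\ generator model, and is exactly the step where the GHSZ existential argument (which picks generators adaptively) must be reorganized into a static union bound. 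I expect essentially everything else (the entropy bound $\Vol(n,pn) \le 2^{H(p)n}$, the translation by $c_0$, the accounting of the $+2$) to be routine.
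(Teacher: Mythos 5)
Your proposal opens by invoking the GHSZ potential-function argument, which is indeed the paper's approach, but it then pivots to something entirely different: a static union bound over centers $x$ and $L$-tuples of linearly independent coefficient vectors. That is much closer in spirit to the route taken by Guruswami, H{\aa}stad and Kopparty~\cite{GuruswamiHK11} than to the paper's proof, so you should be comparing against the difficulties that arose there.

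The gap is concrete and lives exactly where you wave your hands: the reduction ``discarding dependent ones at the cost of a small additive loss.'' This is not a small additive loss --- it is the central obstacle. After translating by $c_0$, a violating configuration is $L$ nonzero codewords $c_1,\ldots,c_L \in \calB(x,pn)$, but these may span a subspace of dimension $j$ as small as $\lceil \log_2(L+1) \rceil$. Your union bound over tuples of \emph{linearly independent} coefficient vectors covers only the event that $L$ independent codewords land in a common ball; it says nothing about the event that a low-dimensional subspace of $\cC$ meets $\calB(x,pn)$ in $\ge L+1$ points. Salvaging this would require summing over the dimension $j$, and for each $j$ bounding the joint probability that $j$ independent codewords land in $\calB(x,pn)$ \emph{and} that the intersection of their span with the ball has size $\ge L+1$; the latter is a delicate geometric constraint that the independence of the $j$ codewords does not control, and a naive count ($2^{H(p)n}$ for the center times $2^{-\eps n j}$ for the codewords) does not even go to zero for small $j$. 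This is precisely what makes the analysis in \cite{GuruswamiHK11} complicated and why their constant $C_p$ blows up as $p\to 1/2$. The paper avoids the issue entirely: it never enumerates bad configurations, but instead tracks the scalar potential $S_\cC = \E_x[2^{\eps n L_\cC(x)/(1+\eps)}]$, shows via Markov (Lemma~\ref{lem:ld-t}) that one random generator roughly doubles $T_\cC = S_\cC - 1$ with high probability regardless of the structure of the current code, and extracts the $H(p)$ in the list size by lower-bounding $\sum_x 2^{\eps n L_\cC(x)/(1+\eps)}$ only over the coset $x^*+\cC$ of a putative bad center --- no union bound over $x$, and no decomposition by span dimension. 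Two smaller points: your preliminary count with the $2^n$ center factor yields a threshold $L > 1/\eps$, which you call ``better than the theorem claims,'' but $1/\eps > H(p)/\eps$, so it is worse; and the step you flag as delicate (independence of codewords from independent coefficient vectors) is in fact routine --- the genuine difficulty is the dependence reduction you dismissed.
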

Theorem~\ref{thm:ld} improves upon the picture given by Theorems~\ref{thm:ghk11} and \ref{thm:wootters13} in two ways.
First, the leading constant on the list size, which is $H(p)$, improves over both the constant $C_p$ from Theorem~\ref{thm:ghk11} (which blows up as $p \to 1/2$) and on the constant $C_2$ from Theorem~\ref{thm:wootters13} (which the authors do not see how to make less than $2$).  Moreover, when $p \to 1/2$, Theorem~\ref{thm:ld} improves on Theorem~\ref{thm:wootters13} in that it decouples $p$ from $\eps$: in Theorem~\ref{thm:wootters13}, we must take $p = 1/2 - O(\sqrt{\eps})$ and $R = 1 - H(p) - \eps$, while in Theorem~\ref{thm:ld}, $p$ and $\eps$ may be chosen independently.  Thus, Theorem~\ref{thm:ld} offers the first true ``list-decoding capacity theorem for binary linear codes," in that it precisely mirrors the quantifiers in Theorem~\ref{thm:listdeccap}.

The list size of $H(p)/\eps + 2$ is smaller than the list size of $1/\eps$ given by the classical list decoding capacity theorem for uniformly random codes.   Further, the following negative result shows that the list size of $1/\eps$ given by uniformly random binary codes in the list decoding capacity theorem is tight, even in the leading constant of 1.
\newcommand{\thmlb}{
For any $p \in (0,1/2)$ and $\varepsilon>0$, there exists a $\gamma_{p,\varepsilon}=\exp(-\Omega_p(\frac{1}{\varepsilon}))$ and $n_{p,\varepsilon}\in\NN$ such that for all $n \ge n_{p,\varepsilon}$, a random code $\mathcal{C} \subseteq \mathbb{F}_2^n$ of rate $R=1-H(p)-\eps$ is with probability $1-\exp(-\Omega_{p,\eps}(n))$ \emph{not} $(p,\frac{1-\gamma_{p,\varepsilon}}{\eps}-1)$-list decodable.
}
\begin{theorem}
  \label{thm:lb} 
  \thmlb
\end{theorem}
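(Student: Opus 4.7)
The plan is to adapt the second-moment argument of Guruswami--Narayanan~\cite{GuruswamiN14}, sharpening the binomial estimates so the leading constant in the lower bound improves from $1/2$ to $1$. Set $L := \lfloor (1-\gamma_{p,\eps})/\eps \rfloor$; the goal is to exhibit, with probability $1 - \exp(-\Omega_{p,\eps}(n))$, a Hamming ball of radius $pn$ containing at least $L+1$ codewords, which certifies failure of $(p, (1-\gamma_{p,\eps})/\eps)$-list decodability. I will work in the Poissonized model in which each $y \in \FF_2^n$ is placed in $\cC$ independently with probability $q := 2^{-(1-R)n}$, passing to the actual code model via a routine Chernoff bound on $|\cC|$.

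For the first moment, fix $x \in \FF_2^n$ and write $V := \Vol(n,pn)$ and $\mu := Vq = 2^{-\eps n + o(n)}$. Then $B_x := |\calB(x,pn) \cap \cC| \sim \Bin(V, q)$, and the standard binomial-to-Poisson comparison gives
$$
\Pr[B_x \geq L+1] \geq \binom{V}{L+1} q^{L+1}(1-q)^{V-L-1} = \frac{\mu^{L+1}}{(L+1)!}\bigl(1-o(1)\bigr).
$$
Setting $N := |\{x : B_x \geq L+1\}|$, linearity of expectation yields
$$
\E[N] \geq 2^n \cdot \frac{\mu^{L+1}}{(L+1)!}\bigl(1-o(1)\bigr) \geq 2^{\gamma_{p,\eps} n - \log_2((L+1)!) - o(n)}.
$$
Since $\log_2((L+1)!) = O((1/\eps)\log(1/\eps))$, I would choose $\gamma_{p,\eps} := \exp(-c_p/\eps)$ for a suitable $c_p > 0$ and take $n \geq n_{p,\eps}$ of order $\exp(\Theta(1/\eps)) \cdot (1/\eps)\log(1/\eps)$ so that $\gamma_{p,\eps} n$ dominates the log-factorial, giving $\E[N] \geq \exp(\Omega_{p,\eps}(n))$.

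For the concentration step, I compute $\E[N^2] = \sum_{x,y} \Pr[B_x \geq L+1,\, B_y \geq L+1]$, grouping pairs by Hamming distance $d := \Delta(x,y)$. In the Poissonized model, $B_x$ and $B_y$ depend only on codewords in $\calB(x,pn) \cup \calB(y,pn)$, and the joint probability factors as a product up to a correction controlled by the intersection volume $|\calB(x,pn) \cap \calB(y,pn)|$. Standard volume estimates (as in~\cite{GuruswamiN14}) show that only a negligible fraction of pairs produce a nontrivial correction, so $\E[N^2] \leq \E[N]^2(1 + o(1))$ and Chebyshev gives $\Pr[N = 0] = o(1)$. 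To upgrade this to the exponentially small failure probability, I would apply Janson's inequality (or equivalently a Chernoff-type bound on the count of $(L+1)$-tuples of codewords lying in a common ball), using that the bad events $\{B_x \geq L+1\}$ and $\{B_y \geq L+1\}$ are independent when $\Delta(x,y) > 2pn$ and that the dependency mass from closer pairs is still dominated by the mean.

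The main obstacle is the first-moment accounting: recovering the constant $1$ in $(1-\gamma_{p,\eps})/\eps$ (rather than the $1/2$ of~\cite{GuruswamiN14}) rests on a sharp binomial-to-Poisson comparison $\binom{V}{L+1}q^{L+1}(1-q)^{V-L-1} = \mu^{L+1}/(L+1)!\cdot(1+o(1))$ that remains valid for $L$ as large as $(1-o(1))/\eps$, and on absorbing the $(1/\eps)\log(1/\eps)$ log-factorial penalty into $\gamma_{p,\eps} n$. This absorption is precisely what dictates $n_{p,\eps} \gtrsim \exp(\Omega(1/\eps))$; without this margin, the factorial term would erode the leading constant back toward $1/2$.
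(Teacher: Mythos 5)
Your overall strategy --- a second-moment argument over Hamming balls, followed by Chebyshev --- matches the paper's at a high level, but two parts are off. First, the stated first-moment bound does not hold. You seek a ball with $\geq L+1$ codewords, where $L=\lfloor(1-\gamma_{p,\eps})/\eps\rfloor$, and you claim $\E[N]\geq 2^{\gamma_{p,\eps}n-\log_2((L+1)!)-o(n)}$. But $\E[N] \approx 2^{n(1-\eps(L+1))-o(n)}/(L+1)!$, and $\eps(L+1)$ can be as large as $1-\gamma_{p,\eps}+\eps$; since $\gamma_{p,\eps}=\exp(-\Omega_p(1/\eps))\ll\eps$, the exponent $1-\eps(L+1)$ is typically \emph{negative}, making $\E[N]$ exponentially small rather than large and breaking the second-moment method before it starts. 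The paper's potential $W$ counts ordered $L$-tuples (not $(L+1)$-tuples) landing in a ball; with $\eps L\leq 1-\gamma_{p,\eps}$ the exponent is $\geq \gamma_{p,\eps}>0$ and $\E[W]$ is genuinely exponentially large.

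Second, and more fundamentally, you have misplaced where the improved constant comes from. You write that the main obstacle is absorbing $\log_2((L+1)!)$ into $\gamma_{p,\eps}n$; but the factorial is a $\poly(1/\eps)$ quantity that can be absorbed by taking $n$ large for \emph{any} fixed $\gamma>0$, so it constrains $n_{p,\eps}$ but puts no constraint at all on $\gamma_{p,\eps}$. The true bottleneck is the variance, and the paper's improvement over Guruswami--Narayanan is a new quantitative estimate on ball intersections: $\Vol(n,pn;d)/\Vol(n,pn)\leq 2^{-\alpha_p d}\cdot O_p(\sqrt{n})$ with $\alpha_p=\tfrac12\log_2\tfrac{1}{4p(1-p)}>0$ (Lemma~\ref{lem:vol-bound-2}). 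Summing the covariance contributions over the distance $d$ between centers produces a factor $(1+2^{-\alpha_p\ell})^n$, and controlling this for overlaps $\ell$ up to $L\approx 1/\eps$ is exactly what forces $\gamma_{p,\eps}\approx 2^{-\alpha_p\ell_{p,\eps}}=\exp(-\Omega_p(1/\eps))$; that is the actual origin of its exponential smallness. Guruswami--Narayanan used only the cruder product bound $\Pr[\calE]\leq\mu^{2L-\ell+1}$ (what you call ``standard volume estimates''), which yields only the constant $1-H(p)$. Your sketch never introduces the $\alpha_p$-decay estimate, so as written it would at best recover $1-H(p)$, not $1-\gamma_{p,\eps}$. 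Relatedly, the Janson detour is unnecessary: once the covariance is handled via the $\alpha_p$-decay, the ratio $\Var W/\E[W]^2$ is itself $\exp(-\Omega_{p,\eps}(n))$, and Chebyshev alone delivers the exponential failure probability (this is how the paper proceeds).
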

The proof of Theorem~\ref{thm:lb} is obtained by tightening the second moment method proof of \cite{GuruswamiN14}, and is given in Appendix~\ref{app:lb}.
Theorem~\ref{thm:lb}, combined with Theorem~\ref{thm:ld}, implies that, for all $p\in (0,1/2)$ and sufficiently small $\eps$, random linear codes of rate $1-H(p)-\eps$ with high probability can be list decoded up to distance $p$ with smaller list sizes than uniformly random codes.
Perhaps surprisingly, the difference between the list size upper bound in Theorem~\ref{thm:listdeccap} and the lower bound in Theorem~\ref{thm:lb} is bounded by 2 as $\varepsilon\to 0$, implying that the ``correct'' list size of a uniformly random code is tightly concentrated between $\floor{1/\varepsilon}\pm 1$ for small $\varepsilon$.

We are unaware of results in the literature that give even the existence of binary codes list decodable with list size \em better \em than $H(p)/\varepsilon$.
We remark that the Lovasz Local Lemma also gives the \emph{existence} of $(p,H(p)/\varepsilon)$ list decodable codes, matching our high probability result for random linear codes.
Though we suspect this argument is known, we are not aware of a published version, so we include a proof in Appendix~\ref{app:lll} for completeness.

The techniques that we use to prove Theorem~\ref{thm:ld} can be refined to give more combinatorial information about random linear codes.  It is our hope that such information will help in further derandomizing constructions of binary codes approaching list-decoding capacity.
In \S\ref{sec:listsize}, we prove the following structural result about random linear binary codes.
\begin{theorem}
  \label{thm:listsize}
  Fix $L\ge 1$ and $\gamma \in (0,1)$.
  For all sufficiently large $n$, if $\mathcal{C} \subseteq \mathbb{F}_2^n$ is a random linear code of dimension $k$, then with probability $1-\exp(-\Omega(\gamma n))$, for all $1\le \ell\le L$,
  \begin{align}
    \frac{|\{x:L_\mathcal{C}(x)\ge\ell\}|}{2^n}
    \ \le \ \left(2^{-n(1-H(p))}\cdot 2^k\right)^\ell \cdot 2^{\gamma\ell^2n}.
  \end{align}
\end{theorem}
To interpret this result, it is helpful to think of $\gamma$ close to 0 and $k=n(1-H(p)-\varepsilon)$.
Then, with high probability over the choice of a random linear code $\mathcal{C}$, the fraction of centers $x\in \mathbb{F}_2^n$ with ``list size at least $\ell$'' decays approximately exponentially as $2^{-n\ell\eps}$.
By an appropriately small choice of $\gamma$, Theorem~\ref{thm:ld} follows as a corollary of Theorem~\ref{thm:listsize} (see Corollary~\ref{cor:listsize-ld}), but as a warm-up to Theorem~\ref{thm:listsize} we present a proof of Theorem~\ref{thm:ld} independent of Theorem~\ref{thm:listsize} in \S\ref{sec:ld}.

\begin{remark}
Theorem~\ref{thm:listsize} implies that in a random linear code of rate $1-H(p)-\varepsilon$, with high probability over the choice of code, we have
$  
\Pr_{x}\left[ L_\mathcal{C}(x) \ge 2  \right] \lesssim 2^{-2n\varepsilon}.
$
On the other hand, we know that $\sum_{x}^{} L_C(x) = |C|\cdot B(0,pn)$, so if the maximum list size $L$ is a constant, we have $\Pr_x[L_\mathcal{C}(x)\ge 1]\ge\frac{1}{L}\cdot |\mathcal{C}|\cdot |B(0,pn)| \ge 2^{-n\ell\varepsilon(1+o(1))}$.
Thus, ``most" centers $x$ within $pn$ of a codeword $c \in \mathcal{C}$ are not within $pn$ of any other codeword $c' \neq c$. 
This is in line with the conventional wisdom from the Shannon model: with high probability, random linear codes achieve capacity on the BSC, so for a random linear code, a random center $x$ obtained by sending a codeword $c \in \mathcal{C}$ through the binary symmetric channel BSC($p$) has $L_\mathcal{C}(x)\le 1$ with high probability.
(See also~\cite{RudraU10}).  Thus, Theorem~\ref{thm:listsize} recovers this intuition for list size $1$, and quantitatively extends it to list sizes larger than $1$.
\end{remark}

\subsection{Rank-Metric Codes}\label{ssec:rankmetric}
As an application of our techniques for random linear codes, we turn our attention to \em rank metric codes. \em
Rank metric codes, introduced by Delsarte in~\cite{Delsarte78}, are codes $\mathcal{C} \subseteq \mathbb{F}_q^{m \times n}$; that is, the codewords are $m \times n$ matrices, where typically $m \geq n$.  The distance between two codewords $X$ and $Y$ is given by the rank of their difference:
$ \Delta_R(X,Y) \defeq \frac{1}{n}\rank(X- Y),$
where $\Delta_R$ is called the \emph{rank metric}.  We denote the \emph{rank ball} by 
$\mathcal{B}_{q,R}(X,p) \ \defeq \ \bra{Y \in \mathbb{F}_q^{m \times n} \,:\, \Delta_R(X-Y) \leq pn},$
and say that a rank metric code $\mathcal{C} \subseteq \mathbb{F}_q^{m \times n}$ is $(p,L)$-list-decodable if $|\mathcal{B}_{q,R}(X,p) \cap \mathcal{C}| \leq L$ for all $X \in \mathbb{F}_q^{m \times n}$.  The \em rate \em $R$ of a rank metric code $\mathcal{C} \subset \mathbb{F}_q^{m \times n}$ is $R \defeq \log_q(|\mathcal{C}|)/(mn)$.

Rank metric codes generalize standard (Hamming metric) codes, which are simply diagonal rank metric codes.  
The study of rank metric codes has been motivated by a wide range of applications, including magnetic storage~\cite{Roth91}, cryptography~\cite{GabidulinPT91, Loidreau10, Loidreau17}, space-time coding~\cite{LusinaGB03,Luk05}, network coding~\cite{KoetterK08, SilvaKK08}, and distributed storage~\cite{SilbersteinRV12, SilbersteinRKV13}.

The natural ``list-decoding capacity" for rank metric codes is $R = (1 - p)(1 - (n/m)p)$, which is the analog of the Gilbert-Varshamov bound~\cite{GadouleauY08}.  It was shown in \cite{Ding15,GuruswamiR17} that this is achievable by a uniformly random rank metric code.

\begin{theorem}[\cite{GuruswamiR17}, Proposition A.1.\footnote{
In \cite{GuruswamiR17}, the result is stated with $L = O(1/\eps)$, but an inspection of the proof shows that we may take the leading constant to be $1$.}]
\label{thm:rank-metric-0}
Let $\eps > 0$ and $p \in (0,1)$ and suppose that $m,n$ are sufficiently large compared to $1/\eps$.  A uniformly random code $\mathcal{C} \subseteq \mathbb{F}_q^{m \times n}$ 
of rate $R = (1 - p)(1 - bp) - \eps$ 
is $(p, \ceil{1/\eps})$-list-decodable with probability at least $1 - O(q^{-\eps m n })$, where $b = n/m$.
\end{theorem}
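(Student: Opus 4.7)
The plan is a standard first-moment (union bound) argument, in direct analogy with the existential direction of Theorem~\ref{thm:listdeccap} adapted to the rank metric. Write $M = q^{Rmn}$ for the code size. I will upper bound the expected number of ``bad'' configurations, by which I mean a center $X \in \FF_q^{m\times n}$ together with an ordered $(L+1)$-tuple of distinct codewords $c_0,\dots,c_L \in \cC$ all lying in $\calB_{q,R}(X,p)$. Since $\cC$ fails to be $(p,L)$-list-decodable iff at least one such configuration exists, Markov's inequality reduces the theorem to showing that this expectation is $O(q^{-\eps m n})$.

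\textbf{Key steps.} Let $V \defeq |\calB_{q,R}(0,p)|$; by translation invariance (the map $Y \mapsto X - Y$ is a bijection on $\FF_q^{m\times n}$) this is also the volume of any rank ball of radius $pn$. For a fixed center $X$, the probability that $L+1$ i.i.d.\ uniform matrices all land in $\calB_{q,R}(X,p)$ is $(V/q^{mn})^{L+1}$. Union bounding over the $q^{mn}$ choices of $X$ and over at most $M^{L+1}$ ordered tuples in $\cC$, the expected number of bad configurations is at most
\begin{equation}
q^{mn}\cdot M^{L+1}\cdot\left(\frac{V}{q^{mn}}\right)^{L+1} \;=\; q^{\,-Lmn + (L+1)Rmn + (L+1)\log_q V}.
\end{equation}
Next, using the standard count of $m\times n$ matrices of a given rank via $q$-binomials, one gets $\log_q V \le pn(m+n-pn) + O(\log n) = mn\bigl(1-(1-p)(1-bp)\bigr) + O(\log n)$: the rank ball decomposes as a sum of $pn+1$ rank-level terms, each bounded by $q^{r(m+n-r)+O(r)}$ and maximized at $r = pn$ (since $pn \le n \le (m+n)/2$). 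Substituting $R = (1-p)(1-bp) - \eps$, the cross terms cancel and the exponent collapses to
\begin{equation}
mn\bigl(1 - (L+1)\eps\bigr) + O(\log n).
\end{equation}
With $L = \ceil{1/\eps}$ we have $(L+1)\eps \ge 1 + \eps$, so the exponent is at most $-\eps m n + O(\log n)$, which yields $O(q^{-\eps m n})$ once $m,n$ are taken large compared to $1/\eps$.

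\textbf{Main obstacle.} The only non-routine point is pinning down the rank-ball volume with the correct leading constant $1-(1-p)(1-bp)$. I would use the exact formula
\begin{equation}
\#\{A\in\FF_q^{m\times n}:\rank A = r\} \;=\; \binom{n}{r}_{\!q}\prod_{i=0}^{r-1}(q^m-q^i),
\end{equation}
whose $\log_q$ equals $r(m+n-r)+O(r)$ (absorbing the $q$-binomial asymptotics and the truncation of the product), maximize over $0 \le r \le pn$ at the endpoint $r = pn$, and pay at most an additional $\log_q(pn+1)=O(\log n)$ for summing over the $pn+1$ rank levels. Everything else is a textbook Markov calculation; no new ideas beyond those in the classical proof of Theorem~\ref{thm:listdeccap} are needed, and the ordered-tuple vs.\ uniform-subset distinction only costs a harmless factor of $(1+o(1))$ since $M \ll q^{mn}$.
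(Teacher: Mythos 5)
Your argument is correct and is the standard first-moment (union bound over a center and an ordered $(L+1)$-tuple of message indices) calculation; the paper does not prove Theorem~\ref{thm:rank-metric-0} itself but cites it from \cite{GuruswamiR17}, whose Proposition~A.1 uses precisely this approach, so you have essentially reproduced the cited proof. One small remark: you do not actually need the $O(\log n)$ slack in $\log_q V$ — the paper's Lemma~\ref{lem:rank-metric-vol} (from \cite{GadouleauY08}) gives $|\calB_{q,R}(X,pn)|\le 4\,q^{mn(p+bp-bp^2)}$ with a constant multiplicative factor, because the sum over rank levels is geometric with ratio at least $q$ when $pn\le n\le m$; either way the additive slack is $o(\eps mn)$ and the conclusion $O(q^{-\eps mn})$ follows.
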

As with standard (Hamming-metric) codes, it is interesting to study the list-decodability of random linear rank-metric codes; we say that $\mathcal{C} \subseteq \mathbb{F}_q^{n \times m}$ is linear if it forms an $\mathbb{F}_q$-linear space.
It is shown in \cite{Ding15} that no linear code can beat the bound in Theorem~\ref{thm:rank-metric-0}.

\begin{theorem}
\label{thm:rank-metric-lb-1} 
Let $b = \lim_{n \to \infty} \frac{n}{m}$ be a constant and $L\le 2^{o(mn)}$.  Then for any $R \in (0,1)$ and $p \in (0,1)$, a $(p, L)$-list-decodable \emph{linear} rank metric code $\mathcal{C} \subseteq \mathbb{F}_q^{m \times n}$ with rate $R$ satisfies $R \leq (1 - p)(1 - bp)$.
\end{theorem}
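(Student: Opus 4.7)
The plan is to prove the bound by a standard Hamming-type volume argument; the argument does not actually use the linearity of $\cC$ and would apply to any rank-metric code.

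First I would carry out a double-counting step. Since the rank metric is translation-invariant, every codeword $C \in \cC$ lies in exactly $|\calB_{q,R}(0,p)|$ rank-balls of radius $pn$, so
\[
  \sum_{X \in \FF_q^{m \times n}} |\calB_{q,R}(X,p) \cap \cC| \;=\; |\cC| \cdot |\calB_{q,R}(0,p)|.
\]
By the $(p,L)$-list-decoding hypothesis each summand is at most $L$, hence the left-hand side is at most $L \cdot q^{mn}$. Rearranging gives $|\cC| \leq L q^{mn}/|\calB_{q,R}(0,p)|$, and taking $\log_q$ and dividing by $mn$ yields
\[
  R \;\leq\; 1 - \frac{\log_q |\calB_{q,R}(0,p)|}{mn} + \frac{\log_q L}{mn}.
\]

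Second I would estimate the rank-ball volume using the standard count: the number of $m \times n$ matrices over $\FF_q$ of rank exactly $r$ is $\binom{n}{r}_q \prod_{i=0}^{r-1}(q^m - q^i)$, whose leading term is $q^{r(m+n-r)}$. Taking $r = \lfloor pn \rfloor$ gives the lower bound $|\calB_{q,R}(0,p)| \geq q^{pn(m+n-pn) - O(n)}$. Plugging this into the rate bound, the $(\log_q L)/mn$ term vanishes as $n \to \infty$, and with $n/m \to b$ we obtain
\[
  R \;\leq\; 1 - p\bigl(1 + b(1-p)\bigr) + o(1) \;=\; (1-p)(1-bp) + o(1),
\]
which gives the claimed bound in the limit.

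The only slightly delicate step is the asymptotic estimate of the Gaussian binomial coefficients (together with verifying that the top rank level $r = \lfloor pn \rfloor$ dominates the sum defining $|\calB_{q,R}(0,p)|$), but this is routine combinatorics. Notably, the linearity assumption on $\cC$ plays no role in the argument, which is purely a packing bound.
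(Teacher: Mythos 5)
The paper does not supply its own proof of Theorem~\ref{thm:rank-metric-lb-1}; it is stated as a citation to~\cite{Ding15}. So there is no in-paper argument to compare against, and I will instead assess your proposal on its own terms.

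Your packing argument is correct and does establish the theorem under the natural asymptotic reading (a family of codes with $n/m\to b$ and fixed, or more generally subexponential, $L$). The double-counting identity $\sum_{X} |\calB_{q,R}(X,p)\cap\cC| = |\cC|\cdot|\calB_{q,R}(0,p)|$ is valid by translation invariance of the rank metric, and the estimate that the number of rank-$r$ matrices has leading term $q^{r(m+n-r)}$ is standard and correct, giving $|\calB_{q,R}(0,pn)| \geq q^{mn(p+pb-p^2b) - o(mn)}$ and hence $R \le (1-p)(1-bp) + \frac{\log_q L}{mn} + o(1)$. Two small remarks. First, you should state explicitly that you need $\log_q L = o(mn)$ for the last term to vanish; this is implicit in the theorem (where $L$ is a constant) but worth saying, since for a linear code with $R$ strictly above $(1-p)(1-bp)$ your inequality actually shows $L$ must be $q^{\Omega(mn)}$, which is the stronger content one really wants. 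Second, the paper's Lemma~\ref{lem:rank-metric-vol} already records the two-sided estimate $q^{mn(p+pb-p^2b)} \le |\calB_{q,R}(0,pn)| \le 4\cdot q^{mn(p+pb-p^2b)}$, which would let you avoid re-deriving the Gaussian-binomial asymptotics and drop the $o(1)$ entirely, yielding $R \le (1-p)(1-bp) + \frac{\log_q L}{mn}$ cleanly. You are right that linearity plays no role in this packing bound; the statement holds for arbitrary rank-metric codes, and the restriction to $\FF_q$-linear codes in the theorem is inherited from how the result is phrased in~\cite{Ding15}, not because the proof requires it.
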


There has been a great deal of work aimed at establishing (or refuting) the list-decodability of explicit rank metric codes.  It is shown in~\cite{WachterZeh13} that Gabidulin codes~\cite{Gabidulin85}---the rank-metric analog of Reed-Solomon codes---are \em not \em list-decodable to capacity, or even much beyond half their minimum distance.  However, there have been works~\cite{GuruswamiX13, GuruswamiWX16} designing explicit codes meeting the limitation of Theorem~\ref{thm:rank-metric-lb-1}.  

Following the approach of~\cite{ZyablovP81} for Hamming metric codes, \cite{Ding15} shows that random linear rank metric codes of rate $R = (1 - p)(1 - bp) - \eps$ are $(p, \exp(O(1/\eps))$-list-decodable, where as above $b = n/m$.
In a recent paper of Guruswami and Resch~\cite{GuruswamiR17}, this result was strengthened to give a list size of $O(1/\eps)$.
\begin{theorem}[\cite{GuruswamiR17}]
  \label{thm:rank-metric-2}
  Let $p\in(0,1)$ and $q\geq2$.  There is some constant $C_{p,q}$ so that the following holds.
	For all sufficiently large $n,m$ with $b = n/m$,  
a random linear rank metric code $\mathcal{C}\subseteq \mathbb{F}_q^{m \times n}$ of rate $R=(1-p)(1-bp)-\eps$ is $(p,C_{p,q}/\eps)$-list decodable with high probability.
\end{theorem}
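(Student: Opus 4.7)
The plan is to adapt the classical union bound argument (used by Zyablov--Pinsker and in Theorem~\ref{thm:listdeccap}) to the rank-metric setting. The failure event ``$\cC$ is not $(p,L)$-list-decodable'' is equivalent to the existence of a center $X \in \FF_q^{m \times n}$ and $L+1$ distinct codewords $c_0, \ldots, c_L \in \cC$ all lying in $\calB_{q,R}(X, p)$. I would upper bound the probability of this event by a union bound, choosing $L = \lceil C_{p,q}/\eps \rceil$ for a suitable constant, and showing the total probability decays.

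The key steps are as follows. First, estimate the volume of the rank ball: using standard Gaussian binomial identities, one can show that $|\calB_{q,R}(0,p)| \leq q^{mn(1-(1-p)(1-bp)) + o(mn)}$, where the $o(mn)$ term depends on $q$. Denote this exponent as $mn \cdot h_q(p,b)$. Second, exploit linearity: realize $\cC$ as the $\FF_q$-span of $k = Rmn$ independent uniform matrices $G_1,\ldots, G_k \in \FF_q^{m \times n}$. For any fixed tuple of $\FF_q$-linearly independent coefficient vectors $\alpha_0, \ldots, \alpha_L \in \FF_q^k$, the corresponding codewords $c_i = \sum_j \alpha_{i,j} G_j$ are jointly uniform and independent in $\FF_q^{m \times n}$, so
\[
\Pr[\, c_0, \ldots, c_L \in \calB_{q,R}(X,p)\,] \leq (|\calB_{q,R}(X,p)|/q^{mn})^{L+1}.
\]
Third, union bound over the $q^{mn}$ choices of $X$ and over at most $q^{(L+1)k}$ choices of coefficient tuples (the linearly dependent tuples give rise to a smaller contribution and can be absorbed, e.g.\ by noting that if $\alpha_L \in \spn(\alpha_0,\ldots,\alpha_{L-1})$ then the event is forced to depend on fewer than $L+1$ independent codewords and one iterates). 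Combining,
\[
\Pr[\text{fail}] \ \leq\ q^{mn}\cdot q^{(L+1)Rmn}\cdot q^{(L+1)mn(h_q(p,b) - 1)} \ = \ q^{mn(1 - (L+1)(1 - h_q(p,b) - R))}.
\]
Substituting $R = (1-p)(1-bp) - \eps$ and $1 - h_q(p,b) = (1-p)(1-bp) - o(1)$, the exponent becomes $mn(1 - (L+1)\eps + o(1))$, which is $-\Omega(mn)$ as soon as $L+1 > 1/\eps + o(1)$. Choosing $L = \lceil C_{p,q}/\eps \rceil$ for a constant $C_{p,q}$ absorbing the $o(1)$ correction suffices.

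The main obstacle is quantitative control of the rank ball volume for finite $q$: unlike the Hamming case, where $\Vol(n,pn) \leq 2^{H(p)n}$ is clean, here the $q$-dependent corrections to $h_q(p,b)$ (coming from the Gaussian binomial approximations $\binom{n}{r}_q = q^{r(n-r)}(1 + O(q^{-1}))$) are what force the leading constant to depend on $p$ and $q$ rather than being simply $1/\eps$. Carefully tracking these corrections through the union bound yields a finite (but $p,q$-dependent) constant $C_{p,q}$; obtaining an explicit small constant, independent of $q$ and tending to a clean limit as $q \to \infty$, would require strengthening the argument in the spirit of Theorem~\ref{thm:ld}, which is precisely the improvement the present paper undertakes in its own rank-metric section.
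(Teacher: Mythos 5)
This statement is quoted from \cite{GuruswamiR17}, and the present paper does not prove it; the paper remarks immediately afterward that the proof in \cite{GuruswamiR17} follows the approach of \cite{GuruswamiHK11} (a Ramsey-theoretic/chaining argument), not a union bound. Your proposal therefore cannot be matching ``the paper's proof,'' and more importantly it contains a genuine gap that the paper itself flags: the sentence ``the linearly dependent tuples give rise to a smaller contribution and can be absorbed'' is false, and this is precisely the obstruction that makes the result nontrivial.

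To see why, suppose the coefficient vectors $\alpha_0,\dots,\alpha_L \in \FF_q^k$ span a $d$-dimensional subspace with $d < L+1$. The resulting codewords are jointly distributed as $d$ independent uniform matrices together with $\FF_q$-linear combinations of them, so the best upper bound you get on the bad event is $\mu^d$, not $\mu^{L+1}$. The number of such tuples is on the order of $q^{d(k-d)} \cdot q^{d(L+1)}$, so the contribution at dimension $d$ is roughly $q^{mn}\cdot q^{dRmn}\cdot q^{dmn(h_q-1)}\cdot q^{O(1)} = q^{mn(1-d\eps)+O(1)}$. For this to decay you need $d > 1/\eps$. But a tuple of $L+1$ distinct vectors can have rank as small as $\lceil\log_q(L+1)\rceil$, so the smallest $d$ that must be handled is about $\log_q(L+1)$; requiring $\log_q(L+1) > 1/\eps$ forces $L \gtrsim q^{1/\eps}$. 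This is exactly the $\exp(O(1/\eps))$ bound attributed in the paper to \cite{ZyablovP81} in the Hamming case and to \cite{Ding15} in the rank-metric case. The improvement to $L = C_{p,q}/\eps$ cannot be obtained by a tuple-wise union bound: \cite{GuruswamiR17} instead imports the iterative ``increment argument'' of \cite{GuruswamiHK11}, which avoids union-bounding over coefficient patterns by structurally controlling how many codewords can be packed into a translate of a low-rank ball. Your final paragraph misidentifies the bottleneck as the $q$-dependent error in the rank-ball volume; that error is lower order and has nothing to do with why the leading constant in $L$ becomes $p,q$-dependent.
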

The proof of Theorem~\ref{thm:rank-metric-2} uses ideas from the approach of \cite{GuruswamiHK11} to prove Theorem~\ref{thm:ghk11}.  This is a beautiful argument, but as with the results of \cite{GuruswamiHK11}, the result of \cite{GuruswamiR17} suffers from the fact that $C_{p,q}$ tends to $\infty$ as $p$ approaches 1.  
It is shown in \cite{GuruswamiR17}, Proposition A.2, that when $p = 1 - \eta$, a uniformly random rank metric code of rate $R = (\eta - \eta b + \eta^2 b)/2$ is $(p, 4/(\eta - \eta b + \eta^2 b))$-list-decodable, and that work poses the question of whether or not a random linear rank metric code can achieve this.  Our results, described in the next section, show that the answer is ``yes" for $q=2$.

\subsubsection{Our Results for Rank Metric Codes}
By applying the techniques in the proof of Theorem~\ref{thm:ld}, we prove the following upper bound on the list size of random linear binary rank-metric codes.
\newcommand{\thmrankmetric}{
  Let $p\in (0,1)$ and choose $\eps > 0$.  There is a constant $C_{\eps}$ so that the following holds.
  Let $m$ and $n$ be sufficiently large positive integers with $n < m$ and let $b=n/m$.
  A random linear rank metric code $\mathcal{C}\subseteq \mathbb{F}_2^{m \times n}$ of rate $R=(1-p)(1-bp)-\eps$ is $(p,\frac{p+bp-bp^2}{\eps}+2)$-list decodable with probability at least $1-\exp(-C_{\eps}mn)$.
}
\begin{theorem}
  \label{thm:rank-metric}
  \thmrankmetric
\end{theorem}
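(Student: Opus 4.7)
The plan is to adapt the proof of Theorem \ref{thm:ld} to the rank-metric setting essentially verbatim, with the rank-metric ``entropy'' $H_R(p) := p + bp - bp^2$ playing the role of binary entropy $H(p)$. This parallel is natural: the number of $m\times n$ binary matrices of rank at most $pn$ is approximately $2^{pn(m+n-pn)} = 2^{H_R(p)mn}$, so $H_R(p)$ is the correctly normalized logarithmic volume of a rank ball, just as $H(p)$ is for the Hamming ball. Under this dictionary, the target list size $\frac{p+bp-bp^2}{\eps}+2$ becomes $\frac{H_R(p)}{\eps}+2$, and the rate condition $R = (1-p)(1-bp) - \eps$ becomes $R = 1 - H_R(p) - \eps$, both exactly mirroring Theorem \ref{thm:ld}.

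The first step is to collect the rank-metric analogs of the black-box Hamming facts invoked in the proof of Theorem \ref{thm:ld}: (i) the volume bound $|\calB_{2,R}(0, p)| \le 2^{H_R(p)mn}$; (ii) for a uniformly random matrix $B \in \FF_2^{m \times n}$ and any fixed $C$, $\Pr[B \in \calB_{2,R}(C, p)] \le 2^{-(1 - H_R(p))mn}$; and (iii) translation invariance $\calB_{2,R}(C, p) = C + \calB_{2,R}(0, p)$ under the additive group structure on $\FF_2^{m \times n}$. With these in hand, one re-runs the argument of Theorem \ref{thm:ld} by replacing $\FF_2^n$ with $\FF_2^{m \times n}$, Hamming balls with rank balls, and each occurrence of $H(p)$ with $H_R(p)$. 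The count of possible ball centers becomes $2^{mn}$, the count of linear combinations producing $L+1$ codewords becomes roughly $2^{Rmn(L+1)}$, and the probability that a specified codeword lies in a given rank ball becomes $2^{-(1-H_R(p))mn}$; substituting $R = 1 - H_R(p) - \eps$ and $L = \lceil H_R(p)/\eps \rceil + 2$ makes the union-bound exponent strictly negative in $mn$, yielding the claimed failure probability $\exp(-C_\eps mn)$.

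The main obstacle I anticipate is that the refined part of the Theorem \ref{thm:ld} argument---which improves the naive GHSZ list size of $1/\eps$ down to $H(p)/\eps$---presumably exploits Hamming-specific structure, for example sumset bounds of the form $\calB(0,pn) + \calB(0,pn) \subseteq \calB(0,2pn)$ together with sharp volume estimates on these sumsets. One needs to check that rank balls satisfy the analogous sumset containment $\calB_{2,R}(0,p)+\calB_{2,R}(0,p)\subseteq\calB_{2,R}(0,2p)$, which follows from the triangle inequality for rank (since $\rank(A+B) \le \rank A + \rank B$), and that the resulting volumes are still controlled by $H_R$ with the same exponential rate. Once these structural lemmas are verified, the bookkeeping of exponents with $H_R$ in place of $H$ should produce exactly the claimed list size $H_R(p)/\eps + 2 = \frac{p+bp-bp^2}{\eps}+2$ with the same high-probability guarantee, and no other step of the Hamming argument appears to use structure unavailable in the rank-metric setting.
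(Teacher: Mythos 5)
Your proposal matches the paper's proof: it re-runs the potential-function argument of Theorem~\ref{thm:ld} verbatim in $\FF_2^{m\times n}$, using the rank-ball volume estimate $|\calB_R(X,pn)| \le 4\cdot 2^{mn(p+pb-p^2b)}$ (Lemma~\ref{lem:rank-metric-vol}; your $H_R(p)$) to seed the recursion, together with rank-metric analogs of Lemmas~\ref{lem:AL-bound} and~\ref{lem:ld-t}. One clarification: your ``anticipated obstacle'' does not actually arise---the refinement from $1/\eps$ to $H(p)/\eps$ in Theorem~\ref{thm:ld} uses no sumset bounds on balls, only translation invariance of the metric (so that the list size $L_\calC(x)$ is constant on cosets of $\calC$), which is precisely your ingredient (iii) and carries over to the rank metric immediately.
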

Notice that Theorem~\ref{thm:rank-metric} works for all $p$, improving upon Theorem~\ref{thm:rank-metric-2}.  
In particular, when $p = 1 - \eta$, then setting $\eps = (1 - p)(1 - bp)/2$ and applying Theorem~\ref{thm:rank-metric} implies that a random linear binary rank metric code of rate $R = (\eta - \eta b + \eta^2 b )/2$ is $(p, L)$ list-decodable for $L \leq \frac{2}{\eta - \eta b + \eta^2 b}$, answering the aforementioned open question of \cite{GuruswamiR17} in the affirmative.

We also prove a new lower bound on the list size of uniformly random rank-metric codes.
\newcommand{\thmrankmetriclbq}{
Let $\eps > 0$ and suppose $m,n$ are sufficiently large so that $b = n/m$.
  Let $\mathcal{C}\subseteq \mathbb{F}_q^{m\times n}$ be a uniformly random rank metric code of rate $R=(1-p)(1-bp)-\eps$.  Then $\mathcal{C}$ is $(p, (1-p)(1-bp)/\eps - 2)$-list decodable with probability at most $\exp(-\Omega_{p,\eps}(n))$.
}
\begin{theorem}
  \label{thm:rank-metric-lb}
  \thmrankmetriclbq
\end{theorem}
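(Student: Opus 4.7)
I would prove Theorem~\ref{thm:rank-metric-lb} by adapting the second-moment argument of Guruswami and Narayanan~\cite{GuruswamiN14} from the Hamming metric to the rank metric; the overall structure is identical, with Hamming-ball volumes and intersection counts replaced by their rank-metric counterparts. Write $V \defeq |\calB_{q,R}(0,p)|$, for which the standard count of matrices by rank yields $V = q^{(1-(1-p)(1-bp))mn + o(mn)}$. Let $c_1,\ldots,c_N \in \FF_q^{m\times n}$ be the $N = q^{Rmn}$ i.i.d.\ uniform codewords of $\calC$, and let $\ell$ be the largest integer strictly smaller than $(1-p)(1-bp)/\eps$, so that exhibiting $\ell$ codewords in a common rank ball certifies that $\calC$ is not $(p,(1-p)(1-bp)/\eps-1)$-list decodable. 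Define the counting variable
\[
  Y \ \defeq \ \#\bigl\{ S \in \binom{[N]}{\ell} \,:\, \textstyle\bigcap_{i\in S} \calB_{q,R}(c_i,p)\neq \emptyset \bigr\}.
\]
It then suffices to show $\Pr[Y = 0] \le \exp(-\Omega_{p,\eps}(n))$ by the second-moment method.

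For the first moment: since $c_2,\ldots,c_\ell \in \calB_{q,R}(c_1,p)$ forces $c_1$ itself to be a common center, the i.i.d.\ structure of the codewords gives
\[
  \E[Y] \ \ge \ \binom{N}{\ell}\!\left(\frac{V}{q^{mn}}\right)^{\!\ell-1} \ = \ \frac{1}{\ell!}\,q^{mn((1-p)(1-bp) - \ell\eps) + o(mn)},
\]
which by the choice of $\ell$ is exponentially large in $mn$.

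The heart of the argument is the second-moment bound. Expanding
$\E[Y^2]=\sum_{k=0}^\ell \binom{N}{\ell}\binom{\ell}{k}\binom{N-\ell}{\ell-k}\,p_k$, where $p_k$ is the probability that two $\ell$-subsets $S,T$ of i.i.d.\ uniform codewords with $|S\cap T|=k$ both have a common rank-ball center, the $k=0$ term equals $(1-o(1))\E[Y]^2$ by independence of disjoint codewords. For $k\ge 1$, I would bound $p_k$ by summing over pairs of potential centers $(x,x')$: the $k$ shared codewords must land in $\calB_{q,R}(x,p)\cap \calB_{q,R}(x',p)$, while each unshared codeword of $S$ (resp.\ $T$) lands in $\calB_{q,R}(x,p)$ (resp.\ $\calB_{q,R}(x',p)$). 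Classifying the center pairs by rank distance $d_R(x,x')$ and applying rank-ball intersection estimates (the rank-metric analog of the binomial-coefficient computations in~\cite{GuruswamiN14}) should show that each $k\ge 1$ contribution is $q^{-\Omega_{p,\eps}(mn)}\,\E[Y]^2$.

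Combining these bounds yields $\E[Y^2]\le (1+q^{-\Omega_{p,\eps}(mn)})\E[Y]^2$, so Chebyshev's inequality gives $\Pr[Y=0] \le \Var[Y]/\E[Y]^2 \le q^{-\Omega_{p,\eps}(mn)}$, establishing the theorem. I expect the main technical obstacle to be the rank-ball intersection estimate needed to control the $k\ge 1$ terms: in the Hamming case this is a direct binomial-coefficient computation, whereas in the rank metric one must classify pairs $(x,x')$ by rank distance and obtain sharp upper bounds on $|\calB_{q,R}(x,p) \cap \calB_{q,R}(x',p)|$ via Gaussian-binomial-coefficient asymptotics; the rest of the argument is a direct transcription of the Guruswami--Narayanan proof.
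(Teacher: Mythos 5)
Your proposal and the paper both run a second--moment argument, but they choose different counting variables and that difference is what determines how much technical machinery is required. You count $\ell$-subsets $S$ of codewords for which $\bigcap_{i\in S}\calB_{q,R}(c_i,p)$ is nonempty. To bound $\Pr[Y^2]$ you then have to estimate, for overlapping $S,T$, the probability that both admit \emph{some} common center; as you note, that forces you to sum over pairs of potential centers $(x,x')$, classify them by $\Delta_R(x,x')$, and control $|\calB_{q,R}(x,p)\cap\calB_{q,R}(x',p)|$. That rank-ball intersection estimate is a genuine piece of new work (the rank analogue of Lemma~\ref{lem:vol-bound-2}), and the paper does not prove it; you have flagged it as ``the main technical obstacle'' but left it unresolved, so as written the argument has a gap precisely there.

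The paper avoids needing any intersection estimate by a cleaner choice of counting variable: it counts pairs $(A,X)$ of an \emph{explicit} center $A\in\FF_q^{m\times n}$ and an ordered $L$-tuple of distinct messages $X$, so $W=\sum_{A,X}\II(A,X)$. Then in $\Var W$ one sums over $A,B$ as well, and for $X,Y$ with a shared codeword $z$ one simply observes that $A$ and $B$ must each lie within rank distance $pn$ of the encoding of $z$ --- an event of probability $\mu^2$ over the (implicit) uniform draw of $A,B$ --- while the non-shared codewords fall in the right balls with probabilities $\mu^{L-1}$ and $\mu^{L-\ell}$. This gives $\Pr[\II(A,X)\wedge\II(B,Y)]\le\mu^{2L-\ell+1}$ directly, no intersection lemma needed, and is enough for the claimed list size $(1-p)(1-bp)/\eps-1$. (The intersection estimate only becomes necessary when one wants to sharpen the leading constant, which the paper does for the Hamming case in Theorem~\ref{thm:lb}/Lemma~\ref{lem:vol-bound-2} but deliberately does \emph{not} do for rank metric.) So your plan is not wrong, but the step you hope to ``directly transcribe'' is exactly the step the paper's formulation is designed to sidestep; switching to the $(A,X)$ counting variable makes the rank-metric proof a near-verbatim copy of \cite{GuruswamiN14} with $\Vol(n,pn)$ replaced by the rank-ball volume from Lemma~\ref{lem:rank-metric-vol}.
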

Theorem~\ref{thm:rank-metric-lb} again uses the method of \cite{GuruswamiN14}, and we prove it in Appendix~\ref{app:rank-metric-lb}.
Together,
Theorems~\ref{thm:rank-metric} and \ref{thm:rank-metric-lb} 
show that for some values of $p$, random linear binary rank metric codes have a strictly smaller list size than uniformly random rank metric codes with the same parameters.  
In particular, the upper bound of Theorem~\ref{thm:rank-metric} is strictly smaller than the lower bound of Theorem~\ref{thm:rank-metric-lb} whenever
$p < \frac{1-b}{2}$.
For larger values of $p$, we remark that the list size obtained by Theorem~\ref{thm:rank-metric} is still strictly smaller than the $1/\eps$ list size given by uniformly random codes in Theorem~\ref{thm:rank-metric-0}, even though in this case we don't have a lower bound which proves that this is tight.


\section{Simplified result for random linear binary codes}\label{sec:ld}
In this section, we prove Theorem~\ref{thm:ld}, which we restate here.
\begin{theorem}[Theorem~\ref{thm:ld}, restated]
  \thmld
\end{theorem}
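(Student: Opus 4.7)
My plan is to follow and strengthen the potential-function approach of Guruswami-H{\aa}stad-Sudan-Zuckerman so that it holds with high probability rather than merely existentially. The first step will be to exploit the linearity of $\calC$: if $c_0, c_1, \ldots, c_L \in \calC$ all lie in a ball $\calB(x, pn)$, then translating by the codeword $c_0$ (which preserves $\calC$ since $\calC - c_0 = \calC$) yields $0, c_1 - c_0, \ldots, c_L - c_0 \in \calB(x - c_0, pn)$ with $|x - c_0| \le pn$. So it suffices to union-bound, over centers $y \in \calB(0, pn)$ (of which there are at most $\Vol(n, pn) \le 2^{H(p)n}$), the probability that there exist $L$ distinct non-zero codewords in $\calC \cap \calB(y, pn)$.

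For a fixed such $y$, I would apply an $L$-th moment bound:
\[
\Pr[\exists\ L \text{ distinct non-zero codewords in } \calC \cap \calB(y, pn)] \le \sum_{(c_1, \ldots, c_L)} \Pr[c_1, \ldots, c_L \in \calC],
\]
where the sum is over ordered tuples of distinct non-zero vectors in $\calB(y, pn)$. The key random-linear-code estimate is $\Pr[c_1, \ldots, c_L \in \calC] \le 2^{-(1-R)n \cdot \dim \spn(c_1, \ldots, c_L)}$. In the ``generic'' regime where the $c_i$'s are linearly independent (so $\dim \spn = L$), the per-center contribution is at most $\Vol(n, pn)^L \cdot 2^{-(1-R)nL} = 2^{-\eps n L}$; combined with the $\Vol(n, pn) \le 2^{H(p)n}$ factor from the union bound over $y$, this gives a total of $2^{H(p)n - \eps n L}$. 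For $L = H(p)/\eps + 2$ this equals $2^{-2 \eps n}$, yielding the desired exponential decay, and clarifying how the constant $H(p)$ arises in the list size.

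The main obstacle will be the ``non-generic'' cases, where the $L$-tuple lies in a proper subspace $V$ of dimension $d < L$. Here the probability bound $2^{-(1-R)nd}$ is weaker, and the number of $d$-dimensional subspaces of $\FF_2^n$ is large ($\approx 2^{nd}$), so a naive enumeration fails. To handle these, I would use the GHSZ-style insight that a $d$-dimensional subspace $V$ containing $\ge L$ elements inside $\calB(y, pn)$ must be ``geometrically concentrated'': a Plotkin-style averaging (using that every used coordinate of $V$ is set in exactly $|V|/2$ elements of $V$, while the elements of $V$ in the ball each have weight at most $2pn$) bounds the support of $V$, and thus bounds the number of admissible $V$. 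Summing the resulting contributions over $d$ from $\lceil \log_2(L+1) \rceil$ up to $L - 1$ should then also produce an exponentially small bound, completing the proof.
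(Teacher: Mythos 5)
Your proposal takes a genuinely different route from the paper's. The paper strengthens the GHSZ potential function argument: it tracks the scalar $S_\calC = \E_{x}\bigl[2^{\eps n L_\calC(x)/(1+\eps)}\bigr]$ as the $k$ random basis vectors are added one at a time, uses Markov's inequality at each step to show $S_\calC$ stays below $2$ with high probability, and then uses linearity (the fact that $L_\calC(x+c) = L_\calC(x)$ for $c \in \calC$) to convert ``$S_\calC$ small'' into a pointwise list-size bound. No moment computation over tuples of codewords and no enumeration of low-dimensional subspaces is needed. You instead union-bound over centers $y \in \calB(0,pn)$ and over $L$-tuples, stratified by $d = \dim\spn(c_1,\dots,c_L)$. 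Your ``generic'' case $d=L$ is correct and does isolate where $H(p)$ comes from: the $\Vol(n,pn) \le 2^{H(p)n}$ factor from the union bound over centers must be absorbed by the $2^{-\eps n L}$ decay.

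The gap is in the non-generic case, and it is exactly the hard part. The Plotkin-style support bound you sketch is the central tool of \cite{GuruswamiHK11} (Theorem~\ref{thm:ghk11}), and it is precisely why the constant $C_p$ there diverges as $p \to 1/2$. Concretely, the Plotkin averaging $\sum_{v\in V}\Delta(v,y) = |V|\bigl(|\supp V|/2 + \mathrm{wt}(y|_{\overline{\supp V}})\bigr)$ gives $|\supp V| \lesssim 2pn$ only when essentially \emph{all} of $V$ lies in $\calB(y,pn)$; once only a constant fraction of $V$ is in the ball (which is already the situation at $d = \lceil\log_2(L+1)\rceil$ unless $L+1$ is a power of two, and certainly for every $d$ strictly between $\log_2 L$ and $L$), the averaging yields nothing because the far elements may have weight up to $n$. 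Even granting $|\supp V|\le 2pn$, counting such $V$ as $\binom{n}{2pn}2^{2pnd} \approx 2^{H(2p)n + 2pnd}$ against the probability $2^{-(H(p)+\eps)nd}$ requires $H(2p) < (H(p)-2p+\eps)d$; since $H(p)-2p \to 0$ as $p \to 1/2$ while $d \approx \log_2(1/\eps)$, this re-introduces a coupling between $\eps$ and $p$ just like in Theorem~\ref{thm:wootters13}, and for $p \ge 1/4$ the support bound is vacuous. Meanwhile, for $d$ in the middle range $\log_2 L \ll d \ll H(p)/\eps$, neither Plotkin nor the bare count $\Vol(n,pn)^d$ controls the contribution. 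So as written, the plan would at best recover the regime-dependent results of prior work, not the clean $H(p)/\eps + 2$ bound for all $p \in (0,1/2)$; the paper's potential-function evolution is what lets one avoid the subspace-counting thicket entirely.
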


Theorem~\ref{thm:ld} also follows from our more refined result, Theorem~\ref{thm:listsize}.  
However, we present a different proof of Theorem~\ref{thm:ld} first, for two reasons: (1) the results in Section~\ref{sec:rank-metric} follow the argument structure in this section, and (2) the argument in this section shows how to extend the argument in \cite{GuruswamiHSZ02} to a high probability result.
Before it was known that a typical random linear code is $(p,O(1/\varepsilon))$-list decodable, Guruswami, H{\aa}stad, Sudan and Zuckerman \cite{GuruswamiHSZ02} proved the \emph{existence} of binary linear codes of rate $1-H(p) - \eps$ that are $(p,1/\eps)$-list decodable.
However, their argument did not work with high probability, and the authors explicitly stated this as a drawback of their proof.
This section shows how to make the argument in \cite{GuruswamiHSZ02} work with high probability.
We start by reviewing the approach of \cite{GuruswamiHSZ02}, which is the basis of our proof.

\subsection{The approach of \cite{GuruswamiHSZ02}}\label{sec:review}
The approach of \cite{GuruswamiHSZ02} followed from a beautiful potential-function argument, which is the basis of our approach and which we describe here.

\newcommand{\fakeS}{\tilde{S}}
\newcommand{\fakeT}{\tilde{T}}
Let $k \defeq Rn  =(1-H(p)-\eps)n$. We choose vectors $b_1,\ldots,b_k$ one at a time, so that 
the code $\mathcal{C}_i \defeq \spn(b_1,\ldots,b_i)$ remains ``nice": formally, so that a potential function $\fakeS_{\mathcal{C}_i}$ remains small.
Once we have picked all $k$ vectors, we set $\mathcal{C} = \mathcal{C}_k$, and the fact that $\fakeS_{\mathcal{C}_k}$ is small implies list-decodability.

Recall that for a code $\mathcal{C}$ and $x\in \mathbb{F}_2^n$, we set $L_\mathcal{C}(x) = |\mathcal{B}(x,pn)\cap\mathcal{C}|$.
  Define
\[ \fakeS_{\mathcal{C}} \defeq \frac{1}{2^n} \sum_{x \in \mathbb{F}_2^n} 2^{\eps n L_\mathcal{C}(x)}. \]
It is not hard to show that for any vectors $b_1,\ldots,b_i\in \mathbb{F}_2^n$,
\begin{equation}
  \label{eq:r1}
 \E_{b_{i+1} \sim \mathbb{F}_2^n} \left[ \fakeS_{\mathcal{C}_i + \{0,b_{i+1}\}} \vert b_1,\ldots,b_i \right] \le \fakeS_{\mathcal{C}_i}^2.
\end{equation}
That is, when a uniformly random vector $b_{i+1}$ is added to the basis $\{b_1,\ldots,b_{i}\}$, we expect the potential function not to grow too much.
Hence, there exists a choice of vectors $b_1,\dots, b_k$ so that $\fakeS_{\mathcal{C}_{i+1}} \leq \fakeS_{\mathcal{C}_i}^2$ for $i=0,1,\dots,k-1$.%
\footnote{As a technical detail, one needs to be careful that $b_{i+1}\notin \mathcal{C}_i$.
One can guarantee $b_{i+1}\notin\mathcal{C}_i$ by carefully examining the proof of \eqref{eq:r1}, or use \eqref{eq:r1} to get a similar equation where we additionally condition $b_{i+1}\notin \mathcal{C}_i$.
}

As $\mathcal{C}_0 = \{0\}$, we have $\fakeS_{\mathcal{C}_0} \leq 1 + 2^{-n(1 - H(p) - \eps)}$.
Setting $\mathcal{C} = \mathcal{C}_k = \spn(b_1,\ldots,b_k)$, we have
\begin{align*}
 \fakeS_{\mathcal{C}} &\leq \fakeS_{\mathcal{C}_0}^{2^k} 
\leq \left(1 + 2^{-n(1 - H(p) - \eps)}\right)^{2^k} 
\leq \exp\left( 2^{k - n(1 - H(p) - \eps)} \right) 
 = \ e 
\end{align*}
by our choice of $k$.   This implies that
 $\sum_x 2^{\eps n L_\mathcal{C}(x)} \leq e \cdot 2^n,$ 
and in particular, for all $x \in \mathbb{F}_2^n$, we have
$2^{\eps n L_\mathcal{C}(x)} \leq e \cdot 2^n$.
Thus, for all $x$, $L_\mathcal{C}(x) \leq \frac{1}{\eps} + o(1)$, as desired.

The approach of \cite{GuruswamiHSZ02} is extremely clever, but these ideas have not, to the best of our knowledge, been used in subsequent work on the list-decodability of random linear codes.  One reason is that the crux of the argument, which is \eqref{eq:r1}, holds in expectation, and it was not clear how to show that it holds with high probability; thus, the result remained existential, and other techniques were introduced to study typical random codes~\cite{GuruswamiHK11,CheraghchiGV13,Wootters13,RudraW14,RudraW18}. 

\subsection{Proof of Theorem~\ref{thm:ld}}
We improve the argument of \cite{GuruswamiHSZ02} in two ways.  First, we show that in fact, \eqref{eq:r1}
essentially holds with high probability over the choice of $b_{i+1}$, which allows us to use the approach sketched above for random linear codes.
Second, we introduce one additional trick which takes advantage of the linearity of the code in order to reduce the constant in the list size from $1$ to $H(p)$.
Before diving into the details, we briefly describe the main ideas.

The first improvement follows from looking at the potential function in the right way.
In this paragraph, all $o(1)$ terms are exponentially small in $n$.
Our goal is $\fakeS_{\mathcal{C}_k}\le O(1)$.
Write $\fakeS_{\mathcal{C}_i} = 1 + \fakeT_{\mathcal{C}_i}$.
By above, $\fakeT_{\mathcal{C}_0} = \fakeS_{\mathcal{C}_0} - 1 = o(1)$. 
We show that with high probability, for all $i\le k$, we have $\fakeT_{\mathcal{C}_i} = o(1)$. 
In the \cite{GuruswamiHSZ02} argument we have 
\[\E \fakeS_{\mathcal{C}_{i+1}} \leq \fakeS_{\mathcal{C}_i}^2 = (1 + \fakeT_{\mathcal{C}_i})^2 = 1 + 2\fakeT_{\mathcal{C}_i}(1+o(1)), \]
and so $\E \fakeT_{\mathcal{C}_{i+1}} = 2\fakeT_{\mathcal{C}_i}(1+o(1))$.
One can show that, always, $2\fakeT_{\mathcal{C}_i}\le \fakeT_{\mathcal{C}_{i+1}}$.
By Markov's inequality, $\fakeT_{\mathcal{C}_{i+1}} = 2\fakeT_{\mathcal{C}_i}(1+o(1))$ with probability $1-o(1)$, for appropriately chosen $o(1)$ terms. 
Union bounding over the $o(1)$ failure probabilities in the $k$ steps, we conclude that $\fakeT_{\mathcal{C}_i}$ grows roughly as slowly as in the existential argument, giving the desired list decodability.

The second improvement follows from the linearity of the code. In the last step of the \cite{GuruswamiHSZ02} argument, we replace the summation ``$\sum_x$" in $\sum_x 2^{\eps n L_\mathcal{C}(x)} \leq e \cdot 2^n$ with a ``$\forall x$." 
We can save a bit because, by linearity, the contribution $2^{\eps n L_\mathcal{C}(x)}$ is the same for all $x$ in a coset $y + \mathcal{C}$.

Now we go through the details.  
It is convenient to change the definition of the potential function very slightly: losing the tilde, define, for a code $\mathcal{C} \subset \mathbb{F}_2^n$,
\begin{align*}
  A_\mathcal{C}(x) \ &\defeq \  2^{\frac{\eps n L_\mathcal{C}(x)}{1+\eps}} 
  \qquad \text{and} \qquad
  S_\mathcal{C} \ \defeq \    \E_{x\sim\mathbb{F}_2^n} \left[ A_\mathcal{C}(x) \right]
  \qquad \text{and} \qquad
 T_{\mathcal{C}} \ \defeq \ S_\mathcal{C} - 1 .
\end{align*}
As noted above, it is helpful to think of $T_\mathcal{C}$ as a very small term; we would like to show---in accordance with \eqref{eq:r1}---that $T_\mathcal{C}$ approximately doubles each time we add a basis vector.
The term $S_\mathcal{C}$ differs from the term $\fakeS_\mathcal{C}$ above in that $A_\mathcal{C}(x)$ has an extra factor of $\frac{1}{1 + \eps}$ in the exponent.  This is an extra ``slack" term that helps guarantee a high probability result under the same parameters.
However, this definition does not change how the potential function behaves.  In particular, we still have the following lemma:
  \begin{lemma}[Following \cite{GuruswamiHSZ02}]\label{lem:AL-bound}
    For all linear $\mathcal{C} \subseteq \mathbb{F}_2^n$ and all $b \in \mathbb{F}_2^n$,
    \begin{align}
      L_{\mathcal{C}+\{0,b\}}(x) \ &\le \   L_\mathcal{C}(x) + L_\mathcal{C}(x+b) \label{eq:q-bound-1} \\
      A_{\mathcal{C}+\{0,b\}}(x) \ &\le \   A_\mathcal{C}(x) \cdot A_\mathcal{C}(x+b),
      \label{eq:q-bound-2}
    \end{align}
    with equality if and only if $b\notin\mathcal{C}$.
  \end{lemma}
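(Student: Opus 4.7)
The plan is simple: prove the set-theoretic inequality \eqref{eq:q-bound-1} by a union bound using that $\calC + \{0,b\} = \calC \cup (\calC + b)$, and then derive \eqref{eq:q-bound-2} by exponentiating, since $A_\calC(\cdot) = 2^{\eps n L_\calC(\cdot)/(1+\eps)}$ is a strictly monotone function of $L_\calC(\cdot)$.

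For \eqref{eq:q-bound-1}, I would intersect both sides of the identity $\calC + \{0,b\} = \calC \cup (\calC+b)$ with $\calB(x,pn)$ and apply the union bound:
\[ L_{\calC+\{0,b\}}(x) \;\le\; |\calB(x,pn) \cap \calC| \;+\; |\calB(x,pn) \cap (\calC+b)|. \]
The first summand is $L_\calC(x)$ by definition. For the second, translation invariance of the Hamming metric gives $\Delta(x, c+b) = \Delta(x+b, c)$, so the map $c \mapsto c+b$ is a bijection between $\calB(x+b, pn) \cap \calC$ and $\calB(x, pn) \cap (\calC+b)$, showing this summand equals $L_\calC(x+b)$. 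For the equality claim, linearity of $\calC$ forces $\calC$ and $\calC+b$ to be either equal (when $b \in \calC$) or disjoint (when $b \notin \calC$); in the latter case the union above is a disjoint union, so the union bound is tight and we get equality in \eqref{eq:q-bound-1}.

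Inequality \eqref{eq:q-bound-2}, together with its equality case, then follows by applying the strictly monotone map $t \mapsto 2^{\eps n t /(1+\eps)}$ to \eqref{eq:q-bound-1}. There is essentially no obstacle here---this lemma is purely combinatorial bookkeeping---and the only subtlety worth flagging is the direction ``equality implies $b \notin \calC$'': if $b \in \calC$, then $\calC+b = \calC$ gives $L_{\calC+\{0,b\}}(x) = L_\calC(x)$ while the same translation shows $L_\calC(x+b) = L_\calC(x)$, so \eqref{eq:q-bound-1} is strict whenever $L_\calC(x) > 0$, which is the regime in which the lemma is actually invoked in the sequel.
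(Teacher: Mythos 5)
Your proof is correct and takes essentially the same approach as the paper: decompose $\calC+\{0,b\}$ as $\calC \cup (\calC+b)$, apply the union bound, translate by $b$ to identify the second term with $L_\calC(x+b)$, and then exponentiate to get \eqref{eq:q-bound-2}. Your observation that the ``equality iff $b\notin\calC$'' clause technically fails in the degenerate case $b\in\calC$ with $L_\calC(x)=0$ is correct and slightly sharper than the paper's phrasing, though as you note it is immaterial to how the lemma is invoked.
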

\begin{proof}
To see \eqref{eq:q-bound-1}, notice that
    \begin{align*}
      L_{\mathcal{C}+\{0,b\}}(x) 
      \ &= \   \left|\mathcal{B}(x,pn)\cap\left( \mathcal{C}\cup (\mathcal{C}+b) \right)\right| \\
      \ &\le \  |\mathcal{B}(x,pn)\cap \mathcal{C}| + |\mathcal{B}(x,pn)\cap(\mathcal{C}+ b)| \\
      \ &= \  |\mathcal{B}(x,pn)\cap \mathcal{C}| + |\mathcal{B}(x+b,pn)\cap\mathcal{C}| \\
      \ &= \ L_{\mathcal{C}}(x) + L_\mathcal{C}(x+b),
    \end{align*}
with equality in the second line if and only if $b \not\in \mathcal{C}$.
Inequality \eqref{eq:q-bound-2} follows as a consequence of \eqref{eq:q-bound-1}, and this proves the lemma.
\end{proof}

The following lemma is the key step of our proof, establishing that when $T_\mathcal{C}$ is small, it essentially doubles with high probability each time we add a basis vector.

  \begin{lemma}
    \label{lem:ld-t}
    If $\mathcal{C}$ is a fixed linear code,
\[
      \Pr_{b\sim\mathbb{F}_2^n} \left[ S_{\mathcal{C} + \{0,b\}} \ge 1 + 2T_\mathcal{C} + T_\mathcal{C}^{1.5} \right] \ \le \   T_\mathcal{C}^{0.5}.
\]
  \end{lemma}
  \begin{proof}
    By Lemma~\ref{lem:AL-bound}, for all $b$,
    \begin{align*}
      S_{\mathcal{C} + \{0,b\}} 
      \ &= \ \E_x \left[ A_{\mathcal{C} + \{0,b\}}(x) \right] \\
      \ &\le \   \E_x\left[A_\mathcal{C}(x) A_\mathcal{C}(x+b)\right] \\
      \ &= \   \E_x\left[ -1 + A_\mathcal{C}(x) + A_\mathcal{C}(x+b) + (A_\mathcal{C}(x)-1)(A_\mathcal{C}(x+b)-1)\right] \\
      \ &= \   1+2T_{\mathcal{C}} + \E_x\left[ (A_\mathcal{C}(x)-1)(A_\mathcal{C}(x+b)-1)\right].
    \end{align*}
    Over the randomness of $b$ and $x$, we have $x$ and $x+b$ are independently uniform over $\mathbb{F}_2^n$, so
  \begin{align}
    \label{eq:ld-b-1}
      \E_b\E_x\left[(A_\mathcal{C}(x)-1) (A_{\mathcal{C}}(x+b)-1)\right] = \E_{b,x}\left[ A_\mathcal{C}(x)-1 \right] \cdot \E_{b,x}\left[A_\mathcal{C}(x+b)-1\right] = T_{\mathcal{C}}^2.
  \end{align}
    As $A_\mathcal{C}(x)-1$ is always nonnegative, we have, by Markov's inequality,
    \begin{align*}
      \Pr_b \left[ S_{\mathcal{C} + \{0,b\}} \ge 1+2T_{\mathcal{C}}+T_{\mathcal{C}}^{1.5} \right]
      \ &\le \   \Pr_b\left[\E_x\left[(A_\mathcal{C}(x)-1) (A_{\mathcal{C}}(x+b)-1)\right] \ge T_{\mathcal{C}}^{1.5}\right] 
      \ \le \ \frac{T_{\mathcal{C}}^2}{T_\mathcal{C}^{1.5}} 
      \ = \ T_\mathcal{C}^{0.5}.
      \qedhere 
    \end{align*}
  \end{proof}

Iterating Lemma~\ref{lem:ld-t} gives the following.
\begin{lemma}
  \label{lem:tech}
  Let $p\in(0,1/2)$ and $\eps\in(0,1-H(p))$.
  Let $\mathcal{C} \subset \mathbb{F}_2^n$ be a random linear code of rate $1-H(p)-\varepsilon$.
  Then, with probability $1-\exp(-\Omega_\varepsilon(n))$, we have $S_{\mathcal{C}}< 2$.
\end{lemma}
\begin{proof}
  It suffices to prove this when $n$ is sufficiently large in terms of $\varepsilon$.
  As in \S\ref{sec:review}, let $b_1, b_2, \dots, b_k\in\mathbb{F}_2^n$ be independently and uniformly chosen, and let $\mathcal{C}_i=\spn\{b_1,\dots,b_i\}$.
  Consider the sequence
  \begin{align*}
    \delta_0 \ &\defeq \   2^{-n(1-H(p)-\frac{\eps}{1+\eps})} \nonumber\\
    \delta_i \ &\defeq \   2\delta_{i-1} + \delta_{i-1}^{1.5}.
  \end{align*}
  We can verify by induction that for $i\le n(1-H(p)-\eps)$, we have $\delta_i < 2^{i+1}\delta_0 < 2^{-\frac{\eps^2 n}{2}}$. To see this, notice that the base case is trivial, and if $\delta_j < 2^{j+1}\delta_0$ for $j< i$, we have
  \begin{align}
    \label{eq:rlc-delta}
    \delta_i 
    \ &= \   2\delta_{i-1}(1 + \delta_{i-1}^{0.5}) 
    \ = \   2^i\delta_0\cdot \prod_{j=0}^{i-1}(1 + \delta_{j}^{0.5}) 
    \ \le \ 2^i\delta_0 \cdot \exp\left( \sum_{j=0}^{i-1}\delta_j^{0.5} \right)  
    \ < \  2^{i+1}\delta_0 .
  \end{align}
  In the first two equalities, we applied the definitions of $\delta_i$ and $\delta_{i-1},\dots,\delta_1$, respectively. 
  In the first inequality, we used the estimate $1+z\le e^z$, and in the second we used the inductive hypothesis $\delta_j < 2^{-\frac{\eps^2n}{2}}$ for $j<i$ and that $n$ is sufficiently large.
  By this induction, we conclude that, if $k = n(1-H(p)-\eps)$, then $\delta_k < 2^{-\frac{\eps^2 n}{2}}$.

  Let $b_1,\dots,b_k\in\mathbb{F}_2^n$ be randomly chosen vectors, and let $\mathcal{C}_i=\spn(b_1,\dots,b_i)$ with $\mathcal{C}_k = \mathcal{C}$.
  Call $\mathcal{C}_i$ \emph{good} if $T_{\mathcal{C}_i}< \delta_i$.
  We have
\begin{equation}
  T_{\mathcal{C}_0}
  \ = \ S_{\{0\}}-1 
  \ = \ \left( 2^{\frac{\varepsilon n}{1+\varepsilon}} - 1 \right)\cdot \frac{\Vol(n,pn)}{2^n} 
  \ < \ 2^{\frac{\eps n}{1+\eps}}\cdot\frac{2^{H(p)n}}{2^n} 
  \ = \ \delta_0,
  \label{eq:s0}
\end{equation}
  so $\mathcal{C}_0$ is always good.
  On the other hand, by the definition of $\delta_i$ and Lemma~\ref{lem:ld-t}, if $\mathcal{C}_i$ is good, 
  \begin{align}
    \Pr\left[ \mathcal{C}_{i+1}\text{ not good} \right]
    \ = \ \Pr\left[ T_{\mathcal{C}_{i+1}}\ge \delta_{i+1}\right] 
    \ \le \ \Pr\left[ T_{\mathcal{C}_{i+1}} \ge 2T_{\mathcal{C}_i}+T_{\mathcal{C}_i}^{1.5} \right]
    \ \le \ T_{\mathcal{C}_i}^{0.5} 
    \ < \ \delta_{i}^{0.5}.
  \end{align}
  Thus, with probability at least
  \begin{equation} 
    1 - \left( \delta_0^{0.5} + \delta_1^{0.5} + \cdots + \delta_{k-1}^{0.5} \right) 
    \ > \ 1 - k2^{-\varepsilon^2n/4} 
    \ \ge \ 1 - 2^{-\Omega_\varepsilon(n)}
  \end{equation}
  we have $T_{\mathcal{C}_i} < \delta_i$ for all $i=0,\dots,k$.
  In particular, $T_{\mathcal{C}} = T_{\mathcal{C}_k} < \delta_k < 2^{-\frac{\eps^2 n}{2}}$.
  Thus, $S_\mathcal{C}  = 1 + T_\mathcal{C} < 2$ with probability $1 - \exp(-\Omega_{\eps}(n))$, completing the proof of Lemma~\ref{lem:tech}.
\end{proof}

Finally, we prove the following lemma, which implies Theorem~\ref{thm:ld}.

\begin{lemma}
  Any linear code $\mathcal{C}\subseteq\mathbb{F}_2^n$ of rate $R$ with $S_\mathcal{C}< 2$ is $(p, \frac{1-R}{\varepsilon}+1)$-list decodable.
\label{lem:sc}
\end{lemma}
\begin{proof}
  Suppose for sake of contradiction that there exists $x\sar\in \mathbb{F}_2^n$ such that $|\mathcal{B}(x\sar,p n)\cap \mathcal{C}| > \frac{1-R}{\eps}+1$.
  For all $x\in\mathbb{F}_2^n$ and $c\in \mathcal{C}$, we have 
\[|\mathcal{B}(x+c,p n)\cap \mathcal{C}| = |\mathcal{B}(x,p n)\cap (\mathcal{C}-c)| = |\mathcal{B}(x, p n)\cap \mathcal{C}|,\] 
so $|\mathcal{B}(x\sar+c,p n)\cap \mathcal{C}| > \frac{(1-R)}{\eps}+1$ for all $c \in \mathcal{C}$.
  If $S_\mathcal{C}< 2$, then we have
  \begin{align*}
  2^{n+1} \ > \  2^nS_\mathcal{C} 
  \ &= \ \sum_{x\in \mathbb{F}_2^n} \exp_2\left(n\cdot\frac{\eps}{1+\eps}\cdot |\mathcal{B}(x,p n)\cap \mathcal{C}|\right) \\
  \ &\ge \ \sum_{c\in \mathcal{C}} \exp_2\left(n\cdot\frac{\eps}{1+\eps}\cdot |\mathcal{B}(x\sar+c,p n)\cap \mathcal{C}|\right) \\
  \ &\ge \ \sum_{c\in \mathcal{C}} \exp_2\left(n\cdot\frac{\eps}{1+\eps}\cdot \left(\frac{1-R}{\eps}+1\right)\right) \\
  \ &=   \ |\mathcal{C}|\cdot \exp_2\left(n\cdot\frac{1-R+\eps}{1+\eps}\right) \\
  \ &= \ \exp_2\left(n\left(1 + \frac{\eps R}{1+\eps}\right)\right)  .
  \end{align*}
  which is a contradiction for large enough $n$.
\end{proof}
\begin{proof}[Proof of Theorem~\ref{thm:ld}]
  Apply Lemma~\ref{lem:tech} and then Lemma~\ref{lem:sc} for $R=1-H(p)-\varepsilon$.
\end{proof}

\begin{remark}
  We do not see how to extend this proof to larger alphabets.
  If, for example, $q=3$, Lemma~\ref{lem:ld-t} would need to say $\Pr[S_{\mathcal{C}+\{0,b,2b\}}> 1+3T_\mathcal{C}+o(T_\mathcal{C})] < o(1)$.
  However, the current proof would fail to show this, as we could not separate the expectation in \eqref{eq:ld-b-1}; that is, we cannot say
  \begin{align}
    \E_{b,x}&\left[(A_\mathcal{C}(x)-1) (A_{\mathcal{C}}(x+b)-1)(A_{\mathcal{C}}(x+2b)-1)\right]  \nonumber\\
     &=  \E_{b,x}\left[ A_\mathcal{C}(x)-1 \right] \cdot \E_{b,x}\left[A_\mathcal{C}(x+b)-1\right]\cdot \E_{b,x}\left[A_\mathcal{C}(x+2b)-1\right] 
  \label{}
  \end{align}
\end{remark}

\section{Characterizing the list size distribution}\label{sec:listsize}
In this section we establish Theorem~\ref{thm:listsize}.
For a code $\mathcal{C}$, let 
\begin{align*}
  P_\mathcal{C}\ind{\ge\ell} \ &\defeq \   \E_{x\sim \mathbb{F}_2^n}\left[\mathbb{I}[L_\mathcal{C}(x)\ge\ell]\cdot L_{\mathcal{C}}(x)\right].
\end{align*}
In this way, we have $P_\mathcal{C}\ind{\ge\ell}\ge \Pr[L_\mathcal{C}(x)\ge \ell]$ for $\ell\ge 1$. 
Thus, to prove, Theorem~\ref{thm:listsize}, it suffices to prove the following.
\begin{theorem}
  \label{thm:listsize-2}
  Fix $L\ge 1$ and $\gamma \in (0,1)$.
  For all sufficiently large $n$, if $\mathcal{C} \subseteq \mathbb{F}_2^n$ is a random linear code of dimension $k$, then with probability $1-\exp(-\Omega(\gamma n))$, for all $1\le \ell\le L$,
  \begin{align}
    P_\mathcal{C}\ind{\ge\ell}
    \ \le \ \left(2^{-n(1-H(p))}\cdot 2^k\right)^\ell \cdot 2^{\gamma\ell^2n}.
  \label{eq:gen-func-1}
  \end{align}
\end{theorem}

\begin{proof}[Proof of Theorem~\ref{thm:listsize-2}]
The proof of Theorem~\ref{thm:listsize-2} proceeds by induction on $k$.
For the base case $k=0$, we show \eqref{eq:gen-func-1} holds with probability 1.
For $\ell \geq 2$, \eqref{eq:gen-func-1} holds because then $P_{\{0\}}\ind{\geq \ell} = 0$.
When $\ell=1$, we then have $P_{\{0\}}\ind{\geq 1} = 2^{-n}\cdot\Vol(n,pn) \leq 2^{-n(1 - H(p))}$.

Now that we have established the base case of $k=0$, we proceed by induction.
Lemma~\ref{lem:gen-func-1} provides the inductive step; similar to the approach in \S\ref{sec:ld}, it shows that at every step the ``expected behavior" holds with high probability.
  \begin{lemma}
    \label{lem:gen-func-1}
	Let $\gamma > 0$, and suppose that $\mathcal{C} \subseteq\mathbb{F}_2^n$ is a linear code of dimension $k$ such that for all $1\le \ell\le L$, we have
    \begin{align}
      P_\mathcal{C}\ind{\ge\ell} \ \le \ \left(2^{-n(1-H(p))}\cdot 2^k\right)^\ell \cdot 2^{\gamma\ell^2n}.
      \label{eq:gen-func-lem}
    \end{align}
    Then, for a uniformly chosen $b\in\mathbb{F}_2^n$, with probability at least $1- 2L^3\cdot 2^{-2\gamma n}$ over the choice of $b$, we have, for all $1\le \ell\le L$,
    \begin{align}
      P_{\mathcal{C}+\{0,b\}}\ind{\ge\ell} \ \le \ \left(2^{-n(1-H(p))}\cdot 2^{k+1}\right)^\ell \cdot 2^{\gamma\ell^2n}.
      \label{eq:gen-func-lem-2}
    \end{align}
  \end{lemma}
  \begin{proof}
    For all codes $\mathcal{C}\subset \{0,1\}^n$, we have
    \begin{align}
      P_\mathcal{C}\ind{\ge 1}
      \ &= \ 
      2^{-n}\sum_{i\ge 1}^{} i\cdot |\{x:L_\mathcal{C}(x)=i\}|
      \ = \ 2^{-n}|\{(x,c):x\in\mathbb{F}_2^n, c\in \mathcal{C}, \Delta(x,c)\le pn\}|\nonumber\\
      \ &= \ 2^{-n}\cdot |\mathcal{C}|\cdot \Vol(n,pn)
      \ \le \ 2^{-n(1-H(p))} \cdot 2^k.
    \end{align}
    so \eqref{eq:gen-func-lem}, and thus \eqref{eq:gen-func-lem-2} (which is just \eqref{eq:gen-func-lem} for $C+\{0,b\}$), always holds for $\ell = 1$.

    Now fix a code $\mathcal{C}$ satisfying \eqref{eq:gen-func-lem}, and assume that $\ell\ge 2$.
    We have, for any $b \in \mathbb{F}_2^n$,
    \begin{align}
      P_{\mathcal{C}+\{0,b\}}\ind{\ge\ell}
      \ &= \ \E_{x}\left[ \mathbb{I}[L_{\mathcal{C}+\{0,b\}}(x)\ge \ell]\cdot L_{\mathcal{C}+\{0,b\}}(x) \right] \nonumber \\ 
      \ &\leq \ \E_{x}\left[ \mathbb{I}[L_{\mathcal{C}}(x) + L_{\mathcal{C}}(x+b)\ge \ell] \cdot (L_{\mathcal{C}}(x) + L_{\mathcal{C}}(x+b)) \right] \nonumber \qquad \text{(By Lemma~\ref{lem:AL-bound})} \\ 
      \ &= \ \E_{x}\left[ \mathbb{I}[L_{\mathcal{C}}(x) + L_{\mathcal{C}}(x+b)\ge \ell] \cdot L_{\mathcal{C}}(x)\right] +  \E_x\left[\mathbb{I}[L_{\mathcal{C}}(x) + L_{\mathcal{C}}(x+b)\ge \ell]\cdot L_{\mathcal{C}}(x+b)\right] \nonumber\\ 
      \ &= \ 2\cdot \E_{x}\left[ \mathbb{I}[L_{\mathcal{C}}(x) + L_{\mathcal{C}}(x+b)\ge \ell] \cdot L_{\mathcal{C}}(x) \right] \qquad \left(\text{Set $x'=x+b$ in second $\E$}\right)\nonumber\\
      \ &= \ 2\cdot \E_{x}\left[ \mathbb{I}[L_{\mathcal{C}}(x) \ge \ell] \cdot L_{\mathcal{C}}(x)  + \sum_{i=0}^{\ell-1} \mathbb{I}[L_\mathcal{C}(x)=i, L_\mathcal{C}(x+b)\ge\ell-i]\cdot L_\mathcal{C}(x) \right] \nonumber\\
      \ &= \ 2\cdot P_\mathcal{C}\ind{\ge\ell} + 2\cdot \sum_{i=0}^{\ell-1} i\cdot \E_{x}\left[  \mathbb{I}[L_\mathcal{C}(x)=i]\cdot \mathbb{I}[L_\mathcal{C}(x+b)\ge\ell-i]\right]  \nonumber\\
      &=: (\star).
      \label{eq:gen-func-star}
    \end{align}
Taking expectations of $(\star)$ over a random $b$, we have,
    \begin{align*}
      \E_{b\sim\mathbb{F}_2^n}\left[ (\star) \right]
      \ &= \ 2\cdot P_\mathcal{C}\ind{\ge\ell} + 2\cdot \sum_{i=0}^{\ell-1} i\cdot \E_{b,x}\left[  \mathbb{I}[L_\mathcal{C}(x)=i]\cdot \mathbb{I}[L_\mathcal{C}(x+b)\ge\ell-i]\right] \nonumber\\
      \ &= \ 2\cdot P_\mathcal{C}\ind{\ge\ell} + 2\cdot \sum_{i=0}^{\ell-1} i\cdot \Pr_{b,x}  [L_\mathcal{C}(x)=i]\cdot\Pr_{b,x}[L_\mathcal{C}(x+b)\ge\ell-i] \nonumber\\
      \ &< \ 2\cdot P_\mathcal{C}\ind{\ge\ell} + 2L\cdot \sum_{i=0}^{\ell-1} P_\mathcal{C}\ind{\ge i}\cdot P_\mathcal{C}\ind{\ge\ell-i}
    \end{align*}
    The second equality uses that $x$ and $b$ are independent and uniform on $\mathbb{F}_2^n$ so $x$ and $x+b$ are independent.
    The inequality uses that $i < L$, $\Pr_{b,x}[L_\mathcal{C}(x)=i]\le P_\mathcal{C}\ind{\ge i}$, and $\Pr_{b,x}[L_{\mathcal{C}}(x+b)=\ell-i]\le P_\mathcal{C}\ind{\ge \ell-i}$.
    Continuing, we bound
    \begin{align*}
      \E_b\left[ (\star) - 2\cdot P_\mathcal{C}\ind{\ge\ell} \right]
      \ &< \   2 L \cdot \sum_{i=1}^{\ell-1} P_\mathcal{C}\ind{\ge i}\cdot P_\mathcal{C}\ind{\ge\ell-i} \nonumber\\
      \ &\le \  2 L \cdot \sum_{i=1}^{\ell-1} \left(2^{-n(1-H(p))}\cdot 2^k\right)^{i+(\ell-i)}\cdot2^{\gamma (i^2+(\ell-i)^2)n} \nonumber & \text{(By \eqref{eq:gen-func-lem})}\\
      \ &\le \  2 L^2\cdot \left(2^{-n(1-H(p))}\cdot 2^{k}\right)^\ell\cdot2^{\gamma(\ell^2-2)n}
    \end{align*}
    By definition of $(\star)$, we have $(\star)-2P_\mathcal{C}\ind{\ge \ell}\ge 0$ always, so we may apply Markov's inequality to obtain
      \[\Pr_b\left[ (\star) -2P\ind{\ge\ell}_\mathcal{C} \ge \left(2^{-n(1-H(p))}\cdot 2^{k}\right)^\ell\cdot2^{\gamma\ell^2n} \right] \ \le \   2L^2\cdot 2^{-2\gamma n}. \]
      As $P_{\mathcal{C}+\{0,b\}}\ind{\ge \ell}\le (\star)$ by \eqref{eq:gen-func-star}, we have
      \begin{equation}\Pr_b\left[ P_{\mathcal{C}+\{0,b\}}\ind{\ge\ell} -2P\ind{\ge\ell}_\mathcal{C} \ge \left(2^{-n(1-H(p))}\cdot 2^{k}\right)^\ell\cdot2^{\gamma\ell^2n} \right] \ \le \   2L^2\cdot 2^{-2\gamma n}.
    \label{eq:gen-func-3}
    \end{equation}
    Thus, with probability at least $1-2L^22^{-2\gamma n}$, we have
    \begin{align}
      P_{\mathcal{C}+\{0,b\}}\ind{\ge\ell} 
      \ &\le \   2P\ind{\ge\ell}_\mathcal{C} + \left(2^{-n(1-H(p))}\cdot 2^{k}\right)^\ell\cdot2^{\gamma\ell^2n} \nonumber\\
      \ &\le \   3\left(2^{-n(1-H(p))}\cdot 2^{k}\right)^\ell\cdot2^{\gamma\ell^2n}   \nonumber\\
      \ &\le \   \left(2^{-n(1-H(p))}\cdot 2^{k+1}\right)^\ell\cdot2^{\gamma\ell^2n}.
      \label{eq:gen-func-4}
    \end{align}
    The second inequality uses the assumption \eqref{eq:gen-func-lem} and the final inequality uses $\ell\ge 2$.
    Union bounding over all $1\le \ell\le L$, we have \eqref{eq:gen-func-4} holds for all $1\le\ell\le L$ with probability $1-2L^32^{-2\gamma n}$.
    This completes the proof of Lemma~\ref{lem:gen-func-1}.
  \end{proof}

  Returning to the proof of Theorem~\ref{thm:listsize-2}, call a code $\mathcal{C}$ of dimension $k$ \emph{good} if \eqref{eq:gen-func-1} holds for all $1\le \ell\le L$.
  Lemma~\ref{lem:gen-func-1} states that if $\mathcal{C}$ is good, then $\mathcal{C} + \{0,b\}$ fails to be good with probability at most $2L^32^{-2\gamma n}$ over the choice of a uniformly random $b \in \mathbb{F}_2^n$.

Since we have already shown that $\{0\}$ is good with probability 1 at the beginning of the proof, it follows from the union bound that a random linear binary code $\mathcal{C} = \spn(b_1,\dots,b_k)$ fails to be good with probability at most $k\cdot 2L^32^{-2\gamma n}$, which is less than $2^{-\gamma n}$ for $n$ sufficiently large.  This completes the proof of Theorem~\ref{thm:listsize-2}.
\end{proof}

To finish the section, we demonstrate how Theorem~\ref{thm:listsize} implies Theorem~\ref{thm:ld}.
\begin{corollary}[Theorem~\ref{thm:ld}]
  \label{cor:listsize-ld}
\thmld
\end{corollary}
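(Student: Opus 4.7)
The plan is to deduce Corollary~\ref{cor:listsize-ld} directly from Theorem~\ref{thm:gen-func} by pitting the latter's upper bound on $Q_\calC\ind{\ge \ell}$ against a lower bound on the same quantity that must hold whenever $\calC$ fails to be $(p,H(p)/\eps+2)$-list decodable. The linearity of $\calC$ does the work on the lower-bound side: if even a single point $x^\star$ has large list size, then so do all $2^k$ shifts $x^\star + c$ with $c\in\calC$, producing a whole coset of ``bad'' centers.

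Concretely, I would fix an integer $\ell^\star$ slightly above $H(p)/\eps+2$---for instance $\ell^\star \defeq \lfloor H(p)/\eps+2\rfloor+1$, so that $\eps \ell^\star - H(p) - \eps > \eps$---and apply Theorem~\ref{thm:gen-func} with $L=\ell^\star$ and $\gamma\defeq \eps/(4(\ell^\star)^2)$ (both $\in (0,1)$ and both depending only on $p,\eps$). Since $k=(1-H(p)-\eps)n$, the factor $2^{-n(1-H(p))+k}$ equals $2^{-\eps n}$, so \eqref{eq:gen-func-1} specializes at $\ell=\ell^\star$ to
\begin{align*}
Q_\calC\ind{\ge \ell^\star} \ \le \ 2^{-\eps n \ell^\star + \gamma (\ell^\star)^2 n} \ = \ 2^{-\eps n \ell^\star + \eps n/4},
\end{align*}
with probability $1-\exp(-\Omega_\eps(n))$.

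For the matching lower bound, suppose $\calC$ is \emph{not} $(p,H(p)/\eps+2)$-list decodable. Then some $x^\star\in\FF_2^n$ has $L_\calC(x^\star)\ge \ell^\star$, and by the shift-invariance used in the proof of Theorem~\ref{thm:ld}, the same bound holds at every $x^\star+c$ with $c\in\calC$. Proposition~\ref{prop:usefulfacts}(\ref{prop:useful:reformulation}) then yields
\begin{align*}
Q_\calC\ind{\ge \ell^\star} \ \ge \ 2^{-n}\cdot 2^k \cdot \ell^\star \ = \ \ell^\star \cdot 2^{-(H(p)+\eps)n}.
\end{align*}
Combining the two inequalities and taking $\log_2$ gives $(\eps\ell^\star - H(p) - \eps)n \le \eps n/4 - \log_2 \ell^\star$, whose left-hand side exceeds $\eps n$ by the choice of $\ell^\star$; this is a contradiction for sufficiently large $n$. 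The entire argument is essentially packaging once Theorem~\ref{thm:gen-func} is in hand; the only real care needed is choosing $\ell^\star$ and $\gamma$ with enough slack to absorb the $\log_2 \ell^\star$ lower-order term, and I do not anticipate a genuine obstacle.
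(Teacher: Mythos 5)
Your proposal is correct and follows essentially the same route as the paper: apply Theorem~\ref{thm:gen-func} to upper-bound $Q_\calC\ind{\ge\ell}$ at an $\ell$ slightly above $H(p)/\eps+2$, then use the coset-shift observation $L_\calC(x^\star+c)=L_\calC(x^\star)$ to lower-bound $Q_\calC\ind{\ge\ell}$ by $\ell\cdot 2^{k-n}$ if some center is bad, and derive a contradiction for large $n$. If anything, your version is slightly more careful than the paper's---you take $\ell^\star$ to be an integer and choose $\gamma$ explicitly, whereas the paper plugs in the possibly non-integral $\ell=H(p)/\eps+2$ and leaves $\gamma\ll L^{-3}$ unspecified---but the content is the same.
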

\begin{proof}
  Let $L = H(p)/\eps + 2$ and choose $\gamma = \frac{\varepsilon}{2L^2}$, and let $\mathcal{C} \subseteq \mathbb{F}_2^n$ be a random linear code of dimension $k = n( 1 - H(p) - \eps)$.
By Theorem~\ref{thm:listsize}, with probability at least $1-\exp(-\gamma n)$, code $\mathcal{C}$ satisfies \eqref{eq:gen-func-1} for $k = n(1 - H(p) - \eps)$ and all $1\le\ell\le L$.
  Choosing $\ell = L$, we have 
  \begin{align*}
  \Pr_{x\sim \mathbb{F}_2^n}[L_\mathcal{C}(x)\ge L]
  \ \le \ 2^{L(-n(1-H(p))+k) + \gamma L^2n}
  \ = \ 2^{-nL\eps + \frac{\varepsilon}{2} n}
  \ < \ 2^{-(n-k)}
  \end{align*}
  where the last equality uses the definition of $L$ and $\gamma$.
  Suppose there exists $x \in \mathbb{F}_2^n$ such that $L_\mathcal{C}(x)\ge L$. Then $L_\mathcal{C}(x+c)\ge L$ for all $c\in C$, so that $\Pr[L_\mathcal{C}(x)\ge L]\ge 2^{-(n-k)}$.
  This is a contradiction, so there are no $x \in \mathbb{F}_2^n$ such that $L_{\mathcal{C}}(x)\ge L$.
\end{proof}

\section{List-decoding rank metric codes}\label{sec:rank-metric}
Here, we prove Theorem~\ref{thm:rank-metric}, which is restated below.
\begin{theorem}[Theorem~\ref{thm:rank-metric}, restated]
  \thmrankmetric
\end{theorem}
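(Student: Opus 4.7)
\textbf{Proof proposal for Theorem~\ref{thm:rank-metric}.}
The plan is to mimic the proof of Theorem~\ref{thm:ld} in Section~\ref{sec:ld}, with the alphabet $\FF_2^n$ replaced by the space $\FF_2^{m\times n}$ of $m\times n$ binary matrices, Hamming balls replaced by rank balls $\calB_{2,R}(X,p)$, and the binary entropy $H(p)$ replaced by its rank-metric analog
\[
  h_b(p) \ \defeq \ p + bp - bp^2 \ = \ 1-(1-p)(1-bp),
\]
which captures the exponent of the rank-ball volume: a standard estimate on the number of $m\times n$ binary matrices of rank at most $pn$ gives $|\calB_{2,R}(0,p)| \le 2^{mn\, h_b(p) + o(mn)}$, because such a matrix has rank $r\le pn$ and the count is dominated by the $r=pn$ term $\approx 2^{pn(m+n-pn)} = 2^{mn(p+bp-bp^2)}$.

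Concretely, for a linear code $\calC\subseteq\FF_2^{m\times n}$ and $X\in\FF_2^{m\times n}$, define $L_\calC(X)\defeq |\calB_{2,R}(X,p)\cap\calC|$ and the potential
\[
  A_\calC(X)\defeq 2^{\frac{\eps mn\, L_\calC(X)}{1+\eps}}, \qquad S_\calC\defeq \E_{X\sim\FF_2^{m\times n}}[A_\calC(X)].
\]
The rank metric is translation-invariant, so the proof of Lemma~\ref{lem:AL-bound} carries over verbatim: for any $B\in\FF_2^{m\times n}$,
$L_{\calC+\{0,B\}}(X)\le L_\calC(X)+L_\calC(X+B)$, with equality when $B\notin\calC$, and hence $A_{\calC+\{0,B\}}(X)\le A_\calC(X)\cdot A_\calC(X+B)$. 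The initial bound from \eqref{eq:s0} becomes
\[
  S_{\{0\}}\ \le\ 1 + 2^{-mn(1-h_b(p)-\eps/(1+\eps))},
\]
which is of the form $1+\delta_0$ with $\delta_0$ exponentially small in $mn$, by the choice of rate $R = (1-p)(1-bp)-\eps = 1-h_b(p)-\eps$.

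I would then reprove Lemma~\ref{lem:ld-t} in this setting: setting $T_\calC\defeq S_\calC - 1$ and $B_\calC(X)\defeq A_\calC(X)-1\ge 0$, the key pairwise independence step
\[
  \E_{B}\E_{X}[B_\calC(X)\,B_\calC(X+B)]\ =\ T_\calC^2
\]
goes through because, when $B$ and $X$ are independent uniform elements of $\FF_2^{m\times n}$, the pair $(X,X+B)$ is independent uniform on $(\FF_2^{m\times n})^2$. Markov's inequality then gives, for $T_\calC<1$,
\[
  \Pr_B\bigl[S_{\calC+\{0,B\}}> 1 + 2T_\calC + T_\calC^{1.5}\bigr]\ <\ T_\calC^{0.5}.
\]
Iterating this bound along a sequence $B_1,\dots,B_{Rmn}$ of i.i.d.\ uniform matrices and controlling the recursion $\delta_i \defeq 2\delta_{i-1}+\delta_{i-1}^{1.5}$ exactly as in \eqref{eq:rlc-delta} shows $S_\calC\le 2$ with probability $1-\exp(-C_\eps mn)$ for a random linear rank-metric code $\calC$ of rate $R$.

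Finally, to convert $S_\calC\le 2$ into the list-size bound, I would repeat the contradiction step from the end of the proof of Theorem~\ref{thm:ld}: if some $X^*$ had list size exceeding $\frac{h_b(p)}{\eps}+2$, then $\FF_2$-linearity gives $L_\calC(X^*+C)>\frac{h_b(p)}{\eps}+2$ for every $C\in\calC$, so
\[
  2^{mn+1}\ \ge\ 2^{mn}S_\calC\ \ge\ |\calC|\cdot \exp_2\!\Bigl(mn\cdot\tfrac{\eps}{1+\eps}\bigl(\tfrac{h_b(p)}{\eps}+2\bigr)\Bigr)\ =\ \exp_2\!\Bigl(mn\bigl(1+\tfrac{\eps}{1+\eps}(1-h_b(p)-\eps)\bigr)\Bigr),
\]
a contradiction for large $mn$ since $1-h_b(p)-\eps = (1-p)(1-bp)-\eps>0$ for the parameters considered. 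The main thing to be careful about is the rank-ball volume estimate; unlike the Hamming case where $\Vol(n,pn)\le 2^{H(p)n}$ is clean, the rank-ball count has lower-order terms in $m,n$ (Gaussian binomial factors), so I would need to verify that these are absorbed into the $\exp(-C_\eps mn)$ slack and the choice of ``sufficiently large $m,n$''; modulo that bookkeeping, the argument transfers directly. No other step appears materially harder than in the Hamming case, and in particular no feature of the argument uses $n=m$ or depends on $b$ other than through $h_b(p)$.
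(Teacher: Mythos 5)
Your proposal is correct and takes essentially the same route as the paper's proof: overload the potential-function argument from Theorem~\ref{thm:ld} to $\FF_2^{m\times n}$, using the rank-ball volume estimate (the paper cites Lemma~\ref{lem:rank-metric-vol} from \cite{GadouleauY08}, which gives the cleaner two-sided bound $q^{mnh_b(p)}\le|\calB_R(X,pn)|\le 4\,q^{mnh_b(p)}$ and so the constant $4$ you flagged is the only bookkeeping discrepancy, and it is indeed harmless). Incidentally, you are more consistent than the printed proof about putting $mn$ rather than $n$ in the exponent of $A_\calC$ and in the final contradiction display --- the paper's $n$'s there are typos, since with $n$ the final inequality would not yield a contradiction when $m\gg n$.
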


  Let $k = Rmn$.
  Let $Y_1,\dots, Y_k\in\mathbb{F}_2^{m\times n}$ be i.i.d. uniformly random matrices, and let $\mathcal{C}_i = \spn(Y_1,\dots,Y_i)$ so that $\mathcal{C}_k$ is a random linear rank metric code of rate $R$.
  We overload the notation from \S\ref{sec:ld} to be matrix-valued:
  for a rank metric code $\mathcal{C}\subseteq\mathbb{F}_2^{m\times n}$ and $X\in\mathbb{F}_2^{m\times n}$, let
  \begin{align*}
    L_\mathcal{C}(X) \ &\defeq \ |\mathcal{B}_R(X,p)\cap \mathcal{C}|  \nonumber\\
    A_\mathcal{C}(X) \ &\defeq \  \exp_2\left(\frac{\eps}{1+\eps} mn L_\mathcal{C}(X)\right) \nonumber\\
    S_\mathcal{C} \ &\defeq \    \E_{X\sim\mathbb{F}_2^{m\times n}} \left[ A_\mathcal{C}(X) \right] \nonumber\\
    T_\mathcal{C} \ &\defeq \ S_\mathcal{C} - 1.  
  \label{}
  \end{align*}
Using this notation, we have the following analog to Lemma~\ref{lem:AL-bound}, the proof of which follows identically to that of Lemma~\ref{lem:AL-bound}.
    \begin{lemma}
      \label{lem:rank-AL-bound}
      We have, for all linear $\mathcal{C}$ and all $Y\in\mathbb{F}_2^{m\times n}$,
      \begin{align}
        L_{\mathcal{C}+\{0,Y \}}(X) \ &\le \   L_\mathcal{C}(X) + L_\mathcal{C}(X+Y) \label{eq:rank-q-bound-1} \\
        A_{\mathcal{C}+\{0,Y\}}(X) \ &\le \   A_\mathcal{C}(X) \cdot A_\mathcal{C}(X+Y),
        \label{eq:rank-q-bound-2}
      \end{align}
      with equality if and only if $Y\notin\mathcal{C}$.
    \end{lemma}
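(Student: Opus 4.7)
The plan is to mirror the proof of Lemma~\ref{lem:AL-bound} exactly, replacing Hamming balls with rank-metric balls and using the fact that the rank distance is translation invariant. The two inequalities are purely set-theoretic (for the first) and a monotone consequence (for the second), so no new probabilistic content is needed.

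First I would prove \eqref{eq:rank-q-bound-1}. The key observation is that $\calC + \{0,Y\} = \calC \cup (\calC + Y)$ as sets, so
\begin{align*}
L_{\calC+\{0,Y\}}(X) &= |\calB_R(X,pn) \cap (\calC \cup (\calC+Y))| \\
&\le |\calB_R(X,pn) \cap \calC| + |\calB_R(X,pn) \cap (\calC+Y)|.
\end{align*}
The second term equals $|\calB_R(X+Y,pn) \cap \calC|$ because the rank metric is translation-invariant: $Z \in \calB_R(X,pn) \cap (\calC+Y)$ iff $Z - Y \in \calC$ and $\rank(X-Z) \le pn$, iff $Z-Y \in \calB_R(X+Y, pn) \cap \calC$ after the change of variables $W = Z-Y$. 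Thus the right-hand side equals $L_\calC(X) + L_\calC(X+Y)$, as claimed. The inequality in the middle step is an equality precisely when $\calB_R(X,pn) \cap \calC$ and $\calB_R(X,pn) \cap (\calC+Y)$ are disjoint; since $\calC$ is linear, these sets are disjoint if and only if $\calC \cap (\calC+Y) = \emptyset$, which (again by linearity) is equivalent to $Y \notin \calC$.

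Then \eqref{eq:rank-q-bound-2} is immediate from \eqref{eq:rank-q-bound-1}: since $A_\calC(X) = \exp_2(\tfrac{\eps}{1+\eps} n L_\calC(X))$ is a (multiplicative) exponential of $L_\calC(X)$,
\begin{align*}
A_{\calC+\{0,Y\}}(X) = \exp_2\!\left(\tfrac{\eps}{1+\eps} n L_{\calC+\{0,Y\}}(X)\right) \le \exp_2\!\left(\tfrac{\eps}{1+\eps} n (L_\calC(X) + L_\calC(X+Y))\right) = A_\calC(X)\, A_\calC(X+Y),
\end{align*}
with equality exactly when \eqref{eq:rank-q-bound-1} holds with equality, i.e.\ when $Y \notin \calC$.

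There is essentially no main obstacle here, since the argument is a direct transcription of the Hamming-metric case; the only thing to verify carefully is translation invariance of the rank-metric ball, which follows from $\rank((X+Y)-(Z+Y)) = \rank(X-Z)$.
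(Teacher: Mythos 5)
Your proof is correct and is exactly the argument the paper has in mind: the paper simply remarks that ``the proof of Lemma~\ref{lem:rank-AL-bound} follows identically to that of Lemma~\ref{lem:AL-bound},'' and your write-up is the faithful rank-metric transcription of that proof, decomposing $\calC+\{0,Y\}=\calC\cup(\calC+Y)$, applying the union bound, and using translation invariance of the rank-metric ball. One small caveat (shared with the paper's phrasing): the ``only if'' direction of the equality condition is not literally true for every $X$ (if $Y\in\calC$ but $L_\calC(X)=0$, equality still holds); the statement is correct as ``equality for all $X$ iff $Y\notin\calC$,'' and only the ``if'' direction is ever used downstream.
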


  We need the following approximation of the size of a rank metric ball.
  We state it for general $q$ for an application in Appendix~\ref{app:rank-metric-lb}, although we only use the binary case here.
  \begin{lemma}[\cite{GadouleauY08}]
    \label{lem:rank-metric-vol}
    Let $X\in\mathbb{F}_q^{m\times n}$. Then
    \begin{align}
      q^{mn(p+pb-p^2b)} \ \le \ |\mathcal{B}_R(X,p)| \ < \ 4\cdot q^{mn(p+pb-p^2b)}
    \label{}
    \end{align}
  \end{lemma}
Lemma~\ref{lem:rank-metric-vol} yields the inequality
  \begin{align}
    S_{\{0\}} \ &\le \ 1 + 4\cdot 2^{-mn(1-p-pb+p^2b - \frac{\varepsilon}{1+\varepsilon})}.
  \label{eq:rank-metric-s0}
  \end{align}

  Following the proof of Theorem~\ref{thm:ld}, we prove the following lemma which mirrors Lemma~\ref{lem:ld-t}.
    \begin{lemma}
      \label{lem:rank-metric-t}
      Suppose that $\mathcal{C}$ is fixed and satisfies $T_\mathcal{C} < 1$, so that $S_{\mathcal{C}}<2$. Then
      \begin{align*}
        \Pr_{Y\sim\mathbb{F}_2^{m\times n} } \left[ S_{\mathcal{C} + \{0,Y\}} \ge 1 + 2T_\mathcal{C} + T_\mathcal{C}^{1.5} \right] \ &\le \   T_\mathcal{C}^{0.5}.
      \label{}
      \end{align*}
    \end{lemma}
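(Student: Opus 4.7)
The plan is to mirror the proof of Lemma~\ref{lem:ld-t} almost verbatim, with vectors $x,b\in\FF_2^n$ replaced by matrices $X,Y\in\FF_2^{m\times n}$ and Lemma~\ref{lem:AL-bound} replaced by Lemma~\ref{lem:rank-AL-bound}. The entire argument of Lemma~\ref{lem:ld-t} is driven by two ingredients: a pointwise bound of the form $A_{\calC+\{0,b\}}(x)\le A_\calC(x)A_\calC(x+b)$, and the fact that over jointly uniform $(x,b)$, the pair $(x,x+b)$ consists of two independent uniform elements of the ambient group. Both ingredients transfer directly to the rank-metric setting, since $(\FF_2^{m\times n},+)$ is just a finite abelian group and Lemma~\ref{lem:rank-AL-bound} is already stated.

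First, I would apply Lemma~\ref{lem:rank-AL-bound} to write
\begin{align*}
S_{\calC+\{0,Y\}}
&= \E_X\bigl[A_{\calC+\{0,Y\}}(X)\bigr]
\le \E_X\bigl[A_\calC(X)\,A_\calC(X+Y)\bigr] \\
&= \E_X\bigl[(1+B_\calC(X))(1+B_\calC(X+Y))\bigr]
= 1 + 2T_\calC + \E_X\bigl[B_\calC(X)\,B_\calC(X+Y)\bigr],
\end{align*}
where in the last step I used that $X\mapsto X+Y$ is a measure-preserving bijection on $\FF_2^{m\times n}$, so $\E_X[B_\calC(X+Y)]=T_\calC$.

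Next, I would compute the expectation over $Y$. Over jointly uniform $(X,Y)\in(\FF_2^{m\times n})^2$, the pair $(X,\,X+Y)$ is distributed as two independent uniform matrices (this is the matrix analogue of the key identity in \eqref{eq:ld-b-1}). Therefore
\begin{align*}
\E_Y\E_X\bigl[B_\calC(X)B_\calC(X+Y)\bigr]
= \E_{Y,X}[B_\calC(X)]\cdot\E_{Y,X}[B_\calC(X+Y)]
= T_\calC^2.
\end{align*}
Since $B_\calC$ is nonnegative, the quantity $\E_X[B_\calC(X)B_\calC(X+Y)]$ is a nonnegative function of $Y$ with expectation $T_\calC^2$, so by Markov's inequality
\begin{align*}
\Pr_Y\bigl[\E_X[B_\calC(X)B_\calC(X+Y)] > T_\calC^{1.5}\bigr]
< \frac{T_\calC^2}{T_\calC^{1.5}} = T_\calC^{0.5}.
\end{align*}
Combining this with the bound $S_{\calC+\{0,Y\}}\le 1 + 2T_\calC + \E_X[B_\calC(X)B_\calC(X+Y)]$ gives the desired conclusion.

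I do not expect any real obstacle: there is no step in the Hamming-metric proof that uses specific structure of $\FF_2^n$ beyond its being a finite abelian group under addition equipped with a translation-invariant "radius" functional (Hamming distance, or here rank distance). The only thing worth verifying is the translation invariance used in passing from $B_\calC(X+Y)$ back to $T_\calC$, which follows because $\calB_R(X,pn)=\calB_R(0,pn)+X$ and hence $L_\calC(X+Y)$ has the same distribution as $L_\calC(X)$ when $X$ is uniform. The remark following the Hamming-metric argument (about $q>2$ breaking the independence of $X,X+Y,X+2Y$) does not apply here because we are working over $\FF_2$, where $\{0,Y\}$ already generates the one-dimensional subspace spanned by $Y$.
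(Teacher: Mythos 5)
Your proposal is correct and matches the paper's proof essentially line for line: bound $S_{\calC+\{0,Y\}}$ via Lemma~\ref{lem:rank-AL-bound}, expand $(1+B_\calC(X))(1+B_\calC(X+Y))$, use that $(X,X+Y)$ are independent and uniform to compute $\E_Y\E_X[B_\calC(X)B_\calC(X+Y)]=T_\calC^2$, and finish with Markov. The explicit note about translation invariance of the rank ball is a harmless (and correct) piece of bookkeeping that the paper leaves implicit.
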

    \begin{proof}
      By Lemma~\ref{lem:rank-AL-bound}, for all $Y$,
      \begin{align*}
        S_{\mathcal{C} + \{0,Y\}} 
	\ &= \  \E_X\left[ A_{\mathcal{C} + \{0,Y\}}(X) \right] \\
        \ &\le \   \E_X\left[A_\mathcal{C}(X) A_\mathcal{C}(X+Y)\right] \nonumber\\
        \ &= \   \E_X\left[ -1+ A_\mathcal{C}(X) + A_\mathcal{C}(X+Y) + (A_\mathcal{C}(X)-1)(A_\mathcal{C}(X+Y)-1)\right] \nonumber\\
        \ &= \   1+2T_{\mathcal{C}} + \E_X\left[ (A_\mathcal{C}(X)-1)(A_\mathcal{C}(X+Y)-1)\right].
      \end{align*}
      Over the randomness of $Y$ and $X$, we have $X$ and $X+Y$ are statistically independent and uniform over $\mathbb{F}_2^{m \times n}$, so we have
      \begin{align*}
        \E_Y\E_X\left[(A_\mathcal{C}(X)-1)(A_{\mathcal{C}}(X+Y)-1)\right] 
        \ = \ \E_{Y,X}\left[ A_\mathcal{C}(X)-1 \right] \cdot \E_{Y,X}\left[A_\mathcal{C}(X+Y)-1\right]
        \ = \ T_{\mathcal{C}}^2.
      \end{align*}
      By Markov's inequality,
      \begin{align*}
        \Pr_Y \left[ S_{\mathcal{C} + \{0,Y\}} \ge 1+2T_{\mathcal{C}}+T_{\mathcal{C}}^{1.5} \right]
        \ &\le \   \Pr_Y\left[\E_X\left[(A_\mathcal{C}(X)-1)(A_{\mathcal{C}}(X + Y)-1)\right] \ge T_{\mathcal{C}}^{1.5}\right] 
        \ \le \ \frac{T_{\mathcal{C}}^2}{T_\mathcal{C}^{1.5}} 
        \ = \ T_\mathcal{C}^{0.5}.
      \end{align*}
This proves the lemma.
    \end{proof}

Continuing with the proof structure of Theorem~\ref{thm:ld}, we prove the following lemma, which mirrors Lemma~\ref{lem:tech}.
  \begin{lemma}
    \label{lem:rank-metric-tech}
    Let $p\in(0,1)$, $\varepsilon\in(0,\half)$.
    Let $n < m$ be positive integers with $b=n/m$ and with $n$ sufficiently large.
    With probability $1-\exp(-\Omega_\varepsilon(n))$, a random linear rank metric code $\mathcal{C} \subseteq \mathbb{F}_2^{m \times n}$ of rate $R = (1-p)(1-bp)-\eps$ satisfies $S_\mathcal{C} < 2$.
  \end{lemma}

  \begin{proof}[Proof of Lemma~\ref{lem:rank-metric-tech}]

    As in the proof of Lemma~\ref{lem:tech}, consider the sequence
    \begin{align*}
      \delta_0 \ &\defeq \   4\cdot \exp_2\left(-mn\left((1-p)(1-bp)-\frac{\eps}{1+\eps}\right)\right) \nonumber\\
      \delta_i \ &\defeq \   2\delta_{i-1} + \delta_{i-1}^{1.5}
    \end{align*}
    Like in Lemma~\ref{lem:tech}, we verify by induction that for $i\le mn((1-p)(1-bp)-\eps)$, we have $\delta_i < 2^{i+1}\delta_0 < 2^{-\frac{\eps^2 mn}{2}}$.  The base case is trivial (assuming $mn$ is sufficiently large), and if $\delta_j < 2^{j+1}\delta_0$ for $j< i$, we have
      \begin{align}
        \delta_i 
        \ &= \   2\delta_{i-1}(1 + \delta_{i-1}^{0.5}) 
        \ = \   2^i\delta_0\cdot \prod_{j=0}^{i-1}(1 + \delta_{j}^{0.5}) 
        \ \le \ 2^i\delta_0 \cdot \exp\left( \sum_{j=0}^{i-1}\delta_j^{0.5} \right)  
        \ < \  2^{i+1}\delta_0 .
      \label{}
      \end{align}
      In the first inequality, we used the estimate $1+z\le e^z$, and in the second we used the inductive hypothesis $\delta_j < 2^{-\frac{\eps^2 mn}{2}}$ for $j<i$ and that $mn$ is sufficiently large.
      In particular, if $k = mn((1-p)(1-bp)-\eps)$, then $\delta_k < 2^{-\frac{\eps^2 mn}{2}}$.

      Let $Y_1,\dots,Y_k\in\mathbb{F}_2^{m\times n}$ be randomly chosen matrices, and let $\mathcal{C}_i=\spn(Y_1,\dots,Y_i)$ with $\mathcal{C}_k = \mathcal{C}$.
      By Lemma~\ref{lem:rank-metric-t}, conditioned on a fixed $\mathcal{C}_i$ satisfying $T_{\mathcal{C}_i}< \delta_i$, then with probability at most $T_{\mathcal{C}_i}^{0.5}$, which is at most $\delta_i^{0.5}$, we have $T_{\mathcal{C}_{i+1}} > \delta_{i+1}$.
      Furthermore, $T_{\mathcal{C}_0} < \delta_0$ by \eqref{eq:rank-metric-s0}.
      Thus, with probability at least
      \begin{align}
        1 - \left( \delta_0^{0.5} + \delta_1^{0.5} + \cdots + \delta_{k-1}^{0.5} \right) 
        \ > \ 1 - ke^{-\varepsilon^2mn/4} 
        \ \ge \ 1 - 2^{-\Omega(mn)}
      \label{}
      \end{align}
      we have $T_{\mathcal{C}_i} < \delta_i$ for all $i$.
      In particular, this implies $T_{\mathcal{C}} = T_{\mathcal{C}_k} < \delta_k < 2^{-\frac{\eps^2 mn}{2}}$.
      Thus, we have $S_\mathcal{C}  = 1 + T_\mathcal{C} < 2$ with probability $1 - \exp(-\Omega_{\eps}(n))$, completing the proof of Lemma~\ref{lem:rank-metric-tech}.
  \end{proof}
  
We now can finish the proof of Theorem~\ref{thm:rank-metric}.
  \begin{proof}[Proof of Theorem~\ref{thm:rank-metric}]
  By Lemma~\ref{lem:rank-metric-tech}, it suffices to prove that $S_\mathcal{C} < 2$ implies $(p,\frac{p+bp-bp^2}{\eps}+2)$-list decodability.
  Suppose for sake of contradiction that a code $\mathcal{C}$ satisfies $S_\mathcal{C}< 2$ and there exists $X\sar\in \mathbb{F}_2^{m \times n}$ such that $|\mathcal{B}_R(X\sar,p)\cap \mathcal{C}| > \frac{p+bp-bp^2}{\eps}+2$.
  We know that for all $C\in \mathcal{C}$, we have $|\mathcal{B}_R(X^*+C,p)\cap \mathcal{C}| = |\mathcal{B}_R(X^*,p)\cap (\mathcal{C}-C)| = |\mathcal{B}_R(X^*, p)\cap \mathcal{C}|$, so $|\mathcal{B}_R(X\sar+C,p)\cap \mathcal{C}| > \frac{p+bp-bp^2}{\eps}+2$ for all $C\in \mathcal{C}$.

  If $S_\mathcal{C}< 2$, then we have
  \begin{align}
  2^{mn+1} \ > \ 2^{mn}\cdot S_\mathcal{C} 
  \ &= \ \sum_{X\in \mathbb{F}_2^{m \times n}} \exp_2\left(mn\cdot\frac{\eps}{1+\eps}\cdot |\mathcal{B}_R(X,p)\cap \mathcal{C}|\right) \nonumber\\
  \ &\ge \ \sum_{C\in \mathcal{C}} \exp_2\left(mn\cdot\frac{\eps}{1+\eps}\cdot |\mathcal{B}_R(X\sar+C,p)\cap \mathcal{C}|\right) \nonumber\\
  \ &\ge \ \sum_{C\in \mathcal{C}} \exp_2\left(mn\cdot\frac{\eps}{1+\eps}\cdot \left(\frac{p+bp-bp^2}{\eps}+2\right)\right) \nonumber\\
  \ &=   \ |\mathcal{C}|\cdot \exp_2\left(mn\cdot\frac{p+bp-bp^2+2\eps}{1+\eps}\right) \nonumber\\
  \ &= \ \exp_2\left(mn\left(1 + \frac{\eps}{1+\eps}((1-p)(1-bp)-\eps)\right)\right)  .
  \end{align}
  which is a contradiction for large enough $mn$.
  \end{proof}

\section{Conclusion}
In this work, we have given an improved analysis of the list-decodability of random linear binary codes.  Our analysis 
works for all values of $p$, and also obtains improved bounds on the list size as the rate approaches list-decoding capacity.  In particular, not only do our bounds improve on previous work for random linear codes, but they show that random linear codes are more list-decodable than completely random codes, in the sense that the list size is strictly smaller. 
Our techniques are quite simple, and strengthen an argument of \cite{GuruswamiHSZ02} to hold with high probability.  In order to demonstrate the applicability of these techniques, we use them to (a) obtain more information about the distribution of list sizes of random linear codes and (b) to prove a similar result for random linear rank-metric codes, improving a recent result of \cite{GuruswamiR17}.

We end with some open questions raised by our work.
\begin{enumerate}
\item With the exception of Theorem~\ref{thm:rank-metric-lb}, our results---both our upper bounds and our lower bounds---hold only for binary alphabets.  We conjecture that analogous results, and in particular list decoding random linear codes with list size $C/\varepsilon$ for $C<1$, hold over larger alphabets. 
\item We showed that random linear binary codes of rate $1 - H(p) - \eps$ are with high probability $(p,L)$ list-decodable with $L \leq H(p)/\eps$.  The lower bounds of \cite{Rudra11, GuruswamiN14}  show that we must have $L \geq C/\eps$, but the constant $C$ is much smaller than $H(p)$.  Thus, we still do not know what the correct leading constant is for random linear codes.
\item Finally, there are currently no known explicit constructions of capacity-achieving binary list-decodable codes for general $p$.  It is our hope that this work---which gives more information about the structure of linear codes which achieve list-decoding capacity---could lead to progress on this front.  Given that we don't know how to efficiently check if a given code is $(p,L)$-list-decodable, even an efficient Las Vegas construction (as opposed to a Monte Carlo construction) would be interesting.
\end{enumerate}

\section*{Acknowledgements}
We thank anonymous reviewers for helpful comments on a previous version of this manuscript.

\bibliographystyle{alpha}
\newcommand{\etalchar}[1]{$^{#1}$}

\appendix

\section{Lower bound on the list size of random codes}\label{app:lb}
In this appendix, we prove 
Theorem~\ref{thm:lb}, which states
that, for any $\gamma > 0$ and $\eps$ small enough, the smallest list size $L$ such that a completely random code is with high probability $(p,L)$-list decodable, satisfies $L\in [(1-\gamma)/\eps-1, 1/\eps]$.
In other words, the list size of $1/\eps$ given by the classical list decoding capacity theorem is tight in the leading constant factor.
We prove this by tightening the second moment argument of Guruswami and Narayanan \cite{GuruswamiN14}.
We follow the exact same proof outline, differing only in how we bound one expression.  This improvement appears as Lemma~\ref{lem:lb-int}.
In both Appendices~\ref{app:lb} and \ref{app:rank-metric-lb}, we use the following useful fact.
\begin{fact}\label{fact:secondmoment}
  Let $Z$ be a random variable that takes only nonnegative values. Then
  \begin{align*}
    \Pr[Z = 0] \ \le \ \frac{\Var Z}{\E[Z]^2}.
  \end{align*}
\end{fact}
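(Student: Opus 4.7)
The plan is to derive this inequality as a one-line consequence of Chebyshev's inequality, with the only subtle issue being the degenerate case $\E[Z]=0$. First I would dispose of that case: since $Z$ is nonnegative, $\E[Z]=0$ forces $Z=0$ almost surely, so $\Var Z = 0$ as well; under the convention $0/0 = +\infty$ (or just by restricting to the nontrivial situation $\E[Z]>0$, which is the only regime where the bound is ever invoked in Appendices~\ref{app:lb} and \ref{app:rank-metric-lb}) the inequality holds trivially.

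For the main case $\E[Z]>0$, the key observation is the set containment
\begin{align*}
\{Z = 0\} \ \subseteq \ \{\, |Z - \E[Z]| \ge \E[Z] \,\},
\end{align*}
which holds because, at any $\omega$ with $Z(\omega)=0$, we have $|Z(\omega)-\E[Z]| = \E[Z]$. Taking probabilities and applying Chebyshev's inequality with threshold $t=\E[Z]$,
\begin{align*}
\Pr[Z=0] \ \le \ \Pr\!\left[\,|Z-\E[Z]|\ge\E[Z]\,\right] \ \le \ \frac{\Var Z}{\E[Z]^2},
\end{align*}
which is exactly the claim.

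There is essentially no obstacle here; the inequality is a direct corollary of Chebyshev and the only care needed is to notice the containment above rather than, say, applying Markov to $Z$ itself (which would give $\Pr[Z \ge \E[Z]]$ rather than $\Pr[Z=0]$). One could alternatively prove a slightly sharper variant via Cauchy--Schwarz, writing $\E[Z]^2 = \E[Z\cdot\mathbb{I}[Z>0]]^2 \le \E[Z^2]\Pr[Z>0]$ to obtain $\Pr[Z=0]\le \Var Z/\E[Z^2]$, but since $\E[Z^2]\ge\E[Z]^2$ this implies the stated bound, so the Chebyshev route is the cleanest path and is what I would write up.
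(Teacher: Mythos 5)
Your proof is correct: the containment $\{Z=0\}\subseteq\{|Z-\E[Z]|\ge\E[Z]\}$ followed by Chebyshev's inequality is exactly the standard argument behind this fact, which the paper states without proof as the basis of its second-moment computations in Appendices~\ref{app:lb} and \ref{app:rank-metric-lb}. Your handling of the degenerate case $\E[Z]=0$ and the remark about the sharper Cauchy--Schwarz variant are both fine and require no changes.
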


For this appendix, we assume that $pn$ is an integer.
We need a few bounds on Hamming balls and their intersections.  We first use the following classical estimate on the volume of the Hamming ball (see, for example, \cite{MacwilliamsS77}[pp. 308-310]).
\begin{lemma}
  \label{lem:vol-bound-1}
  Suppose that $1\le pn\le n/2$. Then
  \begin{align}
    \frac{2^{H(p)n}}{\sqrt{8np(1-p)}} 
    \ &\le \ \binom{n}{pn}
    \ \le \  \Vol(n,pn) 
    \ \le \ 2^{H(p)n}.
  \label{}
  \end{align}
\end{lemma}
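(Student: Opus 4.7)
The plan is to establish the three inequalities in order of difficulty.

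The middle inequality $\binom{n}{pn}\le \Vol(n,pn)$ is immediate from the definition $\Vol(n,pn) = \sum_{i=0}^{pn}\binom{n}{i}$, since the right-hand side is a sum of nonnegative terms one of which is $\binom{n}{pn}$.

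For the rightmost inequality $\Vol(n,pn)\le 2^{H(p)n}$, I would use the standard entropy / change-of-measure trick. Start from the identity $1 = (p+(1-p))^n = \sum_{i=0}^n \binom{n}{i}p^i(1-p)^{n-i}$. Since $p\le 1/2$, we have $p/(1-p)\le 1$, so for every $i\le pn$ the factor $(p/(1-p))^{i-pn}$ is at least $1$. Multiplying each term of $\Vol(n,pn)$ by this factor and pulling the constant $p^{-pn}(1-p)^{-(1-p)n} = 2^{H(p)n}$ out of the sum yields
\[
\Vol(n,pn) \ \le\ 2^{H(p)n}\sum_{i=0}^{pn}\binom{n}{i}p^i(1-p)^{n-i}\ \le\ 2^{H(p)n}.
\]

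For the leftmost inequality $\binom{n}{pn}\ge 2^{H(p)n}/\sqrt{8np(1-p)}$, the natural tool is Stirling's formula with explicit error bounds, e.g.\ $n! = \sqrt{2\pi n}(n/e)^n \exp(\eta_n)$ with $0<\eta_n\le 1/(12n)$. Expanding $\binom{n}{pn} = n!/((pn)!\,((1-p)n)!)$ and simplifying, the polynomial and exponential Stirling factors combine to give
\[
\binom{n}{pn}\ \ge\ \frac{2^{H(p)n}}{\sqrt{2\pi n p(1-p)}}\cdot\exp\!\left(-\frac{1}{12pn}-\frac{1}{12(1-p)n}\right).
\]
Since $\sqrt{2\pi}<\sqrt{8}$, there is numerical slack between $\sqrt{2\pi np(1-p)}$ and $\sqrt{8np(1-p)}$ to absorb the exponential correction whenever $n$ is not too small.

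The only real obstacle is handling the extreme cases $pn=1$ or $(1-p)n=1$, where Stirling's correction is largest. These can be verified by direct calculation: for $pn=1$, for example, $\binom{n}{1}=n$ and $nH(1/n)\le \log_2 n + \log_2 e$, so it suffices that $\sqrt{8(1-1/n)}\ge e$, which holds for all $n\ge 14$; the very small values of $n$ can be checked by hand. In all other regimes the Stirling slack is more than enough, completing the argument.
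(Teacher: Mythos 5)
The paper does not prove Lemma~\ref{lem:vol-bound-1}; it simply cites it as a classical estimate from MacWilliams--Sloane, so there is no ``paper proof'' to compare against. Your self-contained derivation is correct and is the standard textbook argument: the middle inequality is definitional, the upper bound $\Vol(n,pn)\le 2^{H(p)n}$ is the usual entropy / tilting trick applied to the binomial expansion of $(p+(1-p))^n$, and the lower bound uses Stirling's formula with Robbins' error bounds. The numerics check out: the slack $\sqrt{8/(2\pi)}=\sqrt{4/\pi}$ gives $\ln\sqrt{4/\pi}\approx 0.121$, while the Stirling correction $\tfrac{1}{12pn}+\tfrac{1}{12(1-p)n}$ is at most $\tfrac{1}{12}\approx 0.083$ once $pn\ge 2$ (and hence $(1-p)n\ge 2$, since $p\le 1/2$), so the Stirling argument closes immediately in that regime. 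You correctly isolate $pn=1$ as the only problematic case (note $(1-p)n=1$ forces $n=2$, which falls under $pn=1$ as well), your bound $nH(1/n)\le\log_2(en)$ reduces it to $\sqrt{8(1-1/n)}\ge e$, i.e.\ $n\ge 14$, and the remaining $n\in\{2,\dots,13\}$ with $pn=1$ is a finite check that does go through (with equality at $n=2$). So the plan is complete and sound; you would just want to record those dozen direct verifications if writing this out in full.
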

We also need a bound on the volume of the \em intersections \em of two Hamming balls whose centers are distance $d$ apart, which we denote by
\[    
\Vol(n,pn;d) \ \defeq \ |\mathcal{B}(0,pn)\cap \mathcal{B}(1^d0^{n-d},pn)|.
\]
Lemma~\ref{lem:vol-bound-2} below bounds $\Vol(n,pn;d)$ above.
\begin{lemma}
  \label{lem:vol-bound-2}
  Suppose $0 < p < 1/2$.
  Let
  \begin{align*}
    \alpha_{p} \ \defeq \ \half \log_{2}\frac{1}{4p(1-p)}.
  \label{}
  \end{align*}
  There exists a constant $C_p > 0$ depending only on $p$ such that, for sufficiently large $n$, for all $1\le d \le n$, we have
  \begin{align}
    \frac{\Vol(n,pn;d)}{\Vol(n,pn)}
    \ &\le \  2^{-\alpha_{p}d} \cdot C_p\sqrt{n}.
\label{eq:intersection}
  \end{align}
\end{lemma}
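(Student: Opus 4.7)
\textbf{Proof proposal for Lemma~\ref{lem:vol-bound-2}.} The plan is to reduce the problem to proving the pointwise bound $\Vol(n,pn;d) \leq 2^{H(p)n - \alpha_p d}$ (with no polynomial factor) and then divide by the lower bound on $\Vol(n,pn)$ provided by Lemma~\ref{lem:vol-bound-1}; the $\sqrt{n}$ factor in \eqref{eq:intersection} comes entirely from the denominator, not the numerator.

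To bound $\Vol(n,pn;d)$, write $e = 1^d 0^{n-d}$, so $\Vol(n,pn;d) = |\{y \in \FF_2^n : |y| \leq pn \textand |y \oplus e| \leq pn\}|$. I would use the standard Chernoff-type smoothing: for any $z \in (0,1]$, the inequality $\II[|y|\leq pn] \leq z^{|y|-pn}$ holds pointwise, since $z^{|y|-pn} \geq 1$ when $|y|\leq pn$ and the right-hand side is always nonnegative. Multiplying the analogous bounds for $|y|$ and $|y\oplus e|$ gives
\[
\Vol(n,pn;d) \leq z^{-2pn} \sum_{y \in \FF_2^n} z^{|y| + |y\oplus e|}.
\]
The key simplification is the identity $|y| + |y \oplus e| = d + 2|y \cap e^c|$, which decouples the sum across $\supp(e)$ and its complement: the $d$ coordinates in $\supp(e)$ contribute a free factor of $2^d$, while the remaining $n-d$ coordinates contribute $(1+z^2)^{n-d}$. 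This yields
\[
\Vol(n,pn;d) \leq z^{-2pn + d} \cdot 2^d \cdot (1+z^2)^{n-d}.
\]

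The next step is to optimize: I would plug in $z = \sqrt{p/(1-p)} \in (0,1)$ (valid since $p<1/2$), for which $1+z^2 = 1/(1-p)$. A short algebraic simplification then gives
\[
z^{-2pn+d} \cdot 2^d \cdot (1+z^2)^{n-d} = p^{-pn}(1-p)^{-(1-p)n} \cdot \bigl(2\sqrt{p(1-p)}\bigr)^d = 2^{H(p)n} \cdot 2^{-\alpha_p d},
\]
using $\alpha_p = -\log_2(2\sqrt{p(1-p)}) = \tfrac12 \log_2 \tfrac{1}{4p(1-p)}$. Dividing by the lower bound $\Vol(n,pn) \geq 2^{H(p)n}/\sqrt{8np(1-p)}$ from Lemma~\ref{lem:vol-bound-1} yields \eqref{eq:intersection} with $C_p = \sqrt{8p(1-p)}$.

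I do not expect serious obstacles here: both the Chernoff smoothing and the identity $|y|+|y\oplus e| = d + 2|y \cap e^c|$ are elementary, and the algebraic simplification at the choice $z = \sqrt{p/(1-p)}$ is routine. The only mild subtlety is that when $d > 2pn$ the Hamming balls are disjoint by the triangle inequality, so $\Vol(n,pn;d) = 0$ and the bound holds trivially; for $d \leq 2pn$ the computation above applies. The factor $\sqrt{n}$ is essentially optimal (up to constants) because it arises only from the sharp polynomial slack in the lower bound on $\Vol(n,pn)$ from Stirling's formula, which is a feature of the estimate in Lemma~\ref{lem:vol-bound-1} rather than the intersection bound itself.
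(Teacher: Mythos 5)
Your proof is correct and reaches the same bound with the same constant $C_p = \sqrt{8p(1-p)}$, but the route is genuinely different from the paper's. The paper counts points in the intersection by decomposing each $x$ according to its weight $i$ on $\supp(e)$ and $j$ on the complement, overcounts to $2^d \cdot \Vol(n-d, pn - d/2)$, bounds this by $\exp_2\bigl(f(d)\bigr)$ with $f(d) = d + H_2\bigl(\tfrac{pn - d/2}{n-d}\bigr)(n-d)$, and then shows $f'(d) \le -\alpha_p$ by a derivative computation (Claim~\ref{claim:derivative}). You instead run a Chernoff/exponential-moment argument: bound each indicator pointwise by $z^{|y|-pn}$, exploit the linear identity $|y| + |y\oplus e| = d + 2|y \cap e^c|$ to factor the sum over coordinates, and then optimize $z = \sqrt{p/(1-p)}$ in closed form. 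Your approach has the advantage that it avoids the entropy-function calculus entirely, replaces the overcounting step and the derivative bound with a single algebraic simplification, and---as you note but do not fully exploit---makes the case split on $d > 2pn$ unnecessary, since the smoothing bound $\II[|y|\le pn] \le z^{|y|-pn}$ is valid for all $y$ and all $d$, so the final inequality $\Vol(n,pn;d) \le 2^{H(p)n - \alpha_p d}$ holds unconditionally (it is trivially true when the intersection is empty). The paper's derivative argument is arguably more robust to perturbations of the problem (e.g.\ if the radii of the two balls differed), whereas yours is shorter and more transparent here because the optimal tilt $z$ turns out not to depend on $d$.
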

\begin{proof}
  Let $w=1^d0^{n-d}$ be a string of weight $d$, so that we  
  wish to upper bound the number of points $x$ in $D \defeq \mathcal{B}(0,pn)\cap \mathcal{B}(w,pn)$.

First, notice that if $pn < d/2$, then $\mathcal{B}(0,pn) \cap \mathcal{B}(w, pn) = \emptyset$, so \eqref{eq:intersection} holds.  Thus, assume that $pn \geq d/2$.
  Suppose that $x$ restricted to the first $d$ coordinates has weight $i$, and that $x$ restricted to the last $n-d$ coordinates has weight $j$.
Such an $x$ is in $D$ if and only if 
$(d-i)+j\le pn$ and $i+j\le pn$.
For a fixed $i,j$, the number of such $x$ is 
exactly $\binom{d}{i}\binom{n-d}{j}$.
 Thus,
  \begin{align*}
    \Vol(n,pn;d)
    \ &= \ \sum_{i=0}^d\sum_{j=0}^{pn-\max(i,d-i)} \binom{d}{i}\binom{n-d}{j} \nonumber\\
    \ &\le \ \sum_{i=0}^d\binom{d}{i}\sum_{j=0}^{pn-d/2} \binom{n-d}{j} \\
    \ &= \ \sum_{i=0}^d\binom{d}{i}\Vol(n-d, np-(d/2)) \nonumber\\
    \ &= \ 2^d\cdot \Vol(n-d, np-(d/2)) \nonumber\\
    \ &\le \ \exp_2\left(d + H_2\left( \frac{pn-\frac{d}{2}}{n-d} \right)\cdot (n-d) \right),
  \label{}
  \end{align*}
using Lemma~\ref{lem:vol-bound-1} in the final line.
  Define
  \begin{align*}
    f(d) \ &\defeq \ d + H_2\left( \frac{pn-\frac{d}{2}}{n-d} \right)\cdot (n-d).
  \label{}
  \end{align*}
When $d = 0$, \eqref{eq:intersection} holds for sufficiently large $n$.  Thus, we show that the derivative $f'(d)$ is sufficiently negative that \eqref{eq:intersection} continues to hold as $d$ increases.  
\begin{claim} \label{claim:derivative} We have $f'(d) \le -\alpha_{p}$.  \end{claim}
\begin{proof}
Let $g(x) = x\log_2(x)$. We know that $g'(x) = \log_2(x) + \log_2(e)$.
We compute that
  \begin{align}
    \label{eq:deriv-1}
    f(d) \ &= \ d + (pn-d/2)\log_2\left(\frac{n-d}{pn-d/2} \right)+ (n-pn-d/2)\log_2\left(\frac{n-d}{n-pn-d/2}\right)\nonumber\\ 
    \ &= \ d + g(pn-d/2) + g(n-pn-d/2) - g(n-d)\nonumber\\ 
    f'(d) \ &= \ 1 + \half\log_2\frac{(pn-d/2)(n-pn-d/2)}{(n-d)^2}.
  \end{align}
  We also know that
  \begin{align}
    \label{eq:deriv-2}
    \frac{(pn-d/2)(n-pn-d/2)}{(n-d)^2} \ \le \  p(1-p).
  \end{align}
Indeed, cross multiplying and expanding \eqref{eq:deriv-2} yields $p(1-p)n^2 - \frac{nd}{2} + \frac{d^2}{4} \leq (n^2 - 2nd + d^2)p(1 - p)$, which simplifies to $p(1 - p) \leq \frac{ n/2 - d/4 }{2n - d} =1/4$, which is always true.  
Since the left-hand side is always positive (using the assumption $p < 1/2$ and $d/2 \leq pn$), we may take logs of both sides and conclude that
  \begin{align*}
    f'(d) 
    \ &\le \ 1 + \half\log_2 p(1-p) 
    \ = \ -\alpha_{p} .\qedhere
  \label{}
  \end{align*}
\end{proof}
  Returning to the proof of Lemma~\ref{lem:vol-bound-2}, as $f(0) = H(p)n$ and $f'(d)\le -\alpha_p$ by Claim~\ref{claim:derivative}, we have $f(d) \le H(p)n-\alpha_{p}d$ for all $d\ge 0$, and in particular all integers $d\ge0$.
  Taking $C_p = \sqrt{8p(1-p)}$, and applying Lemma~\ref{lem:vol-bound-1}, we conclude
  \begin{align*}
    \frac{\Vol(n,pn;d)}{\Vol(n,pn)}
    \ &\le \ \frac{\exp_2(H(p)n -\alpha_p d)}{\exp_2(H(p)n) / \sqrt{8np(1-p)}} 
    \ \le \  2^{-\alpha_{p}d} \cdot C_p\sqrt{n}.\qedhere
  \end{align*}
\end{proof}

In Lemma~\ref{lem:lb-int}, we record computation used in the proof of Theorem~\ref{thm:lb}.
The proof of Theorem~\ref{thm:lb} uses the second moment method to show that, with high probability there are a positive number of ``bad events'', by showing the expectation of the number of bad events (denoted $W$) is large, and the variance $\Var[W]$ of the number of bad events is comparatively small.
The bound in Lemma~\ref{lem:lb-int} gives a tighter bound on individual terms in the expansion of $\Var[W]$ than the corresponding bound used in \cite{GuruswamiN14}.
This helps us obtain a better overall list size lower bound.
\begin{lemma}
  \label{lem:lb-int}
  Let $0<p<1/2$, $1\le \ell \le L$ be integers and $\mu \defeq 2^{-n}\Vol(n,pn)$.
  Let $C_p$ and $\alpha_p$ be given by Lemma~\ref{lem:vol-bound-2}.
  Let 
  \begin{align}
    a,b, x_1,\dots, x_\ell, x_{\ell+1},\dots, x_L, y_{\ell+1},\dots,y_L 
  \end{align}
  be chosen independently and uniformly at random from $\mathbb{F}_2^n$ (so there are $2 + 2L-\ell$ points total).
  Let $\calE$ be the event
  \begin{align*}
    \calE \ &= \    \left(\bigwedge_{i=1}^\ell [\Delta(a,x_i) \le pn] \wedge [\Delta(b,x_i)\le pn] \right)\bigwedge \left( \bigwedge_{i=\ell+1}^L [\Delta(a,x_i) \le pn] \wedge [\Delta(b,y_i)\le pn] \right).
  \end{align*}
  Then
  \begin{align}
    \Pr[\calE] 
    \ \le \ \min\left(\mu^{2L-\ell+1}, 2^{-n}\cdot \mu^{2L-\ell}\cdot C_p^L\cdot n^{L/2}\cdot (1 + 2^{-\alpha_{p}\ell})^n\right).
  \label{eq:lb-int}
  \end{align}
\end{lemma}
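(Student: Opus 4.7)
The plan is to condition on $(a,b)$, compute $\Pr[\calE\mid a,b]$ exactly by counting, and then average; this reduces the problem to bounding a single one-dimensional sum in two different ways, one for each half of the $\min$ in the statement.

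First, I would observe that given $(a,b)$, the event $\calE$ factors over the independent variables $x_1,\ldots,x_L,y_{\ell+1},\ldots,y_L$. For each $i\le\ell$ the constraint places $x_i\in\calB(a,pn)\cap\calB(b,pn)$, whose size is $\Vol(n,pn;\Delta(a,b))$ by translation invariance; for each $i>\ell$ the constraints place $x_i$ and $y_i$ independently in balls of size $V\defeq\Vol(n,pn)$. Writing $V(d)\defeq\Vol(n,pn;d)$ and using that $\Delta(a,b)\sim\Binomial(n,1/2)$, averaging over $(a,b)$ gives
\[
\Pr[\calE] \;=\; \mu^{2(L-\ell)}\cdot 2^{-n(\ell+1)}\sum_{d=0}^n \binom{n}{d}\, V(d)^\ell.
\]
Both bounds in the lemma then reduce to estimates on $\sum_d\binom{n}{d}V(d)^\ell$.

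For the first bound ($\mu^{2L-\ell+1}$), I would pull out the factor $V(d)^{\ell-1}\le V^{\ell-1}$ and invoke the double-counting identity
\[
\sum_{d=0}^n \binom{n}{d} V(d) \;=\; V^2,
\]
which follows by counting pairs $(x,v)\in\FF_2^n\times\FF_2^n$ with $x\in\calB(0,pn)$ and $v\in\calB(x,pn)$: grouping by $\wt(v)=d$ gives the left-hand side, while summing over $x$ first gives $V\cdot V$. This yields $\sum_d\binom{n}{d}V(d)^\ell\le V^{\ell+1}$, and substituting back produces $\Pr[\calE]\le \mu^{2L-\ell+1}$ cleanly. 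For the second bound, I would invoke Lemma~\ref{lem:vol-bound-2} to get $V(d)\le V\cdot C_p\sqrt{n}\cdot 2^{-\alpha_p d}$, so that $V(d)^\ell\le V^\ell\cdot C_p^\ell n^{\ell/2}\cdot 2^{-\alpha_p\ell d}$. The binomial theorem then evaluates $\sum_d\binom{n}{d}2^{-\alpha_p\ell d}=(1+2^{-\alpha_p\ell})^n$, and substitution gives $\Pr[\calE]\le \mu^{2L-\ell}\cdot 2^{-n}\cdot C_p^\ell n^{\ell/2}(1+2^{-\alpha_p\ell})^n$; the polynomial prefactor can be enlarged from exponent $\ell$ to $L$ (absorbing a factor into $C_p$, redefined as $\max(C_p,1)$ if necessary) to match the stated form.

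The one nontrivial step is recognizing the identity $\sum_d\binom{n}{d}V(d)=V^2$: without it, the obvious bound $V(d)\le V$ only yields $\mu^{2L-\ell}$, losing precisely the extra $\mu$-factor that makes the first estimate sharper than the second in the regime relevant for the lower-bound argument, and which constitutes the improvement over the bounding in~\cite{GuruswamiN14}. Everything else is a direct computation from the volume estimates of Lemmas~\ref{lem:vol-bound-1} and~\ref{lem:vol-bound-2} combined with the binomial theorem.
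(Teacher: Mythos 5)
Your proof is correct. The exact formula $\Pr[\calE] = \mu^{2(L-\ell)}\,2^{-n(\ell+1)}\sum_d\binom{n}{d}V(d)^\ell$ is right, and both downstream bounds follow from it as you describe.

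For the second term of the $\min$ (the part that is genuinely new relative to \cite{GuruswamiN14}), your derivation is essentially identical to the paper's: condition on $\Delta(a,b)=d$, invoke Lemma~\ref{lem:vol-bound-2} to bound $V(d)/V$, and finish with the binomial theorem. For the first term ($\mu^{2L-\ell+1}$), however, your route differs from the paper's. The paper — following \cite{GuruswamiN14} — never forms the sum; it simply drops the constraints $\Delta(b,x_i)\le pn$ for $i\ge 2$, notes that the remaining requirements $[\Delta(a,x_1)\le pn \wedge \Delta(b,x_1)\le pn]$, $[\Delta(a,x_i)\le pn]_{i\ge 2}$, and $[\Delta(b,y_i)\le pn]_{i>\ell}$ involve disjoint randomness once the structure is exposed (the first has probability $\mu^2$ over $a,b$; the latter two have probabilities $\mu^{L-1}$ and $\mu^{L-\ell}$ conditional on $a,b$), and multiplies. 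Your approach instead keeps the exact conditional probability and bounds the sum via the double-counting identity $\sum_d\binom{n}{d}V(d)=V^2$. These ultimately discard the same information (the joint constraints on $x_2,\dots,x_\ell$ with $b$), but your version is more unified in that both halves of the $\min$ fall out of one exact expression, at the cost of needing the observation $\sum_d\binom{n}{d}V(d)=V^2$ — a fact you correctly identified as the crux, and which the paper's first-bound argument sidesteps by exploiting the independence structure directly rather than through the distance distribution.
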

\begin{proof}
  First we show the probability of $\calE$ is at most $\mu^{2L-\ell+1}$. This is the argument that was presented in \cite{GuruswamiN14}.
  For the event $\calE$ to occur, we need the following necessary (but not necessarily sufficient) conditions: (1) $\Delta(a,x_1)\le pn$ and $\Delta(b,x_1)\le pn$, (2) $\Delta(a,x_i)\le pn$ for $i=2,\dots,L$, and (3) $\Delta(b,y_i)\le pn$ for $i=\ell+1,\dots,L$.  Notice that each of (1),(2),(3) are independent.
  Using the randomness of $a$ and $b$, (1) happens with probability $\mu^2$.  Conditioned on the positions of $a$ and $b$, (2) and (3) occur with probability $\mu^{L-1}$ and $\mu^{L-\ell}$, respectively.
  Thus, we have $\Pr\left[\mathcal{E}\right] \leq \mu^{2L-\ell+1}$.

  For the other bound, we first condition on the locations of $a$ and $b$. Note that, for $0\le d\le n$, $a$ and $b$ have Hamming distance exactly $d$ with probability $2^{-n}\binom{n}{d}$. 
  For $i=1,\dots,\ell$, conditioned on the positions of $a$ and $b$ being Hamming distance $d$ apart, the probability that $\Delta(a,x_i)\le pn$ and $\Delta(b,x_i)\le pn$ is exactly $2^{-n}\Vol(n,pn;d)$. 
  For $i=\ell+1,\dots,L$, the probability that $\Delta(a,x_i)\le pn$ and the probability that $\Delta(b,y_i)\le pn$ are each exactly $\mu$.
  We thus can write, using Lemma~\ref{lem:vol-bound-2},
  \begin{align}
    \Pr[\calE]
    \ &= \ \sum_{d=0}^n \Pr[\Delta(a,b)=d]\cdot \left( \frac{\Vol(n,pn;d)}{2^n} \right)^\ell \cdot \mu^{2L-2\ell} \nonumber\\
    \ &= \ \sum_{d=0}^n 2^{-n}\binom{n}{d} \cdot \left( \frac{\Vol(n,pn;d)}{\Vol(n,pn)}\cdot\frac{\Vol(n,pn)}{2^n} \right)^\ell \cdot \mu^{2L-2\ell} \nonumber\\
    \ &= \ \sum_{d=0}^n 2^{-n}\binom{n}{d} \cdot \left( \frac{\Vol(n,pn;d)}{\Vol(n,pn)}\right)^\ell \cdot \mu^{2L-\ell} \nonumber\\
    \ &= \ 2^{-n}\mu^{2L-\ell}\sum_{d=0}^n \binom{n}{d} \cdot \left( \frac{\Vol(n,pn;d)}{\Vol(n,pn)}\right)^\ell \nonumber\\
    \ &\le \ 2^{-n}\mu^{2L-\ell}\sum_{d=0}^n \binom{n}{d} \cdot \left( 2^{-\alpha_{p}d}\cdot C_p\sqrt{n}\right)^\ell \nonumber\\
    \ &\le \ 2^{-n}\cdot \mu^{2L-\ell}\cdot C_p^L\cdot n^{L/2}\cdot (1 + 2^{-\alpha_{p}\ell})^n,
  \end{align}
  as desired.
\end{proof}

We now prove our theorem.  
\begin{theorem}[Theorem~\ref{thm:lb}, restated]
  \thmlb
\end{theorem}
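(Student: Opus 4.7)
The plan is to apply the second moment method to show that with high probability some center $x \in \FF_2^n$ has at least $L = \lceil (1-\gamma_{p,\varepsilon})/\varepsilon\rceil$ codewords within Hamming distance $pn$, sharpening the argument of \cite{GuruswamiN14} via the new intersection-ball bound in Lemma~\ref{lem:lb-int}. I model $\cC$ as a sequence of $K = 2^{Rn}$ i.i.d.\ uniformly random codewords in $\FF_2^n$ (the difference from a uniformly random size-$K$ subset is negligible; note also that Theorem~\ref{thm:ld} shows the statement cannot hold for a random linear code, so it must be read for a uniformly random code). Define
\[Z \;=\; \sum_{x \in \FF_2^n}\binom{N_x}{L},\qquad N_x \;=\; |\calB(x,pn)\cap \cC|.\]
The code is not $(p, L-1)$-list-decodable iff $Z\geq 1$, so by Fact~\ref{fact:secondmoment} it suffices to prove $\E[Z^2]/\E[Z]^2 \leq 1 + \exp(-\Omega_{p,\varepsilon}(n))$. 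A standard first-moment calculation gives $\E[Z] = 2^n\binom{K}{L}\mu^L$ with $\mu = \Vol(n,pn)/2^n$; invoking Lemma~\ref{lem:vol-bound-1}, $K\mu = \Omega(n^{-1/2}\cdot 2^{-\varepsilon n})$, hence $\E[Z] \geq 2^{\gamma_{p,\varepsilon} n/2}$ for large $n$.

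For the second moment, I expand by the overlap $\ell$ of the two $L$-subsets:
\[\E[Z^2] \;=\; 2^{2n}\sum_{\ell=0}^{L} M_\ell \cdot \Pr[\calE_\ell],\]
where $M_\ell = \binom{K}{\ell}\binom{K-\ell}{L-\ell}\binom{K-L}{L-\ell}$ counts pairs of $L$-subsets with overlap $\ell$, and $\Pr[\calE_\ell]$ is the probability from Lemma~\ref{lem:lb-int} (the sum over the two centers $a,b\in\FF_2^n$ supplies the $2^{2n}$ prefactor). The $\ell = 0$ term equals $(1\pm\exp(-\Omega(n)))\E[Z]^2$. For $\ell\geq 1$, the ratio of the $\ell$-th contribution to $\E[Z]^2$ simplifies, up to polynomial factors in $L$, to
\[\binom{L}{\ell}^{2}\ell!\cdot (K\mu)^{-\ell}\cdot \mu^{-(2L-\ell)}\cdot \Pr[\calE_\ell].\]
For $\ell \leq (1-H(p))/\varepsilon$, I plug in the first bound $\Pr[\calE_\ell]\leq \mu^{2L-\ell+1}$, which yields a factor of $\mu\cdot(K\mu)^{-\ell} \leq \mathrm{poly}(n)\cdot 2^{-n(1-H(p)-\varepsilon\ell)}$ and is therefore exponentially small. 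For $\ell > (1-H(p))/\varepsilon$, I use the second bound of Lemma~\ref{lem:lb-int} together with $\log_2(1+2^{-\alpha_p\ell}) \leq 2^{-\alpha_p\ell}/\ln 2$, producing a factor at most $\mathrm{poly}(n,L)\cdot 2^{-n(1-\varepsilon\ell - 2^{-\alpha_p\ell}/\ln 2)}$.

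The main obstacle is the boundary $\ell \approx L$, where $\varepsilon\ell = 1 - O(\gamma_{p,\varepsilon})$ leaves only a tiny slack that must be overcome by the correction $2^{-\alpha_p\ell}/\ln 2$. This is exactly where the refinement in Lemma~\ref{lem:lb-int} is critical: the exponent scales as $\alpha_p\ell$, not as a constant times a power less than $\ell$, so the correction is doubly small once $\ell = \Omega(1/\varepsilon)$. Choosing $\gamma_{p,\varepsilon} := c_p\exp(-\alpha_p(1-H(p))/(2\varepsilon))$ with a sufficiently large constant $c_p$ guarantees $2^{-\alpha_p\ell}/\ln 2 \leq \gamma_{p,\varepsilon}/2$ for every $\ell \geq (1-H(p))/\varepsilon$, making each such $\ell$-th contribution at most $\exp(-\Omega_{p,\varepsilon}(n))\cdot \E[Z]^2$. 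Summing the $\ell\geq 1$ terms over the $L = O(1/\varepsilon)$ possibilities gives $\E[Z^2]/\E[Z]^2 \leq 1 + \exp(-\Omega_{p,\varepsilon}(n))$, and Fact~\ref{fact:secondmoment} concludes the proof, with $\gamma_{p,\varepsilon} = \exp(-\Omega_p(1/\varepsilon))$ as claimed.
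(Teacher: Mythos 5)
Your proposal is correct and follows essentially the same route as the paper's proof in Appendix~\ref{app:lb}: a second-moment argument decomposed by the overlap $\ell$ of the two $L$-subsets, applying the two bounds of Lemma~\ref{lem:lb-int} in the same low-$\ell$ and high-$\ell$ regimes. Two small observations: you correctly identified that ``random linear code'' in the statement must be a typo for ``uniformly random code,'' and the paper's regime threshold $\ell_{p,\varepsilon}=(1-H(p))/(2\varepsilon)$ leaves more slack than your $(1-H(p))/\varepsilon$, which is tight exactly at the boundary where the first bound of Lemma~\ref{lem:lb-int} gives only polynomial rather than exponential decay.
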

\begin{proof}
  We follow the outline \cite{GuruswamiN14}.
  Let $\alpha_p$ be as in Lemma~\ref{lem:vol-bound-2}.
  With hindsight, let 
  \begin{align}
    \ell_{p,\varepsilon} \ \defeq \ \frac{1-H(p)}{2\varepsilon},
    \qquad
    \gamma_{p,\varepsilon} \ \defeq \ \frac{1}{2\ln 2}\cdot 2^{-\alpha_p\ell_{p,\varepsilon}} = \exp\left( -\Omega_p\left( \frac{1}{\varepsilon} \right) \right),
  \label{}
  \end{align}
  Let $\mathcal{C} \subseteq \mathbb{F}_2^n$ be a uniformly random code of rate $R=1-H(p)-\eps$.  We think of each codeword $c \in \mathcal{C}$ as the encoding of a distinct message $x \in \mathbb{F}_2^{Rn}$.
  Choose $L = \floor{(1-\gamma_{p,\varepsilon})/\eps}$.
  We show that, if $\eps$ is sufficiently small, $\mathcal{C}$ is with high probability not $(p,L-1)$-list decodable.

  Let $\mu=2^{-n}\Vol(n,pn)$.
  For any list decoding center $a\in\mathbb{F}_2^n$ and any ordered list of $L$ distinct messages $X=(x_1,\dots,x_L) \in (\mathbb{F}_2^{Rn})^L$, let $\mathbb{I}(a,X)$ be the indicator random variable for the event that the encodings of $x_1,\dots,x_L$ all fall in $\mathcal{B}(a,pn)$.
  Let $W = \sum_{a,X}\mathbb{I}(a,X)$.
  Note that $\mathcal{C}$ is $(p,L-1)$-list decodable if and only if $W=0$.

  Just as in \cite{GuruswamiN14}, we have $\E[\mathbb{I}(a,X)] = \mu^L$ and the number of different pairs $(a,X)$ is at least $\half2^n\cdot 2^{RnL}$. 
  Thus, we can compute $\E[W] \ge \half \mu^L2^n2^{RnL}$, so
  \begin{align}
    \Var W 
    \ &= \   \sum_{X,Y}\sum_{a,b}\left( \E_{\mathcal{C}}\left[ \mathbb{I}(a,X)\mathbb{I}(b,Y) \right] - \E_{\mathcal{C}}[\mathbb{I}(a,X)]\cdot\E_{\mathcal{C}}[\mathbb{I}(b,Y)] \right) \nonumber\\
    \ &= \   \sum_{X\cap Y\neq\emptyset}\sum_{a,b}\left( \E_{\mathcal{C}}\left[ \mathbb{I}(a,X)\mathbb{I}(b,Y) \right] - \E_{\mathcal{C}}[\mathbb{I}(a,X)]\cdot\E_{\mathcal{C}}[\mathbb{I}(b,Y)] \right) \nonumber\\
    \ &\le \   \sum_{X\cap Y\neq\emptyset}\sum_{a,b}\E_{\mathcal{C}}\left[ \mathbb{I}(a,X)\mathbb{I}(b,Y) \right] \nonumber\\
    \ &= \   \sum_{\ell=1}^L\sum_{|X\cap Y|=\ell}\sum_{a,b}\E_{\mathcal{C}}\left[ \mathbb{I}(a,X) \mathbb{I}(b,Y) \right] \nonumber\\
    \ &= \   \sum_{\ell=1}^L\sum_{|X\cap Y|=\ell}2^{2n}\Pr_{a,b,\mathcal{C}}\left[ \mathbb{I}(a,X) \text{ and }\mathbb{I}(b,Y) \right].
  \label{}
  \end{align}
  Suppose $X$ and $Y$ are $L$-tuples such that $|X\cap Y|=\ell$, where the intersection treats $X$ and $Y$ as sets. 
  Suppose that the elements of $X$ are $x_1,\dots,x_L$ and the elements of $Y$ are $x_1,\dots,x_{\ell}, y_{\ell+1},\dots,y_L$.
  In this notation, the event ``$\mathbb{I}(a,X)$ and $\mathbb{I}(b,Y)$'' is exactly event $\calE$ in Lemma~\ref{lem:lb-int}.
  Hence,
  \begin{align}
    \Var W 
    \ &\le \   \sum_{\ell=1}^L\sum_{|X\cap Y|=\ell}2^{2n}\Pr_{a,b,\mathcal{C}}\left[ \mathbb{I}(a,X) \text{ and }\mathbb{I}(b,Y) \right] \nonumber\\
    \ &\le \   \sum_{\ell=1}^L\sum_{|X\cap Y|=\ell}2^{2n}\cdot \min\left(\mu^{2L-\ell+1}, 2^{-n}\cdot \mu^{2L-\ell}\cdot C_p^L\cdot n^{L/2}\cdot (1 + 2^{-\alpha_{p}\ell})^n\right).
  \label{}
  \end{align}
  As $\alpha_{p} > 0$, by choice of $\gamma_{p,\varepsilon}$, for all $\ell \ge \ell_{p,\varepsilon}$, we have
  \begin{align*}
    \frac{1 + 2^{-\alpha_{p}\ell}}{2} \ \le \ 2^{-(1-2\gamma_{p,\varepsilon})}.
  \label{}
  \end{align*}
  We also bound the number of $X,Y$ such that $|X\cap Y|=\ell$ by $L^{2L}2^{Rn(2L-\ell)}$.
Indeed, there are at most $2^{Rn(2L - \ell)}$ ways to choose $X,Y$ as sets and at most $L^{2L}$ ways to order the sets.
Thus,
  \begin{align*}
    \frac{\Var W}{\E[W]^2}
    \ &\le \   
    \frac{4}{\mu^{2L}2^{2n}2^{2RnL}}
    \sum_{\ell=1}^L\sum_{|X\cap Y|=\ell}2^{2n}\cdot \min\left(\mu^{2L-\ell+1}, 2^{-n}\cdot \mu^{2L-\ell}\cdot C_p^L\cdot n^{L/2}\cdot (1 + 2^{-\alpha_{p}\ell})^n\right) \nonumber\\
    \ &\le \  
    \frac{4}{\mu^{2L}2^{2n}2^{2RnL}}
    \sum_{\ell=1}^{\ell_{p,\varepsilon}-1}\sum_{|X\cap Y|=\ell}2^{2n}\cdot \mu^{2L-\ell+1} \nonumber\\
    \ &\qquad \ + 
    \frac{4}{\mu^{2L}2^{2n}2^{2RnL}}
    \sum_{\ell=\ell_{p,\varepsilon}}^L\sum_{|X\cap Y|=\ell}2^{2n}\cdot \mu^{2L-\ell}\cdot C_p^L\cdot n^{L/2}\cdot \left(\frac{1 + 2^{-\alpha_{p}\ell}}{2}\right)^n \nonumber\\
    \ &\le \  
    \frac{4}{\mu^{2L}2^{2n}2^{2RnL}}
    \sum_{\ell=1}^{\ell_{p,\varepsilon}-1}L^{2L}2^{Rn(2L-\ell)}\cdot 2^{2n}\cdot \mu^{2L-\ell+1} \nonumber\\
    \ &\qquad \ + 
    \frac{4}{\mu^{2L}2^{2n}2^{2RnL}}
    \sum_{\ell=\ell_{p,\varepsilon}}^LL^{2L}2^{Rn(2L-\ell)}\cdot 2^{2n}\cdot \mu^{2L-\ell}\cdot C_p^L\cdot n^{L/2}\cdot \left(\frac{1 + 2^{-\alpha_{p}\ell}}{2}\right)^n \nonumber\\
    \ &\le \  
    \sum_{\ell=1}^{\ell_{p,\varepsilon}-1}4 L^{2L} (2^{Rn}\mu)^{-\ell}\cdot \mu
    + \sum_{\ell=\ell_{p,\varepsilon}}^L 4L^{2L}(2^{Rn}\mu)^{-\ell}\cdot C_p^L\cdot n^{L/2}\cdot 2^{-(1-2\gamma_{p,\varepsilon})n} \nonumber\\
    \ &\le \  4L^{2L+1}\cdot 2^{\eps\ell_{p,\varepsilon} n} \cdot 2^{-(1-H(p))n}
    + 4L^{2L+1}\cdot C_p^L\cdot n^{L/2}\cdot 2^{\eps nL - (1-2\gamma_{p,\varepsilon})n}
  \end{align*}
  By choice of $\ell_{p,\varepsilon}=\frac{1-H(p)}{2\varepsilon}$, the first term in the last line is $\exp(-\Omega_{p,\varepsilon}(n))$, and by choice of $L = \floor{\frac{1-\gamma_{p,\varepsilon}}{\eps}}$, the second term in the last line is also $\exp(-\Omega_{p,\varepsilon}(n))$. 
  Invoking Fact~\ref{fact:secondmoment},
  \begin{align*}
    \Pr[W=0] 
    \ \le \ \frac{\Var W}{\E[W]^2} 
    \ \le \ \exp(-\Omega_{p,\varepsilon}(n)).
  \label{}
  \end{align*}
Since $\mathcal{C}$ is $(p,L-1)$-list-decodable if and only if $W=0$, we conclude, for $n$ sufficiently large, that $\mathcal{C}$ is with probability $1-\exp(-\Omega_{p,\varepsilon}(n))$ not $(p,L-1)$-list decodable.
\end{proof}

\section{Lower bound on the list size of random rank metric codes}\label{app:rank-metric-lb}
In this section, we prove Theorem~\ref{thm:rank-metric-lb}.
The second moment method calculations in this section are not as delicate as in Section~\ref{app:lb}, because in Section~\ref{app:lb}, we wanted to pin down the list size to $(1\pm o(1))/\varepsilon$, but here showing list size $\Omega(1/\varepsilon)$ suffices.
\begin{theorem}[Theorem~\ref{thm:rank-metric-lb} restated]
  \thmrankmetriclbq
\end{theorem}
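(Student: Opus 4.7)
The plan is to adapt the second-moment argument from Appendix~\ref{app:lb} (due to \cite{GuruswamiN14}) to the rank metric. Because Theorem~\ref{thm:rank-metric-lb} targets the weaker leading constant $(1-p)(1-bp)$ -- analogous to the original \cite{GuruswamiN14} Theorem~20 rather than the sharpened Theorem~\ref{thm:lb} -- no rank-metric analog of the intersection bound Lemma~\ref{lem:vol-bound-2} will be needed; the trivial bound on the intersection event suffices.

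Set $L = \lfloor (1-p)(1-bp)/\eps \rfloor - 1$ and let $\mu \defeq |\calB_R(0, pn)|/q^{mn}$. I would define
\begin{align*}
W \ \defeq \ \sum_{A \in \FF_q^{m \times n}} \sum_X \II(A, X),
\end{align*}
where $X$ ranges over ordered $L$-tuples of distinct messages and $\II(A, X)$ indicates that all codewords encoding $X$ lie in $\calB_R(A, pn)$. Since $\cC$ fails to be $(p, L)$-list decodable iff $W \geq 1$, it suffices by Fact~\ref{fact:secondmoment} to show $\Var W / \E[W]^2 = \exp(-\Omega_{p,\eps}(n))$. For the first moment, a direct count over the independent encodings of a uniformly random code gives $\E[W] \geq \tfrac{1}{2}\,\mu^L q^{mn} q^{Rmn L}$, and Lemma~\ref{lem:rank-metric-vol} supplies $\mu \geq q^{-mn(1-p)(1-bp)}$.

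For the variance, I would mirror the Hamming calculation: expand
\begin{align*}
\Var W \ \leq \ \sum_{\ell=1}^L \sum_{|X \cap Y| = \ell} q^{2mn} \Pr_{A, B, \cC}\left[ \II(A, X) \wedge \II(B, Y) \right],
\end{align*}
apply the trivial bound $\Pr\left[ \II(A, X) \wedge \II(B, Y) \right] \leq \mu^{2L - \ell + 1}$ (obtained by conditioning on the $\ell$ shared codewords and using independence of the remaining $2(L-\ell)$ codewords, exactly as in Lemma~\ref{lem:lb-int}), and bound the number of ordered pairs with $|X \cap Y| = \ell$ by $L^{2L} q^{Rmn(2L - \ell)}$. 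This yields a bound of the form $\Var W / \E[W]^2 \leq C_L \cdot q^{\eps m n L - m n (1-p)(1-bp) + o(mn)}$. By the choice of $L$, the leading exponent becomes $-\eps m n + o(mn)$, which is $-\Omega_{p, \eps}(n)$.

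The main obstacle, such as it is, is the routine bookkeeping of the polynomial-in-$mn$ slack from Lemma~\ref{lem:rank-metric-vol} through the second-moment calculation; no new combinatorial input beyond Lemma~\ref{lem:rank-metric-vol} is required. Sharpening the leading constant from $(1-p)(1-bp)$ to $1 - \gamma_{p,\eps}$ (as in Theorem~\ref{thm:lb}) would require a rank-metric analog of Lemma~\ref{lem:vol-bound-2} bounding $|\calB_R(0, pn) \cap \calB_R(M, pn)|$ as a function of $\rank(M)$, but this is beyond what Theorem~\ref{thm:rank-metric-lb} asserts.
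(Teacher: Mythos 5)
Your proposal matches the paper's proof in Appendix~\ref{app:rank-metric-lb} essentially step for step: same random variable $W$, same first-moment lower bound $\E[W]\ge \tfrac12\mu^L q^{mn}q^{RmnL}$, same trivial pairwise bound $\Pr[\II(A,X)\wedge\II(B,Y)]\le\mu^{2L-\ell+1}$ via the $\ell$ shared codewords, same $L^{2L}q^{Rmn(2L-\ell)}$ count of overlapping pairs, and same conclusion from Fact~\ref{fact:secondmoment} and Lemma~\ref{lem:rank-metric-vol}. Your observation that no rank-metric analog of the intersection bound Lemma~\ref{lem:vol-bound-2} is needed for this weaker leading constant is also exactly right and is why the paper's argument here stays at the level of the original \cite{GuruswamiN14} bound.
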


The proof is a second moment argument and is nearly identical to the lower bound for random linear codes over the Hamming metric in \cite{GuruswamiN14}: as this result is for the rank metric, we work out the details for completeness.
\begin{proof}
  Let $\eps > 0$ and fix $L = (1-p)(1-bp)/\eps - 1$.
Let $\mathcal{C} \subseteq \mathbb{F}_q^{m\times n}$ be a uniformly random rank metric code.  
We think of each $c \in \mathcal{C}$ as the encoding of a message $x \in \mathbb{F}_q^{Rmn}$.
  For any center $A \in \mathbb{F}_q^{m \times n}$ and any ordered list of $L$ distinct messages $X \defeq(x_1,x_2,\dots,x_L)\in (\mathbb{F}_q^{Rmn})^L$, let $\mathbb{I}(A,X)$ be the indicator random variable for the event that, for all $i$, the encodings of $x_1,\dots, x_L$ are all contained in $\mathcal{B}_R(A, pn)$.
  Define
  \begin{align*}
    W \ \defeq \ \sum_{A,X}\mathbb{I}(A,X).
  \label{}
  \end{align*}
  Note that $\mathcal{C}$ is $(p,L-1)$-list decodable if and only if $W = 0$.

  Let $\mu = q^{-mn}|\mathcal{B}_{q,R}(0,pn)|$. 
  By Lemma~\ref{lem:rank-metric-vol}, we have 
  \begin{align}
    \mu q^{Rmn} < 4q^{-\varepsilon mn}.
    \label{eq:rank-metric-vol-2}
  \end{align}
  The event that the encoding of $x_i$ falls inside $\mathcal{B}_{q,R}(A,pn)$ is exactly $\mu$, and these events are statistically independent, so we have $\E[\mathbb{I}(a,X)] = \mu^L$.
  Assuming $k\ge L+1$, for large enough $n$, the number of pairs $(A,X)$ is at least $\half q^{RmnL}\cdot q^{mn}$, as the number of ordered $L$-lists $X$ of $L$ distinct messages is
  \begin{align*}
    q^{Rmn}\cdot(q^{Rmn}-1)\cdots(q^{Rmn} - L+1)
    \ \ge \ q^{RmnL}\left( 1 - \frac{\binom{L}{2}}{q^{Rmn}} \right)
    \ \ge \ q^{RmnL}\left( 1 - \frac{q^L}{q^{Rmn}} \right)
    \ \ge \ \half q^{RmnL}.
  \label{}
  \end{align*}
  Thus, by linearity of expectation, we have
  \begin{align*}
    \E W \ \ge \ \half \mu^L q^{mn} q^{RmnL}.
  \label{}
  \end{align*}
  Following the method in \cite{GuruswamiN14}, we now bound the variance of $W$ above.
  For two lists of messages $X$ and $Y$, we let $|X\cap Y|$ denote the size of their intersection when we view them as sets.
  If $X$ and $Y$ are disjoint, then the events $\mathbb{I}(A,X),\mathbb{I}(B,Y)$ are independent for any pair of centers $A,B\in\mathbb{F}_q^{m\times n}$.
  Thus,
  \begin{align*}
    \Var W 
    \ &= \   \sum_{X,Y}\sum_{A,B}\left( \E_{\mathcal{C}}\left[ \mathbb{I}(A,X)\mathbb{I}(B,Y) \right] - \E_{\mathcal{C}}[\mathbb{I}(A,X)]\cdot\E_{\mathcal{C}}[\mathbb{I}(B,Y)] \right) \nonumber\\
    \ &= \   \sum_{X\cap Y\neq\emptyset}\sum_{A,B}\left( \E_{\mathcal{C}}\left[ \mathbb{I}(A,X)\mathbb{I}(B,Y) \right] - \E_{\mathcal{C}}[\mathbb{I}(A,X)]\cdot\E_{\mathcal{C}}[\mathbb{I}(B,Y)] \right) \nonumber\\
    \ &\le \   \sum_{X\cap Y\neq\emptyset}\sum_{A,B}\E_{\mathcal{C}}\left[ \mathbb{I}(A,X)\mathbb{I}(B,Y) \right] \nonumber\\
    \ &= \   \sum_{\ell=1}^L\sum_{|X\cap Y|=\ell}\sum_{A,B}\E_{\mathcal{C}}\left[ \mathbb{I}(A,X) \mathbb{I}(B,Y) \right] \nonumber\\
    \ &= \   \sum_{\ell=1}^L\sum_{|X\cap Y|=\ell}q^{2mn}\Pr_{A,B,\mathcal{C}}\left[ \mathbb{I}(A,X) \text{ and }\mathbb{I}(B,Y) \right]
  \label{}
  \end{align*}
  where, in the last line, $A,B$ are picked uniformly at random from $\mathbb{F}_q^{m\times n}$.

  Fix $0< l \le L$ and a pair $(X,Y)$ of $L$-tuples of matrices such that $|X\cap Y|=\ell$, and fix an arbitrary $z\in X\cap Y$, which is guaranteed to exist as $X,Y$ intersect.
  For the event $\mathbb{I}(A,X) = \mathbb{I}(B,Y) = 1$ to happen, all of the following must occur:
  \begin{enumerate}
  \item $A$ and $B$ are both within rank-metric distance $pn$ of the encoding of $z$ .
  \item For each $x\in X\setminus \{z\}$, the encoding of $x$ falls inside $B_{q,R}(A,pn)$.
  \item For each $y\in Y\setminus X$, the encoding of $y$ falls inside $B_{q,R}(B, pn)$.
  \end{enumerate}
  The first event occurs with probability $\mu^2$, and conditioned on the choices of $A$ and $B$, the second and third events occur with probabilities $\mu^{L-1}$ and $\mu^{L-\ell}$, respectively, and are independent given $A$ and $B$.
  Thus, the probability all the events occur is $\mu^{2L-\ell + 1}$, so the probability that $\mathbb{I}(A,X) = \mathbb{I}(B,Y) = 1$ is at most $\mu^{2L-\ell+1}$.
  Finally, noting that the number of pairs $(X,Y)$ with $|X\cap Y| = L$ is bounded by $L^{2L}\cdot q^{Rmn(2L-\ell)}$, we conclude
  \begin{align*}
    \Var W
    \ \le \ \sum_{\ell=1}^L L^{2L}\cdot q^{Rmn(2L-\ell)} \cdot q^{2mn}\mu^{2L-\ell+1}.
  \label{}
  \end{align*}
  Using Fact~\ref{fact:secondmoment} along with \eqref{eq:rank-metric-vol-2}, we have
  \begin{align*}
    \Pr[W = 0]
    \ &\le \ \frac{\Var W}{\E[W]^2}
    \ \le \   \sum_{\ell=1}^L 4L^{2L}(q^{Rmn}\mu)^{-\ell}\mu
    \ \le \ 4L^{2L+1}\cdot 4^Lq^{\eps mnL}\cdot 4q^{-mn(1-p)(1-bp)}.
  \label{}
  \end{align*}
  This quantity is $\exp_{q}(-\Omega_{p,\eps}(n))$ for our choice of $L$.  Recalling that $\mathcal{C}$ is $(p,L-1)$-list-decodable if and only of $W = 0$, we conclude that with high probability, $\mathcal{C}$ is \em not \em $(p,L-1)$-list-decodable, as desired.
\end{proof}

\section{Existence of rate $1-H(p)-\eps$ codes achieving list size $\frac{H(p)}{\eps}$}\label{app:lll}
In this section, we use the Symmetric Lovasz Local Lemma (see, e.g., \cite{AlonS92}).
\begin{lemma}[Lovasz Local Lemma]
  \label{lem:lll}
  Let $d$ be a positive integer and $p\in(0,1)$.
  Let $A_1,\dots, A_N$ be events and $G=(V,E)$ be a graph with vertices $V=[N]$ such that each vertex has degree at most $d$, and, for all $i\in[N]$, the event $A_i$ is independent of the events $\{A_j: j\in [N], j\neq i, ij\notin E\}$.
  Suppose also that for all $i$, $\Pr[A_i] \le p$.
  If $ep(d+1) < 1$, then 
  \begin{align}
    \Pr\left[\bigwedge_{i=1}^N\overline{A_i}\right] \ge \left(1-\frac{1}{d+1} \right)^N > 0,
  \end{align}
  i.e. with positive probability, none of the events $A_i$ hold.
\end{lemma}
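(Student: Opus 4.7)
The Lov\'asz Local Lemma is a classical result that the paper is citing verbatim from \cite{AlonS92}, so my sketch will simply recall the standard textbook argument. The plan is to prove, by induction on $|S|$, the auxiliary claim that for every $i \in [N]$ and every $S \subseteq [N]\setminus\{i\}$,
\[
\Pr\!\left[A_i \;\Big|\; \bigcap_{j \in S} \overline{A_j}\right] \;\le\; \frac{1}{d+1}.
\]
Granting this, the chain-rule identity
\[
\Pr\!\left[\bigwedge_{i=1}^N \overline{A_i}\right] \;=\; \prod_{i=1}^N \left(1 - \Pr\!\left[A_i \;\Big|\; \bigcap_{j<i}\overline{A_j}\right]\right) \;\ge\; \left(1 - \tfrac{1}{d+1}\right)^{N}
\]
immediately delivers the conclusion of the lemma.

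The base case $S = \emptyset$ reduces to $p \le 1/(d+1)$, which is implied (loosely) by the hypothesis $ep(d+1) < 1$. For the inductive step, I would partition $S = S_1 \sqcup S_2$, where $S_1 = S \cap N(i)$ collects the neighbors of $i$ in the dependency graph $G$ and $S_2$ is the set of non-neighbors. Because $A_i$ is mutually independent of $\{A_j\}_{j \in S_2}$, conditioning on $\bigcap_{j\in S_2}\overline{A_j}$ leaves $\Pr[A_i]$ unchanged; Bayes' rule then rewrites the target as
\[
\Pr\!\left[A_i \;\Big|\; \bigcap_{j \in S}\overline{A_j}\right] \;=\; \frac{\Pr\!\left[A_i \cap \bigcap_{j\in S_1}\overline{A_j} \;\Big|\; \bigcap_{j\in S_2}\overline{A_j}\right]}{\Pr\!\left[\bigcap_{j\in S_1}\overline{A_j} \;\Big|\; \bigcap_{j\in S_2}\overline{A_j}\right]}.
\]
I would bound the numerator by $p$ (monotonicity plus independence) and the denominator from below by $1 - |S_1|/(d+1) \ge 1/(d+1)$ (union bound combined with the inductive hypothesis applied to each $A_j$ for $j \in S_1$ against the strictly smaller conditioning set $S_2$). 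The ratio is therefore at most $p(d+1)$, and the induction closes once the hypothesis on $p$ is invoked.

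The only delicate point --- and the main thing that needs care --- is matching constants so the induction closes cleanly. The raw estimate $p(d+1) \le 1/e$ afforded by $ep(d+1) < 1$ is a touch larger than $1/(d+1)$ once $d \ge 2$, so the standard fix is either to strengthen the inductive target to $\Pr[A_i \mid \cdots] \le ep$ (which closes under $ep(d+1) \le 1$) or to use the equivalent numerical hypothesis $4pd \le 1$; either variant produces the desired positivity. None of this is really an obstacle, since all that is needed in the application below is strict positivity of $\Pr[\bigwedge \overline{A_i}]$, which will in turn give the existence of rate $1-H(p)-\eps$ binary codes with list size $H(p)/\eps$ by a random-codeword-union-of-bad-events argument.
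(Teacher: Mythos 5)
This lemma is quoted from Alon and Spencer (\cite{AlonS92}) and the paper does not supply a proof of its own, so there is no in-paper argument to compare against; you are reconstructing the standard inductive proof, which is the right thing to do.

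The overall skeleton of your argument is correct: prove the uniform bound $\Pr\bigl[A_i \mid \bigcap_{j\in S}\overline{A_j}\bigr]\le 1/(d+1)$ by induction on $|S|$, then conclude by the chain rule. The gap is in the denominator estimate, and your proposed repair does not actually fix it. Bounding $\Pr\bigl[\bigcap_{j\in S_1}\overline{A_j}\mid\bigcap_{j\in S_2}\overline{A_j}\bigr]$ by a union bound gives $\ge 1 - |S_1|/(d+1)\ge 1/(d+1)$, so the resulting ratio is at most $p(d+1) < 1/e$. Since $1/e > 1/(d+1)$ for $d\ge 2$, this does not close the induction at the target $1/(d+1)$. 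Your suggested fix, strengthening the inductive target to $\Pr[A_i\mid\cdots]\le ep$, still fails with a union bound: the denominator is only $\ge 1 - d\cdot ep > ep$, so the ratio is bounded by $p/(ep) = 1/e$, and $1/e\le ep$ would require $p\ge 1/e^2$, which is not given. (The target $\le 2p$ together with the union bound does close, but only under the different hypothesis $4pd\le 1$, which is not the one stated here.)

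The correct repair for the hypothesis $ep(d+1)\le 1$ is to expand the denominator telescopically rather than by a union bound. Enumerate $S_1 = \{j_1,\dots,j_r\}$ with $r\le d$ and write
\[
\Pr\!\left[\bigcap_{t=1}^{r}\overline{A_{j_t}} \,\Big|\, \bigcap_{j\in S_2}\overline{A_j}\right]
= \prod_{t=1}^{r}\left(1 - \Pr\!\left[A_{j_t} \,\Big|\, \bigcap_{s<t}\overline{A_{j_s}} \cap \bigcap_{j\in S_2}\overline{A_j}\right]\right).
\]
Each conditioning set above is a proper subset of $S$ not containing $j_t$, so the inductive hypothesis bounds each factor from below by $1 - \tfrac{1}{d+1}$, and hence the product is at least $\bigl(\tfrac{d}{d+1}\bigr)^d \ge \tfrac{1}{e}$. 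The ratio is therefore at most $ep \le \tfrac{1}{d+1}$, which is exactly what is needed to close the induction, and your chain-rule conclusion then yields $\Pr\bigl[\bigwedge_i\overline{A_i}\bigr] \ge \bigl(1-\tfrac{1}{d+1}\bigr)^N > 0$ as desired.
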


We prove the following.
\begin{theorem}
  \label{thm:lll}
  Let $p\in[0,1/2)$ and $\eps > 0$.
  For any large enough $n$, there exist binary codes of rate $R = 1 - H(p) - \eps$ that are $(p,\frac{H(p)}{\eps}+1)$-list decodable.
\end{theorem}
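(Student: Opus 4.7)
The plan is to apply the symmetric Lovasz Local Lemma (Lemma~\ref{lem:lll}) to a uniformly random binary code of size $2^k$ where $k = Rn = (1-H(p)-\eps)n$. Concretely, generate a code $\calC = \{c_1,\dots,c_{2^k}\}$ by choosing each $c_i \in \FF_2^n$ independently and uniformly at random, and set $L = \lceil H(p)/\eps\rceil + 1$. The goal is to show that with positive probability no Hamming ball $\calB(x,pn)$ contains $L+1$ or more codewords of $\calC$, which is precisely the statement that $\calC$ is $(p,L)$-list-decodable.

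The bad events are naturally indexed by pairs $(x,S)$, where $x \in \FF_2^n$ is a potential list-decoding center and $S \subseteq [2^k]$ is a subset of $L+1$ message indices. Define $A_{x,S}$ to be the event $\{c_i : i \in S\} \subseteq \calB(x,pn)$. Since the $c_i$ are i.i.d.\ uniform, the $L+1$ events $\{c_i \in \calB(x,pn)\}_{i \in S}$ are independent, each with probability $\mu := \Vol(n,pn)/2^n \le 2^{-n(1-H(p))}$, giving $\Pr[A_{x,S}] \le \mu^{L+1} \le 2^{-n(1-H(p))(L+1)}$. The key independence structure is that $A_{x,S}$ depends only on the codewords $\{c_i\}_{i \in S}$, hence $A_{x,S}$ is mutually independent of $\{A_{y,T} : T \cap S = \emptyset\}$. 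The degree of the dependency graph is therefore bounded by
\[ d \le 2^n \cdot (L+1) \cdot \binom{2^k}{L} \le 2^n \cdot (L+1) \cdot 2^{kL}. \]

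To verify the LLL hypothesis $e\cdot p \cdot (d+1) < 1$, I would take a $\log_2$ and show that the dominant exponent on the left is
\[ -n(1-H(p))(L+1) + n + kL = nH(p) - L\eps n + O(\log n), \]
using $k - n(1-H(p)) = -\eps n$. With $L = \lceil H(p)/\eps\rceil + 1$, we have $L\eps n \geq H(p)n + \eps n$, so the exponent is at most $-\eps n + O(\log n)$, which is negative for all sufficiently large $n$. Hence LLL applies and guarantees a code avoiding every $A_{x,S}$, i.e., a $(p,L)$-list-decodable binary code of rate $R = 1-H(p)-\eps$.

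I do not expect any serious obstacle: the proof is essentially a degree-versus-probability bookkeeping exercise. The only subtle point is identifying the right dependency graph, namely that $A_{x,S}$ only interacts with events whose message set $T$ overlaps $S$ (the centers $x,y$ play no role in the dependency since conditioning on codewords is what matters); once this is pinned down, the calculation matches the desired leading constant $H(p)$ precisely because the probability gains a factor of $\mu \approx 2^{-n(1-H(p))}$ per additional codeword while the degree gains only $2^k = 2^{n(1-H(p)-\eps)}$, yielding an $\eps n$ slack per codeword beyond the break-even point $L = H(p)/\eps$.
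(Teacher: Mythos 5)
Your approach is essentially the paper's (Appendix C): same bad events $A_{x,S}$ indexed by a center $x$ and an $(L+1)$-subset $S$ of message indices, same dependency graph (two events are linked iff their index sets intersect), same degree bound $d \approx 2^n (L+1) 2^{kL}$, and the same exponent computation $-n(1-H(p))(L+1) + n + kL = nH(p) - L\eps n$. Two things need repair, one cosmetic and one substantive.

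The cosmetic one: you chose $L = \lceil H(p)/\eps \rceil + 1$, but when $H(p)/\eps$ is not an integer this equals $\lfloor H(p)/\eps\rfloor + 2 > H(p)/\eps + 1$, so you would only be proving $(p, H(p)/\eps + 2)$-list decodability. Your calculation only needs $L > H(p)/\eps$, so take $L = \lfloor H(p)/\eps\rfloor + 1$ as the paper does and you get the stated bound.

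The substantive gap is the \emph{rate}. You draw $2^k$ codewords $c_1,\dots,c_{2^k}$ independently and uniformly, and the LLL tells you there is a positive-probability outcome in which every radius-$pn$ ball contains at most $L$ of the $c_i$. But these $c_i$ can collide, and when they do the \emph{set} $\calC$ has fewer than $2^k$ elements, i.e., rate strictly below $R$. This is not a negligible event: whenever $k > n/2$ (equivalently $H(p) < 1/2 - \eps$), collisions occur with probability $1 - o(1)$ by the birthday bound, while the LLL only guarantees avoidance of bad events with a (doubly-)exponentially small probability. So you cannot conclude that the good outcome simultaneously has $\ge 2^{Rn}$ distinct codewords without an additional argument. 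The paper fixes this by drawing $M = 2^{Rn+1}$ codewords (twice as many) and using a Stirling estimate to show $\Pr[|\calC| \le M/2]$ is strictly smaller than the LLL lower bound $\bigl(1 - \frac{1}{d+1}\bigr)^N \ge e^{-M/(L+1)}$ on the probability of avoiding all bad events; a union-type comparison then yields an outcome that is simultaneously $(p,L)$-list-decodable and has at least $2^{Rn}$ distinct codewords. You would need to add this step (or some equivalent) to make the proof complete.
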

\begin{remark}
  Theorem~\ref{thm:lll} does not contradict Theorem~\ref{thm:lb}, our high probability lower bound, as Theorem~\ref{thm:lll}  gives an existential result, rather than a high probability result.
\end{remark}
\begin{proof}
  Let $M = 2^{Rn+1}$.
  Let $L = \floor{H(p)/\eps} + 1 > \frac{H(p)}{\eps}$.
  Take a random code $\mathcal{C}$ where $c_1,\dots, c_M$ are chosen from $\mathbb{F}_2^n$ independently and uniformly at random.
  For $y\in\mathbb{F}_2^n$ and $I=\{i_1,\dots, i_L\}$, let $A(y, I)$ denote the event that $c_{i_j}\in \mathcal{B}(y,pn)$ for $j=0,\dots,L$.
  There are $2^nM^{L+1}$ such events.
  We have that $\mathcal{C}$ is $(p, L)$-list decodable if and only if none of the events $A(y, I)$ occur.
  
  Define a graph $G$ on the events $A(y, I)$ where $A(y,I), A(z,J)$ are connected by an edge if $I\cap J\neq\emptyset$.
  Note that, as the $c_i$'s are all independent and each $A(y,I)$ depends only on $c_i$ such that $i\in I$, we have $A(y,I)$ is independent of all $A(z,J)$ satisfying $I\cap J=\emptyset$. 
  In other words, $A(y,I)$ is independent from all nonneighbor events in $G$.
  For each $I\subseteq[M]$ with $|I|=L+1$, there are at most $(L+1)\cdot M^L$ choices of $J$ such that $I\cap J\neq \emptyset$.
  Thus, each $A(y,I)$ has degree at most $d \defeq 2^n\cdot (L+1)\cdot M^L$.
  On the other hand, we can compute 
  \begin{align}
    p\ &\defeq \  \Pr[A(y,I)] \ = \   \left( \frac{\Vol(n,pn)}{2^n} \right)^{L+1} \ \le \ 2^{-n(L+1)(1-H(p))}.
  \end{align}
  We thus have
  \begin{align}
    e\cdot p(d+1) 
    \ &\le \   3\cdot 2^{-n(L+1)(1-H(p))}\cdot (L+1)\cdot 2^n\cdot 2^{RnL+L} \nonumber\\
    \ &< \   6L\cdot 2^{-n(L+1)(1-H(p))}\cdot 2^n\cdot 2^{(1-H(p)-\eps)nL+L} \nonumber\\
    \ &= \   6L\cdot 2^{nH(p)-\eps nL+L}  < 1,
  \label{}
  \end{align}
  for our choice of $L$ and sufficiently large $n$.
  Thus, by Lemma~\ref{lem:lll}, we have
  \begin{align}
    \Pr\left[ \bigwedge_{y,I}\overline{A(y,I)} \right] 
    \ &\ge \ \left(1-\frac{1}{d+1} \right)^{2^nM^{L+1}} \nonumber\\
    \ &> \ \exp\left(-\frac{2^nM^{L+1}}{d} \right)    \nonumber\\
    \ &= \ \exp\left(-\frac{M}{L+1} \right) \ > \ 0. 
  \label{}
  \end{align}
  For the second inequality, we used the estimate $1-\frac{1}{d+1}\ge e^{-\frac{1}{d}}$. 
  At this point, we are essentially done, but it is possible in our random choice that, for example, all $c_i$ are the same point.
  We check that this probability is sufficiently small compared to the probability of list decodability.

  There are $2^{nM}$ ways to choose $c_1,\dots, c_M$.
  The probability that $|\mathcal{C}|\le M/2$ is at most,
  \begin{align}
    \Pr\left[ |\mathcal{C}|\le M/2 \right]
    \ &\le \ \frac{\binom{2^n}{M/2}\cdot (M/2)^M}{2^{nM}}
    \ \le \ \frac{\frac{2^{nM/2}}{(M/2)^{M/2}e^{-M/2}}\cdot (M/2)^M}{2^{nM}} \nonumber\\
    \ &= \ \frac{e^{M/2}(M/2)^{M/2}}{2^{nM/2}} 
    \ = \ \left( \sqrt{\frac{eM/2}{2^n}} \right)^M \nonumber\\
    \ &< \ e^{-\frac{M}{L+1}}
    \ < \  \Pr\left[ \bigwedge_{y,I}\overline{A(y,I)} \right] 
  \end{align}
  where the second inequality was bounded by Stirling's approximation.
  
  Thus, with positive probability, we have both (1) none of the events $A(y,I)$ hold, i.e. $\mathcal{C}$ is $(p,L)$-list decodable, and (2) $\mathcal{C}$ has rate at least $R=1-H(p)-\eps$, so in particular there exists some $\mathcal{C}$ with the desired list decodability and rate.
\end{proof}

\end{document}